\pdfoutput=1
\documentclass[11pt,letterpaper]{article}

\newif\ifLMfonts
\LMfontsfalse
\newif\ifAuthorYear
\AuthorYearfalse

\usepackage[T1]{fontenc}
\usepackage[utf8]{inputenc}
\usepackage{amsmath,amssymb,amsthm}

\ifLMfonts
  \usepackage{lmodern}%
  \newcommand{\TitleSize}{\fontsize{22}{26}\selectfont}
  \newcommand{\SubtitleSize}{\fontsize{13}{16.5}\selectfont}
\else
  \usepackage{newpxtext}%
  \usepackage{newpxmath}%
  \usepackage[scaled=0.95,lining]{FiraSans}%
  
  \newcommand{\TitleSize}{\fontsize{22}{26}\selectfont}
  \newcommand{\SubtitleSize}{\fontsize{13}{16.5}\selectfont}
\fi
\usepackage[letterspace=80]{microtype}

\usepackage[letterpaper,left=0.98in,right=0.98in,top=0.85in,bottom=0.95in,headheight=22pt,headsep=0.28in,footskip=0.42in]{geometry}

\usepackage{xcolor}
\definecolor{ink}{HTML}{1A1F2B}%
\definecolor{navy}{HTML}{1F3A63}%
\definecolor{refcool}{HTML}{2A62B0}%
\definecolor{citewarm}{HTML}{96515B}%
\definecolor{muted}{HTML}{5A6578}%
\definecolor{urlslate}{HTML}{3D4F6B}%
\definecolor{rulegray}{HTML}{C9D3E0}%
\definecolor{qrule}{HTML}{86A2C9}%
\definecolor{webred}{HTML}{B3262E}%
\definecolor{thmback}{HTML}{EEF4FB}%
\definecolor{lemback}{HTML}{F3F5F8}
\definecolor{lembar}{HTML}{6B7A90}
\definecolor{absback}{HTML}{F5F8FC}%
\definecolor{TableHead}{HTML}{E7ECF2}%
\definecolor{TableBand}{HTML}{F6F8FA}%
\colorlet{PaperInk}{ink}
\colorlet{PaperAccent}{refcool}
\colorlet{PaperCite}{citewarm}
\colorlet{PaperRule}{rulegray}
\colorlet{Ink}{ink}
\colorlet{Teal}{refcool}
\colorlet{Ochre}{qrule}

\usepackage{booktabs,array,tabularx,colortbl}
\newcolumntype{Y}{>{\raggedright\arraybackslash}X}
\usepackage{enumitem}
\setlist{nosep}
\setlist[description]{font={\sffamily\bfseries\color{navy}},itemsep=3pt}
\usepackage{graphicx,tikz}
\usetikzlibrary{arrows.meta,positioning}
\usepackage{titlesec,fancyhdr,needspace}
\usepackage[skins,breakable]{tcolorbox}
\usepackage[font=small,labelfont={sf,bf,color=navy},labelsep=period,skip=6pt,singlelinecheck=false]{caption}
\ifAuthorYear
  \usepackage[authoryear,round]{natbib}%
  \renewcommand{\cite}{\citep}%
\else
  \usepackage[numbers,square,sort&compress]{natbib}
\fi
\newcommand{\conceptasset}[2]{\IfFileExists{assets/#1}{\includegraphics[width=\linewidth]{assets/#1}}{\IfFileExists{paper/assets/#1}{\includegraphics[width=\linewidth]{paper/assets/#1}}{#2}}}
\newcommand{\conceptcaption}[3]{\IfFileExists{assets/#1}{\caption{#2}}{\IfFileExists{paper/assets/#1}{\caption{#2}}{\caption{#3}}}}

\titleformat{\section}{\sffamily\Large\bfseries\color{navy}}{\textcolor{refcool}{\thesection}}{0.7em}{}
\titlespacing*{\section}{0pt}{3.0ex plus 1ex minus .2ex}{1.4ex plus .2ex}
\titleformat{\subsection}{\sffamily\large\bfseries\color{navy}}{\textcolor{refcool}{\thesubsection}}{0.6em}{}
\titlespacing*{\subsection}{0pt}{2.3ex plus .8ex minus .2ex}{1.0ex plus .2ex}
\titleformat{\subsubsection}{\sffamily\normalsize\bfseries\color{navy}}{\textcolor{refcool}{\thesubsubsection}}{0.5em}{}
\titlespacing*{\subsubsection}{0pt}{1.8ex plus .6ex minus .2ex}{0.8ex plus .2ex}
\titleformat{\paragraph}[runin]{\sffamily\bfseries\color{navy}}{}{0pt}{}
\titlespacing*{\paragraph}{0pt}{1.2ex plus .4ex minus .2ex}{0.55em}

\renewcommand{\headrulewidth}{0.5pt}
\renewcommand{\headrule}{\hbox to\headwidth{\color{rulegray}\leaders\hrule height \headrulewidth\hfill}}
\fancypagestyle{firstpage}{\fancyhf{}\renewcommand{\headrulewidth}{0pt}\renewcommand{\headrule}{}}%

\makeatletter
\renewcommand{\@makefnmark}{\hbox{\@textsuperscript{\normalfont\color{refcool}\@thefnmark}}}
\renewcommand{\footnoterule}{\kern-3pt{\color{rulegray}\hrule width 0.28\columnwidth height 0.5pt}\kern2.6pt}
\makeatother

\tcbset{barbox/.style={enhanced,frame hidden,sharp corners,boxrule=0pt,breakable=false}}
\newcommand{\PaperKeywords}{\mbox{visual provenance}; \mbox{watermarking}; \mbox{passive detection}; \mbox{message recovery}; \mbox{authenticated provenance}; \mbox{code-based rendering}; \mbox{direct visual generation}}
\renewenvironment{abstract}{%
  \begin{tcolorbox}[barbox,colback=absback,borderline west={3pt}{0pt}{refcool},left=11pt,right=11pt,top=6pt,bottom=6pt,before skip=7pt,after skip=6pt]
  {\sffamily\bfseries\footnotesize\color{navy}\lsstyle ABSTRACT}\par\vspace{2pt}\small}{%
  \par\vspace{3pt}{\sffamily\footnotesize\color{muted}\textbf{\lsstyle KEYWORDS}\quad\PaperKeywords}\par\vspace{2pt}{\sffamily\footnotesize\hypersetup{urlcolor=webred}\color{webred}\textbf{\lsstyle VIDEO}\quad\url{https://youtu.be/14SMl0d_e48}\hspace{1.4em}\textbf{\lsstyle WEB}\quad\url{https://zhenggao-30.github.io/Rethinking-Visual-Provenance/}}\end{tcolorbox}}
\newtcolorbox{tldrbox}{barbox,colback=thmback,borderline west={3pt}{0pt}{navy},left=11pt,right=11pt,top=5pt,bottom=5pt,before skip=6pt,after skip=7pt,fontupper=\small}
\newtcolorbox{takeawaybox}{enhanced,breakable,frame hidden,sharp corners,boxrule=0pt,colback=thmback,borderline west={2.6pt}{0pt}{refcool},left=10pt,right=9pt,top=5pt,bottom=5pt,before skip=9pt,after skip=9pt,fontupper=\small}

\newtheoremstyle{agendaquestion}{0pt}{0pt}{\normalfont}{}{\sffamily\bfseries\color{navy}}{.}{0.55em}
  {\thmname{#1}\thmnumber{ #2}\thmnote{\ \textmd{(#3)}}}
\theoremstyle{agendaquestion}
\newtheorem{question}{Question}
\newtheoremstyle{boxed}{0pt}{0pt}{\itshape}{}{\sffamily\bfseries\color{navy}}{.}{0.5em}
  {\thmname{#1}\thmnumber{ #2}\thmnote{\ \textmd{(#3)}}}
\newtheoremstyle{boxedrm}{0pt}{0pt}{\normalfont}{}{\sffamily\bfseries\color{navy}}{.}{0.5em}
  {\thmname{#1}\thmnumber{ #2}\thmnote{\ \textmd{(#3)}}}
\theoremstyle{boxed}
\newtheorem{theorem}{Theorem}[section]
\newtheorem{proposition}[theorem]{Proposition}
\newtheorem{lemma}[theorem]{Lemma}
\newtheorem{corollary}[theorem]{Corollary}
\theoremstyle{boxedrm}
\newtheorem{definition}[theorem]{Definition}
\newtheorem{example}[theorem]{Example}
\newtheorem{remark}[theorem]{Remark}
\tcbset{thmbox/.style={enhanced,breakable,frame hidden,sharp corners,boxrule=0pt,
  left=8pt,right=8pt,top=5pt,bottom=5pt,before skip=9pt,after skip=9pt}}
\tcbset{coolbox/.style={thmbox,colback=thmback,borderline west={2.4pt}{0pt}{refcool}}}
\tcbset{slatebox/.style={thmbox,colback=lemback,borderline west={2.4pt}{0pt}{lembar}}}
\tcbset{warmbox/.style={thmbox,colback=white,borderline west={2.4pt}{0pt}{citewarm}}}
\tcbset{linebox/.style={blanker,breakable,left=9pt,before skip=7pt,after skip=7pt,borderline west={1.2pt}{0pt}{rulegray}}}
\tcbset{qbox/.style={blanker,breakable,left=9pt,before skip=7pt,after skip=7pt,borderline west={1.4pt}{0pt}{qrule}}}
\tcolorboxenvironment{theorem}{coolbox}
\tcolorboxenvironment{proposition}{coolbox}
\tcolorboxenvironment{corollary}{coolbox}
\tcolorboxenvironment{lemma}{slatebox}
\tcolorboxenvironment{definition}{warmbox}
\tcolorboxenvironment{remark}{linebox}
\tcolorboxenvironment{example}{linebox}
\tcolorboxenvironment{question}{qbox}
\theoremstyle{plain}

\makeatletter
\renewenvironment{proof}[1][\proofname]{\par\pushQED{\qed}\normalfont\topsep6\p@\@plus6\p@\relax
  \trivlist\item[\hskip\labelsep{\sffamily\bfseries\color{navy}#1.}]\ignorespaces}{\popQED\endtrivlist\@endpefalse}
\makeatother

\newcommand{\pflab}[2]{{\color{refcool}\ensuremath{\langle#1\rangle}#2}}
\newenvironment{steps}{\begin{list}{}{%
    \setlength{\leftmargin}{2.2em}\setlength{\labelwidth}{2.0em}\setlength{\labelsep}{0.2em}%
    \setlength{\itemsep}{2pt}\setlength{\parsep}{1pt}\setlength{\topsep}{2pt}\setlength{\partopsep}{0pt}%
    }}{\end{list}}
\newcommand{\step}[2]{\item[\pflab{#1}{#2}.]}
\newcommand{\qedstep}[2]{\item[\pflab{#1}{#2}.]{\sffamily\small\color{navy}Q.E.D.}}
\newcommand{\pf}{\par{\sffamily\small\color{muted}Proof:}\ }
\newcommand{\sref}[2]{\pflab{#1}{#2}}

\providecommand{\Let}{{\sffamily\small\color{navy}Let:}\ }

\newcommand{\qlabrun}[1]{{\sffamily\bfseries\color{navy}#1.}\enspace}
\newcommand{\qlab}[1]{\par\noindent{\sffamily\bfseries\color{navy}#1.}\enspace}

\makeatletter
\newcommand{\PaperTocSep}[1]{\addvspace{7pt}{\noindent\sffamily\bfseries\footnotesize\color{ink}\lsstyle\MakeUppercase{#1}}\par\nobreak\vskip2pt{\color{rulegray}\hrule height 0.5pt}\vskip1pt}
\newcommand{\PaperTocLine}[5]{%
  \ifnum #1>\c@tocdepth \else
    \vskip \z@ \@plus.2\p@
    {\leftskip #2\relax \rightskip \@tocrmarg \parfillskip -\rightskip
     \parindent #2\relax\@afterindenttrue \interlinepenalty\@M \leavevmode
     \@tempdima #3\relax \advance\leftskip \@tempdima \null\nobreak\hskip -\leftskip
     {#4}\nobreak
     {\color{rulegray}\leaders\hbox{$\m@th\mkern\@dotsep mu\hbox{.}\mkern\@dotsep mu$}\hfill}\nobreak
     \hb@xt@\@pnumwidth{\hfil #5}\par}%
  \fi}
\renewcommand{\tableofcontents}{%
  \begingroup
  \setlength{\parskip}{0pt}\color{ink}\sffamily\fontsize{10.2}{12.4}\selectfont
  \hypersetup{linkcolor=ink}%
  \renewcommand{\@dotsep}{4.2}%
  \def\@pnumwidth{1.6em}\def\@tocrmarg{2.6em}%
  \def\l@section##1##2{\addvspace{3.6pt plus 1pt}{\bfseries\PaperTocLine{1}{0em}{1.9em}{##1}{##2}}}%
  \def\l@subsection{\PaperTocLine{2}{1.9em}{2.6em}}%
  {\noindent\color{ink}\sffamily\bfseries\Large Contents\par}\vspace{1pt}%
  {\color{refcool}\hrule height 1.2pt}\vspace{3pt}%
  \@starttoc{toc}%
  \endgroup}
\makeatother

\let\PaperOrigAppendix\appendix
\renewcommand{\appendix}{\PaperOrigAppendix
  \addtocontents{toc}{\protect\PaperTocSep{Appendices}}%
  \noindent{\color{refcool}\rule{\linewidth}{2pt}}\par\vspace{5pt}
  \noindent{\sffamily\bfseries\Large\color{navy}Appendices}\hfill{\sffamily\footnotesize\color{muted}\lsstyle RETHINKING VISUAL PROVENANCE}\par\vspace{3pt}
  {\color{rulegray}\hrule height 0.5pt}\par\vspace{2pt}}



\newcommand{\Prb}{\mathbb{P}}

\newcommand{\TV}{\operatorname{TV}}

\title{Rethinking Visual Provenance:\\Detection and Watermarking Across\\Direct Visual Generation and LLM-Driven Code Rendering}
\author{}
\date{4 October 2026}
\usepackage[colorlinks=true,linkcolor=refcool,citecolor=citewarm,urlcolor=urlslate,
  bookmarksnumbered=true,pdfdisplaydoctitle=true,pdflang={en},
  pdftitle={Rethinking Visual Provenance: Detection and Watermarking Across Direct Visual Generation and LLM-Driven Code Rendering},
  pdfauthor={Zheng Gao, Xiaoyu Li, Zhicheng Bao, Jiaojiao Jiang},pdfsubject={Research agenda},
  pdfkeywords={visual provenance; watermarking; passive detection; message recovery; authenticated provenance; code-based rendering; direct visual generation}]{hyperref}
\newcommand{\conceptassetw}[3]{\IfFileExists{assets/#2}{\includegraphics[width=#1\linewidth]{assets/#2}}{\IfFileExists{paper/assets/#2}{\includegraphics[width=#1\linewidth]{paper/assets/#2}}{#3}}}
\newcommand{\PaperAuthors}{Zheng Gao\quad Xiaoyu Li\quad Zhicheng Bao\quad Jiaojiao Jiang}
\newcommand{\PaperAffiliations}{UNSW Sydney\quad\textbullet\quad\href{mailto:zheng.gao1@unsw.edu.au}{zheng.gao1@unsw.edu.au}}

\makeatletter
\renewcommand{\maketitle}{%
  \thispagestyle{firstpage}\enlargethispage{0.6in}%
  \noindent{\color{refcool}\rule{\linewidth}{2.2pt}}\par\vspace{6pt}
  \noindent{\sffamily\bfseries\TitleSize\color{navy}Rethinking Visual Provenance:\par}\vspace{3pt}
  \noindent{\sffamily\SubtitleSize\color{citewarm}Detection and Watermarking Across\\Direct Visual Generation and LLM-Driven Code Rendering\par}\vspace{9pt}
  \noindent{\sffamily\fontsize{11.5}{15}\selectfont\color{ink}\PaperAuthors\par}\vspace{3pt}
  \noindent{\sffamily\footnotesize\color{muted}\PaperAffiliations\par}\vspace{5pt}
  {\color{rulegray}\hrule height 0.5pt}\vspace{1pt}}
\makeatother

\begin{document}
\color{ink}
\maketitle
\begin{abstract}
AI systems create images and videos with image/video generation models or by writing code and graphics descriptions that are then rendered. These routes can produce similar visible artifacts but expose different representations, intervention points, and provenance evidence. We develop a production-centered framework that compares detection and watermarking across both routes. An explicit verification specification distinguishes passive inference, message recovery, and authenticated provenance. We organize image, video, source-code, and rendering-aware watermarks by production stage. We examine the different requirements of generated images and video, plots and SVG, programmable video, and agent-composed workflows. Documented Claude, OpenAI, and rendering-tool interfaces connect the framework to concrete systems. We pose ten scoped research questions on identifiability, observability, fair comparison across stages, recoverable payload, reconstruction, synchronization, composition, hybrid local contribution, and private production-event authentication. The result is a conceptual research agenda grounded in published methods, inspected interfaces, and elementary boundary examples. It reports no experiments and claims no new theorems; its appendix results are elementary calculations, and documentation and source inspection establish interfaces, not empirical robustness.
\end{abstract}
{\centering\refstepcounter{figure}\label{fig:routes}%
\conceptassetw{0.88}{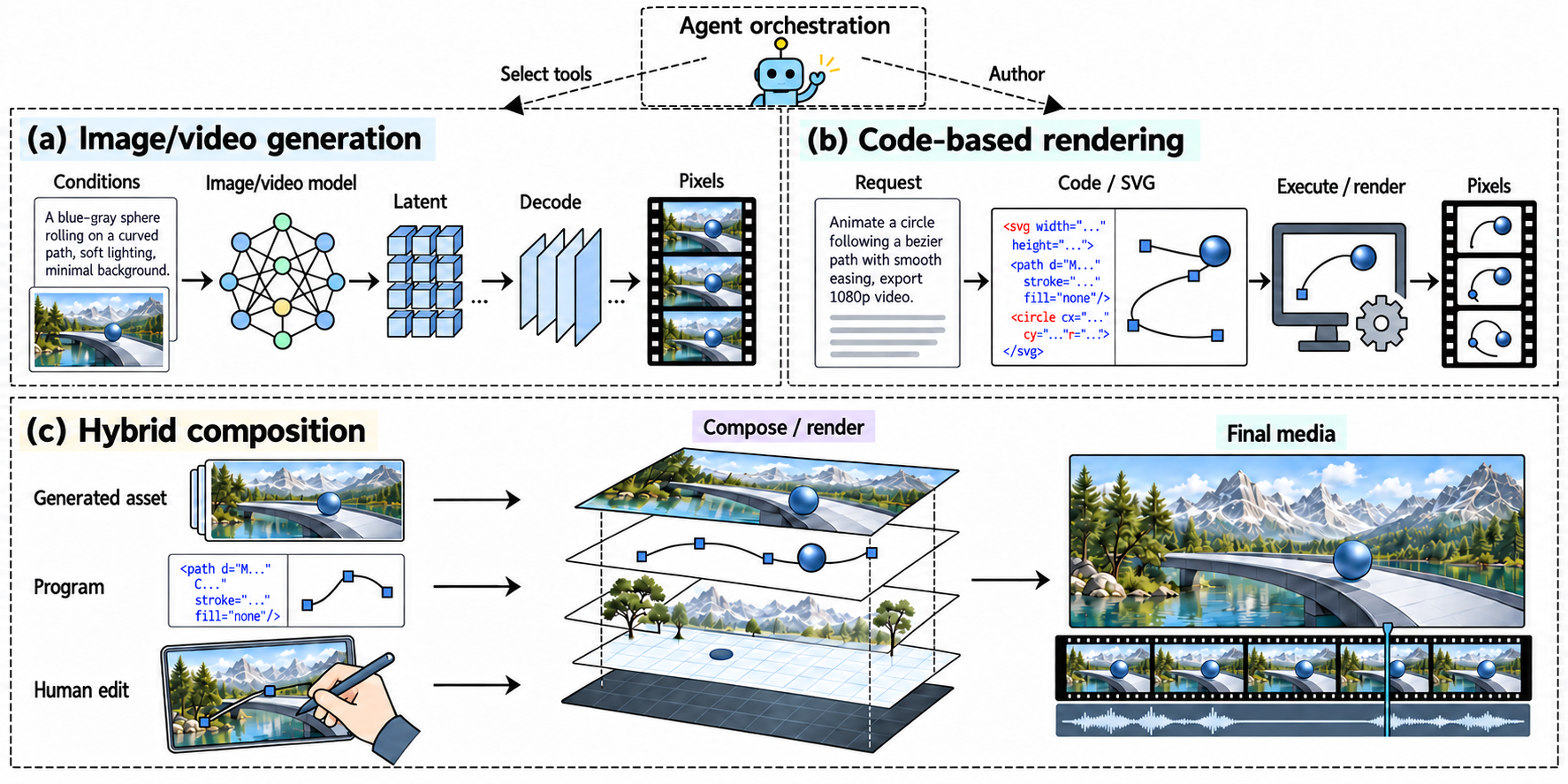}{%
\resizebox{\linewidth}{!}{\begin{tikzpicture}[>=Stealth,font=\small,
box/.style={draw=Ink!45,rounded corners=2pt,minimum height=9mm,text width=29mm,align=center,inner sep=4pt}]
\node[box,fill=Ink!5] (a) at (0,0) {Human or agent\\controller};
\node[box,fill=Teal!8] (n) at (4,1) {Learned\\generation};
\node[box,fill=Teal!8] (l) at (8,1) {Latent / image\\decoding};
\node[box] (v) at (12,1) {Raster image\\or video};
\node[box,fill=Ochre!10] (p) at (4,-1) {Code / vector\\description};
\node[box,fill=Ochre!10] (r) at (8,-1) {Interpretation\\and rendering};
\node[box] (w) at (12,-1) {Raster image\\or video};
\draw[->,draw=Ink] (a.east) -- (n.west);
\draw[->,draw=Ink] (a.east) -- (p.west);
\draw[->,draw=Ink] (n) -- (l);\draw[->,draw=Ink] (l) -- (v);
\draw[->,draw=Ink] (p) -- (r);\draw[->,draw=Ink] (r) -- (w);
\end{tikzpicture}}}\par}
\vspace{2pt}
\begin{tldrbox}
\textbf{\sffamily\color{navy}Figure~\thefigure.}\enspace\textbf{\sffamily TL;DR}\quad Image/video generation and code-based rendering can each produce AI visual media; an agent can orchestrate either, and hybrid outputs combine generated assets, programs, and human edits (schematic illustration, not a measurement or an assertion of pixel equality). Compare their watermarks through what can be changed, what reaches the verifier, and which provenance assertion the evidence supports.
\end{tldrbox}

\clearpage
\begingroup
\setlength{\parskip}{0pt}
\sffamily\fontsize{10}{12}\selectfont
\hypersetup{linkcolor=black}
\makeatletter
\renewcommand{\@dotsep}{5.2}
\renewcommand{\l@section}[2]{\vskip 4pt\@dottedtocline{1}{0em}{1.8em}{\bfseries #1}{\bfseries #2}}
\renewcommand{\l@subsection}{\@dottedtocline{2}{1.8em}{2.6em}}
\makeatother
\setcounter{tocdepth}{2}
\renewcommand{\contentsname}{\textcolor{black}{Contents}}
\tableofcontents
\endgroup
\clearpage
\section{Introduction}
\label{sec:introduction}

A visual explanation can come from image/video models (\emph{image/video generation}) or from rendered code and graphics descriptions (\emph{code-based rendering}).\footnote{These are working labels for composable operations, not disjoint model families. The title's ``direct visual generation'' means image/video generation; ``direct'' includes latent decoding and does not exclude code implementation or agent orchestration. Code-based rendering interprets explicit programs or declarative graphics such as SVG.} Diffusion models provide image and video examples, and the documented Claude Code--Remotion workflow provides a code-based example \cite{ho2022,rombach2022ldm,claudeCodeOverview,remotionCodingAgents}.

Viewers may receive neither source nor intermediate state nor execution records, so the visible artifact reveals only part of its production.

Consider an animation of a ball that follows a curve and carries a speed label. It could be synthesized, calculated and rendered, or composed from a generated background, programmed motion, and recorded narration. These implementations serve the same communication task but expose different intervention and observation points (Figure~\ref{fig:routes}), potentially coordinated by a human or agent.

\subsection{From production choices to observable evidence}

\textbf{The output format does not determine the provenance question.} An MP4 file may contain synthesized footage, rendered graphics, recorded material, or a composition. A static image can arise from a learned image tool or executed plotting code, two distinct operations in OpenAI's documentation \cite{openaiCodeInterpreter,openaiImageGeneration}. Claude's documented custom visuals yield an HTML or SVG description that can be downloaded and inspected \cite{claudeCustomVisuals}. Receiving that description and receiving a raster snapshot provide different observations.

The distinction concerns operations rather than model families: generators can accept language, assistants can author code, and programs can include generated assets. An LLM agent can control several operations and does not define an exclusive origin class. Remotion's Prompt to Video template combines generated script, images, and voiceover \cite{remotionPromptToVideo}, which motivates a workflow of connected synthesis, rendering, and composition stages.
\subsection{Three challenges for detection and watermarking}

\textbf{Evidence must reach the intended verifier.} An interpreter may ignore a source mark, and a screenshot may omit file credentials. Changes to emitted colors can reach pixels. Video Seal marks video, SrcMarker targets program representations, and CopyRNeRF connects a marked representation to rendered views \cite{videoseal2024,yang2024srcmarker,luo2023copyrnerf}. Comparing them requires locating both intervention and observation, not assuming interchangeable carriers.

\textbf{A signal needs a precise interpretation.} Passive inference estimates a production label without a mark. Watermark verification tests for a mark (detection) or estimates its embedded message (message recovery). Authentication checks an issuer-bound assertion about content or about a production event. A production event is one invocation of a production step, such as one generation call, one rendering run, or one export. It is not the event that the video depicts.

In this article, detection covers passive inference and watermark detection; message recovery and authentication are the other verification tasks (Section~\ref{sec:recovery}). A mark applied by a rendering service can concern processing of human-authored code, and a texture mark concerns an asset, but neither identifies every contributor. The Coalition for Content Provenance and Authenticity (C2PA) specifies Content Credentials \cite{c2pa}. Their assertions, signed claims, and trust relationships make the scope of the authenticated assertion explicit \cite{c2pa}.

\textbf{Robustness depends on access and permissions.} Recompression, source editing, rerendering, and clean-asset replacement act on different objects (Section~\ref{sec:stages}). Pixel-only and source-visible attackers therefore have different capabilities, and deployments place code, logs, and keys under different control. Remotion's server, browser, and Lambda backends provide concrete examples \cite{remotionSSR,remotionClientRender,remotionLambda}. A recovery claim must declare the accessible state and the allowed transformations.

\subsection{A production-centered framework}\label{sec:spec-list}

We state the following seven items together as one verification specification (Section~\ref{sec:formal-contract}). It declares:
\begin{enumerate}[label=(\arabic*)]
\item the production specification;
\item the encoder's allowed observations and interventions;
\item the verifier's observation;
\item the attacker's view and allowed policies;
\item the task-preservation predicate;
\item the assertion under test, together with the verification task;
\item the null family of unmarked observation laws.
\end{enumerate}
Key generation, issuance queries, and verifier permissions complete it. The specification distinguishes a pixel-only authorship test, a source-visible message-recovery decoder, and a trusted export record by the assumptions each needs.

Evidence can remain partial. An authenticated asset may have undocumented later composition, and a negative detector result may leave several histories possible. Preserving uncertainty and allowing abstention prevents origin evidence from being overinterpreted as complete authorship, authorization, or truth of the depicted event.

\subsection{Scope and organization}\label{sec:scope}

We connect image/video watermarking, source-code and rendering-aware precedents, documented workflows, and ten scoped research questions. Elementary boundary examples show when information is lost or transmitted. The questions set restricted targets for definitions, constructions, attacks, or conditional limits.

\textbf{Limits of the article.} The article is a conceptual research agenda. We report no experiments, no new theorems, no empirical benchmark, and no new construction claimed to meet the ten questions. Documentation and source inspection establish interfaces, not empirical robustness or measured watermark performance. The numbered appendix results are standard facts or direct calculations that make the boundary examples checkable. The image versions of Figures~\ref{fig:routes}, \ref{fig:bottleneck}, \ref{fig:image-stages}, \ref{fig:rerendering}, and~\ref{fig:hybrid-evidence} are schematic illustrations made with an image-generation tool and carry no empirical content.

\textbf{Organization.}
\begin{description}[style=multiline,leftmargin=6.6em]
\item[Section~\ref{sec:background}] fixes the operations, observation interfaces, verification specification, and tasks.
\item[Section~\ref{sec:stages}] organizes watermarks and related tasks by production stage.
\item[Section~\ref{sec:applications}] examines visual applications.
\item[Section~\ref{sec:systems}] studies documented deployments.
\item[Sections~\ref{sec:agenda}, \ref{sec:conclusion}] pose the ten research questions and conclude.
\item[Appendices] \ref{app:observation}--\ref{app:temporal} give elementary results and calculations; Table~\ref{tab:roadmap} is their roadmap. Appendix~\ref{app:glossary} gives the candidate locations (Table~\ref{tab:locations}), a glossary, and the notation (Table~\ref{tab:notation}). Appendix~\ref{app:mechanism} gives the mechanism map (Table~\ref{tab:mechanism-map}) with notes on adjacent work.
\end{description}

\section{Background and Conceptual Foundations}
\label{sec:background}
\label{sec:information}

\begin{takeawaybox}
\textbf{Takeaway: The production process and the received artifact are different objects.} An image or video records visible content; a production history records operations, inputs, and dependencies. A verifier normally receives only part of that history. This section fixes the objects needed to compare both production routes before surveying intervention mechanisms.
\end{takeawaybox}

\subsection{Production operations and observation interfaces}

\textbf{Image/video generation and code-based rendering are operations that a workflow can combine.} Image/video generation uses an image/video model to produce pixels or a representation that is then decoded into pixels. Code-based rendering executes or interprets an explicit drawing program, animation, or graphics description under a specified environment. The two operations can overlap: neural rendering combines learned components with graphics operations \cite{tewari2020neuralrendering}. A program can call an image/video model or reuse generated footage, and an LLM can write SVG or plotting code.

The comparison follows the representations and operations that a workflow exposes, not an exhaustive partition by model architecture. Static versus temporal output is a separate axis. We use four words for production:
\begin{description}[leftmargin=6.6em,labelwidth=5.6em,labelsep=0.9em,align=left,topsep=3pt,partopsep=0pt]
\item[Route] the informal contrast between image/video generation and code-based rendering;
\item[Operation] a node of the production model in Section~\ref{sec:formal-contract};
\item[Workflow] a named real system or a described composition of operations;
\item[Pipeline] a specific implemented chain.
\end{description}

\textbf{A claim about a verifier of one representation cannot silently transfer to a verifier of another.} The same production can expose several artifacts: source code, a scene or asset representation, a browser preview, decoded pixels, an encoded file, and accompanying credentials. Fonts, resources, viewport, frame rate, and runtime can affect the map between them. Figure~\ref{fig:routes} summarizes the operations, and Figure~\ref{fig:bottleneck} isolates the observation boundary.

\subsection{A production model and verification specification}
\label{sec:formal-contract}

\textbf{Stating the assumptions of a verification specification prevents a change of observation, attacker, or assertion from being mistaken for a comparison of watermark algorithms.} Following W3C PROV, provenance concerns the entities, activities, and agents involved in producing an artifact \cite{w3cPROV}. For the bounded comparisons, we model production by a finite directed acyclic graph $\mathcal G=(\mathcal N,\mathcal E)$.\footnote{This functional production graph and the verification specification below are modeling choices for this article. They organize operations, observations, permissions, and assertions; they do not replace a provenance interchange standard or authenticate a declared history.} Each node $v$ emits an artifact $Y_v$ in a declared space and applies a map
\[
 Y_v=F_v\bigl((Y_u)_{u\in\operatorname{pa}(v)},Z_v\bigr).
\]
Here $\operatorname{pa}(v)$ is the set of parents of $v$. We fix the external input nodes to a task instance $x$. The additional inputs and random variables $Z=(Z_v)_v$ have a specified joint distribution $\mu_x$, and they need not be independent.

We can include the randomness of a generator, runtime, or renderer in $Z$. Each node map is defined on allowed inputs; allowed timeouts or failed invocations produce declared failure artifacts. A topological execution defines the production state $S$, which contains the declared graph, inputs, environments, realized values of $Z$, and emitted artifacts. A claimed record of $S$ is not automatically an authenticated record of execution.

A generation node uses a specified image/video model, including any separate decoder it requires. A rendering node interprets an executable or declarative description under an environment. A composite implementation can use several nodes, or one node with stated component operations. A rendering description can reference model-generated assets. The labels alone do not determine the available state or watermark controls.

An LLM agent is a controller that may select or supply inputs to either node type. Its identity is separate from the operation producing a particular visual contribution. The first mathematical problems fix $\mathcal G$. An adaptive agent instead requires a bounded policy and a distribution over its resulting execution traces, not an assumed fixed history.

Let $O(S)$ specify the verifier's observation. Examples are a raster image, quantized frames, an SVG source file, or an exported file with credentials. Include public design parameters, known models, and side information explicitly. An SVG-to-PNG export changes $O$; it is not simply a change of filename. For finite questions, bound file length, precision, raster dimensions, and clip length. Continuous representations require their own measurable-space and precision assumptions.

We collect the assumptions in the following verification specification
\[
 \mathfrak C=(\mathfrak P,\mathcal I,O,\mathcal A,U,\chi,\mathcal Q_0).
\]
Its seven symbols name the items of the list in Section~\ref{sec:spec-list}, in the same order. The production specification $\mathfrak P=(\mathcal G,(F_v)_{v\in\mathcal N},(\mu_x)_x)$ consists of the graph, the node maps, and the input laws $\mu_x$ of the task instances $x$. The symbol $\mathcal I$ denotes the encoder's allowed observations and interventions.

Key generation, issuance queries, and verifier permissions complete this specification. Intervention stages alone do not specify their access. A source-editing attacker and a received-image editor belong to different $\mathcal A$ even when both return pixels.

An encoder or attacker may use only its declared view. An upstream intervention records allowed changes to inputs or node maps, then executes the affected downstream nodes under a declared rerun-randomness law. It is not permission to replace every artifact independently.

Write task preservation as $U_x(s_0,s')\in\{0,1\}$ for a declared baseline state $s_0$ and a marked or attacked state $s'$.\footnote{Task preservation means satisfying the stated application requirements, for example keeping chart values and labels correct. It need not preserve every pixel or every behavior of the original program.} Absolute task specifications may omit $s_0$. Specify any coupling of randomized baseline and marked executions.

In the first bounded models, require task preservation almost surely for encoding and allowed attack policies. A model that allows task-preservation failures must instead state their probability and treatment. Section~\ref{sec:recovery} states recovery and false-alarm requirements.

\subsection{Passive inference, message recovery, and authentication}
\label{sec:recovery}

\textbf{Passive inference, watermark verification, and authentication support different assertions.} \emph{Passive inference} estimates a production label without arranging to mark the output. \emph{Watermark verification} includes \emph{watermark detection}, which tests for a mark, and \emph{message recovery} (payload extraction), which estimates its embedded message.

\emph{Authentication} checks an issuer-bound assertion under declared trust assumptions. Our questions about production events additionally bind an artifact or allowed derivative to an authorized production event. Signed assertions and content binding follow C2PA terminology (Appendix~\ref{app:glossary}; \cite{c2pa}). State the guarantees of the three tasks and the assertions they support separately.

Watermark Forensics separates detection, attribution, payload extraction, and localization \cite{li2026watermarkforensics}. These statistical objectives help specify the visual verifier's task, but the source assumptions and rate results of that work need a separate model before they transfer to rendering or video editing.

For example, a trusted rendering service can mark a video made from human-authored code. The mark could support an assertion about service participation, but it does not identify the code's author.

An animation can retain a mark from a generated texture while separate authoring produces most of the composition, and a provider might sign a file it converts without being the original source of the ideas it represents. None of these situations invalidates the narrower evidence, but its interpretation must travel with the detection protocol.

Absence of a detected mark does not certify that no AI contributed. An unknown generator, changed carrier, or unavailable evidence may leave the verifier unable to decide. Conversely, recovering a mark does not establish the truth of the depicted event or the completeness of the reported history. The verifier should be able to reject or abstain when the evidence does not support the assertion under test.

Take $\varepsilon,\alpha\in[0,1]$. For message recovery, let key generation yield issuance and verification material $(K_e,K_v)$. An allowed intervention with message $m$ and an allowed attack policy $a$ produces state $S_a^{K_e,m}$. A decoder $D$ returns a message or abstention $\bot$. One possible correct-recovery requirement averages over key generation and is uniform over fixed tasks, messages, and policies:
\[
 \Prb\{D(O(S_a^{K_e,m}),K_v)=m\}\geq1-\varepsilon
 \quad\text{for every allowed }x,m,a.
\]

Recovery is unconditional on the task-preservation check. The probability includes the specified key generation, production, encoder, attacker, and decoder randomness. Four points limit the requirement:
\begin{enumerate}[label=(\alph*),topsep=3pt,partopsep=0pt]
\item An adaptive policy is fixed before those draws, although its actions can depend on its declared view.
\item Average-case completeness instead needs a distribution over task instances, messages, and policies (Remark~\ref{rem:worst-average}).
\item Fixed-key reliability (Remark~\ref{rem:key-average}), and message or task selection after key-dependent observations, require separate quantifiers or a selection game.
\item We declare the joint law of keys and production variables $Z$ and do not infer independence from their separate notation.
\end{enumerate}

A false alarm occurs when the decoder returns a message for an unmarked observation. Let each $Q\in\mathcal Q_0$ be a specified joint law of unmarked observations and verifier key material. A possible requirement is
\[
 \sup_{Q\in\mathcal Q_0}\Prb_{(o,k_v)\sim Q}
 \{D(o,k_v)\ne\bot\}\leq\alpha.
\]
This is a proposed specification, not a guarantee of any cited system. Message recovery does not establish $\chi$ automatically: an authenticated production-event assertion additionally needs an authorized relation between production events and content, and an issuance/forgery game. Section~\ref{sec:security-agenda} develops those questions.

\subsection{Information can disappear without every watermark disappearing}\label{sec:identical-observation}

\textbf{Rendering can collapse some distinctions while transmitting others.}
Two different descriptions may render to exactly the same pixels in a fixed environment (Definition~\ref{def:render-fiber}). Suppose a uniformly selected source message leaves the full observation unchanged and has no message-correlated side information. Then no decoder that sees only that observation, with randomness independent of the message, can beat the best constant guess (Proposition~\ref{prop:identical} and Corollary~\ref{cor:exact-equality}). That guess succeeds with probability $1/N$ among $N$ equally likely messages. This is a conditional boundary, not a claim that one can erase watermarks.

Source inspection may still distinguish the descriptions (Remark~\ref{rem:invisible}). An intervention that affects visible geometry, texture, or decoded content is a different case. Its information can reach the output if the map and task constraints allow it.

Conditional equality is also different from equality of population distributions. Human and model authors can choose programs with different probabilities even when the same procedure renders each program. Example~\ref{ex:mixture} is an artificial case: a human class selects an all-black program with probability three quarters, an LLM class with probability one quarter, and each otherwise selects an all-white program, so the clip laws differ.

Appendix~\ref{app:observation} gives the finite Bayes calculation (Proposition~\ref{prop:bayes}) and the common-channel contraction argument (Proposition~\ref{prop:contraction}). These elementary boundaries clarify the agenda without asserting universal watermark erasure or universal route identifiability.

\begin{figure}[ht]
\centering
\conceptasset{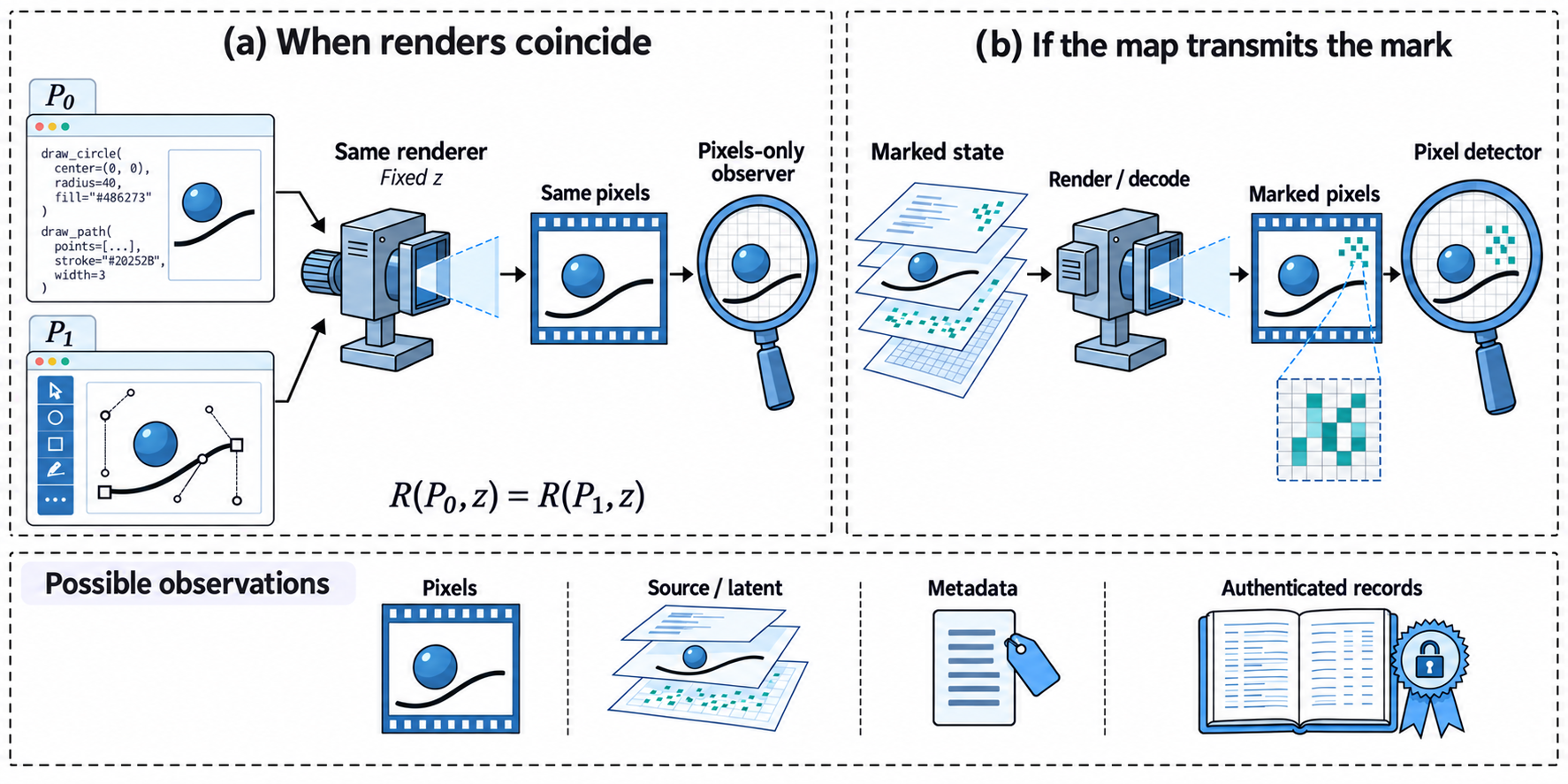}{%
\resizebox{0.95\linewidth}{!}{\begin{tikzpicture}[>=Stealth,font=\small,
box/.style={draw=Ink!45,rounded corners=2pt,minimum height=9mm,text width=25mm,align=center,inner sep=4pt}]
\node[box,fill=Ochre!10] (p0) at (0,1) {Description $P_0$};
\node[box,fill=Ochre!10] (p1) at (0,-1) {Description $P_1$};
\node[box,fill=Ink!5] (r) at (4,0) {Fixed renderer $R$\\and environment $z$};
\node[box,fill=Teal!8] (v) at (8,0) {Same observation\\$o_0$};
\node[box] (a) at (4,2) {Source observer\\can receive $P_i$};
\node[box] (b) at (12,0) {Pixel observer\\receives $o_0$};
\draw[->,draw=Ink] (p0) -- (r);\draw[->,draw=Ink] (p1) -- (r);
\draw[->,draw=Ink] (r) -- (v);\draw[->,draw=Ink] (v) -- (b);
\draw[->,dashed,draw=Ink] (p0.north) |- (a.west);
\draw[->,dashed,draw=Ink] (p1.west) -- ++(-0.6,0) |- (a.north);
\end{tikzpicture}}}
\conceptcaption{observation-pastel-v10.png}{\textbf{What reaches the verifier depends on the production map.} Different descriptions can coincide under a fixed renderer, while an intervention that changes the output can transmit a mark. In the figure, $R$ is the renderer, $z$ the environment, and $P_0$ and $P_1$ are two descriptions (Appendix~\ref{app:observation}). The source snippets and rendering apparatus are schematic; the left panel stipulates equal renders rather than demonstrating them by execution. Pixels, source or latent state, metadata, and authenticated records are distinct possible observations.}{A conditional observation bottleneck. Distinct descriptions $P_0$ and $P_1$ can give the same declared render under a fixed environment $z$. A source observer may receive the description, while the pixel observer receives only the common observation $o_0$. This example establishes neither universal erasure nor an authorship guarantee.}
\label{fig:bottleneck}
\end{figure}

\subsection{Scope of evidence}

\textbf{The article combines published watermark mechanisms, inspected official product and renderer interfaces, and explicitly conditioned mathematical examples.} The source library records versions and reading depth: several newer methods are abstract-screened, and selected central methods were read in technical detail. The mechanism map (Table~\ref{tab:mechanism-map}) lists, for each family, the observation boundary of its examples; its entries are precedents, not security guarantees. Section~\ref{sec:scope} states what documentation and source inspection establish. The ten Questions in Section~\ref{sec:agenda} are bounded in scope.

\section{Watermarking Across Production Stages}
\label{sec:stages}
\label{sec:programs}

\textbf{The intervention point determines which evidence can reach the verifier.}
A producer can mark a finished image, constrain a generator's internal state, transform source code, modify an asset, or instrument a renderer. These are places where a producer can place a mark, not progressively stronger capability levels. Several can occur in one production graph, and the verifier need not observe the representation that carries the mark. We organize the literature around that separation: what changes during production, what the verifier receives, and what the visual task permits to change.

The five stages map to the production graph (Section~\ref{sec:formal-contract}) and to Table~\ref{tab:locations} as follows.
\begin{description}[leftmargin=10.4em,labelwidth=9.6em,labelsep=0.8em,align=left,topsep=3pt,partopsep=0pt]
\item[Output pixels] the final pixel artifact; row Final pixels.
\item[Generation states] $Z_v$ and $F_v$ of a generation node; rows Representation, Emission stage.
\item[Source] the input description of a rendering node; rows Request / text (code tokens only) and Representation.
\item[Assets] an input artifact of a rendering node, or of a later generation node; row Representation.
\item[Execution and export] $F_v$ of a rendering node and its exports; rows Emission stage, Final pixels, and Side information (an observation, not a node).
\end{description}
The map is not one-to-one: Representation spans three stages, and Emission stage and Final pixels span two each.

\subsection{Output pixels: a common baseline}

\textbf{A producer can mark the output of either production route after the route produces pixels.}
A raster embedder receives an image or video and a message, then modifies the visual signal. Its downstream decoder can operate without knowing which of image/video generation, code-based rendering, or a camera produced the input.

HiDDeN learns image embedding and decoding through specified distortion layers \cite{hidden2018}. StegaStamp extends the channel to physical capture, with localization and alignment before message recovery \cite{stegastamp2019}. Their differing channels show why the verifier's observation must include more than a file extension. A downloaded image and a photograph of its display present different recovery problems.

Resolution and localization create further image-specific requirements. TrustMark learns raster embedding and extraction, scales a learned embedding residual to the cover image's resolution, and separately studies watermark removal for re-watermarking \cite{trustmark2023}. Watermark Anything produces spatial detection scores and local message predictions, supporting recovery of multiple messages from composed images \cite{wam2024}. These mechanisms expose different units of evidence: a whole-image message, a detected region, or a message associated with that region. Detecting overlapping marked regions does not establish recovery of every overlapping message. The reported image results of these methods do not by themselves establish temporal consistency or complete composition history.

For video, RivaGAN uses learned embedding and extraction with attention \cite{rivagan2019}. DVMark distributes information across spatial and temporal scales \cite{dvmark2021}. ItoV adapts image watermarking by combining temporal and channel dimensions \cite{itov2023}.

Video Seal computes pixel residuals on selected frames, propagates them temporally, and aggregates extracted soft bits across frames \cite{videoseal2024}. Its discussion of artifacts during fast motion illustrates a concrete constraint: an economical embedding schedule must still respect the changing scene. Transform-domain temporal redundancy in DTCWT-SVD \cite{dtcwt2022} and region selection in FlowMark \cite{flowmark2026} offer further examples of designing around video structure.

This shared baseline is essential for a fair comparison. An animation rendered from code can receive an output watermark through the same interface as model-generated footage. What that mark supports is an assertion about the marking event or associated message. It does not, by itself, distinguish the route that produced the unmarked content. Processing the exported pixels and producing a fresh export from accessible upstream material are different operations. A claim about the first does not establish resistance to the second.

\subsection{Generation states, models, and decoders}

\textbf{Internal marks can exploit generation structure.} Figure~\ref{fig:image-stages} compares image intervention and extraction interfaces, each requiring its own verification access.

Tree-Ring places a Fourier pattern in initial diffusion noise \cite{treering2023}. Gaussian Shading uses message-dependent Gaussian sampling with keyed randomization \cite{gaussianshading2024}. The studied extraction procedures of both methods use diffusion inversion and auxiliary watermark information. This differs from receiving an image and applying a standalone learned extractor. Inversion accuracy, compatible generation procedures, and access to the model belong in the mechanism description.

SLICE associates separately extracted semantic factors with partitions of an initial diffusion latent \cite{gao2026slice}. It then compares reconstructed and inverted partitions during verification. This content-sensitive design also makes geometric registration part of the verification.

For video, VideoShield encodes an encrypted template through signs of initial Gaussian noise and extracts through the corresponding video-model machinery \cite{videoshield2025}. VideoMark uses frame-wise pseudorandom-code messages and a temporal matching procedure to accommodate sequence edits \cite{videomark2025}. Its treatment of the difficult first frame in image-to-video generation is a reminder that temporal positions can have unequal extraction quality.

SIGMark addresses generation-time noise marking, with sequence organization designed around temporal disturbances \cite{sigmark2026}. Terms such as ``blind extraction'' should retain each paper's specified access model and do not imply that a verifier needs neither a model nor a key.

A second family changes the generation machinery. Stable Signature fine-tunes a latent image decoder so that a learned image extractor can recover a signature without diffusion inversion \cite{stablesignature2023}. For video, LVMark modulates decoder weights \cite{lvmark2024}. Video Signature selectively fine-tunes decoder components \cite{vidsig2025}. SPDMark uses key-indexed parameter displacements \cite{spdmark2025}.

These methods make different commitments about producer control, output variation, and verifier resources. A conventional vector rasterizer does not automatically expose an intervention of this kind. Conversely, a program can call a marked neural decoder and inherit its output as an asset. The useful comparison is thus between accessible components in the actual pipeline, not between product labels. Reported processing robustness and distribution-preservation properties remain scoped to the individual methods and their assumptions.

\begin{figure}[!htbp]
\centering
\conceptasset{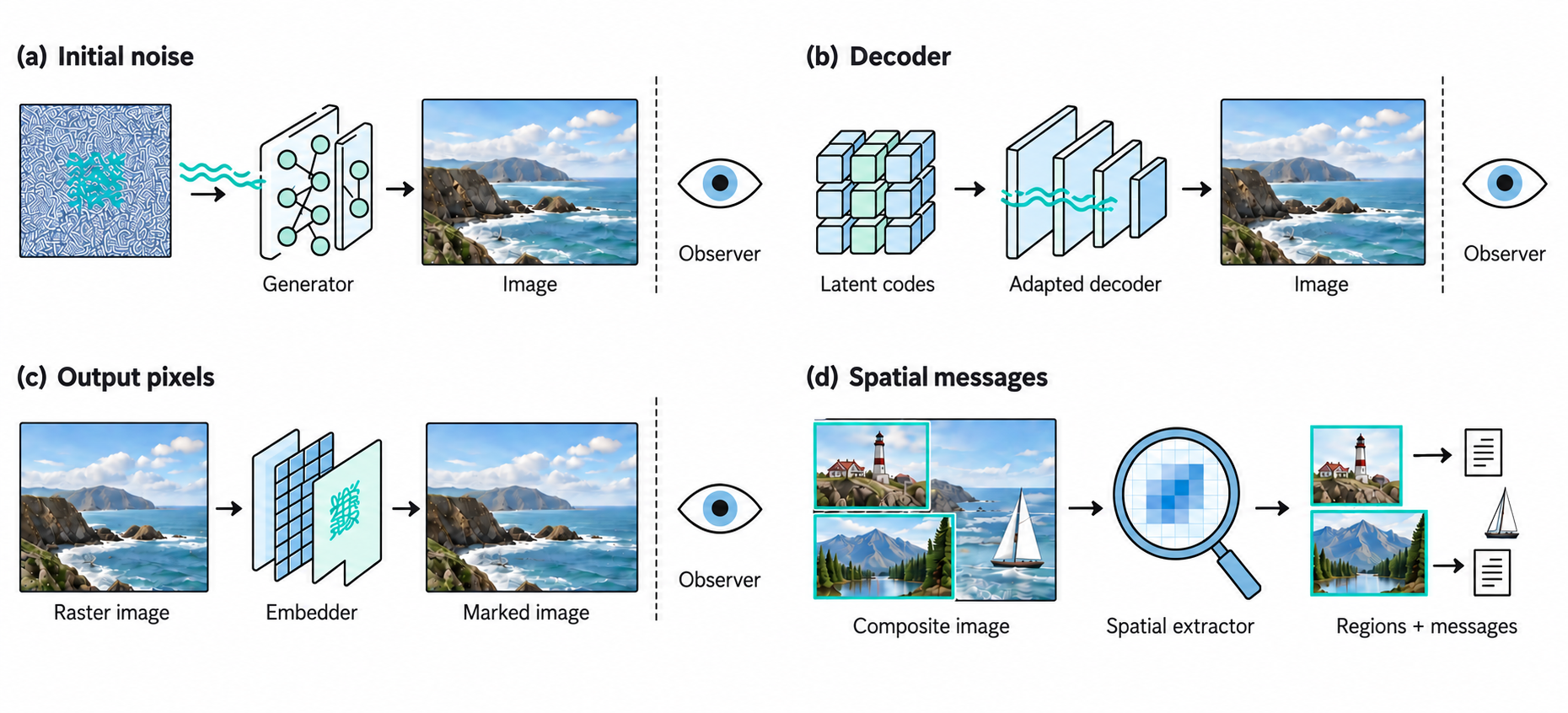}{\begin{tabularx}{\linewidth}{@{}lX@{}}\toprule
Initial noise & Marked sampling state; verification may require model inversion.\\
Decoder & Adapted generation component; a learned extractor reads the output.\\
Output pixels & A raster embedder modifies the exported image.\\
Spatial messages & A spatial extractor associates regions with message predictions.\\
\bottomrule\end{tabularx}}
\caption{\textbf{Image watermarking changes different objects.} Initial noise, an adapted decoder, and raster pixels are distinct intervention points \cite{treering2023,stablesignature2023,trustmark2023}. Spatial extraction can return localized messages \cite{wam2024}. Eye icons indicate the received images. The text specifies the keys, models, or extractors that each method requires. Teal traces identify conceptual carriers, not measured image residuals.}
\label{fig:image-stages}
\end{figure}

\subsection{Executable and declarative source}

\textbf{Source watermarking has a well-defined observation space of its own.}
SWEET places token-level marks at selected high-entropy positions \cite{lee2024sweet}. STONE gates marking by syntax category \cite{kim2026stone}. AST-guided code watermarking learns token placement with structural guidance \cite{guo2025astwatermark}.

Transformation-based approaches instead encode through choices among source representations. ACW uses transformations such as refactoring and reordering \cite{li2024acw}. SrcMarker combines structural transformations with variable substitutions \cite{yang2024srcmarker}. These mechanisms identify information in suspect code. Readers should not take their utility and attack studies as guarantees that one can extract the same information from executed frames.

For a drawing program, if the declared execution and observation rules ignore an allowed identifier change, that change leaves the observed image unchanged. A pixel-only verifier gains no distinguishing signal from that change at fixed inputs and environment. This statement concerns the specified observation. A source inspector, a metadata reader, or an execution monitor may still distinguish the files. Reflection, external state, and timing prevent us from declaring arbitrary source transformations render-preserving without examining the interpreter.

Source interventions can instead affect visible parameters. A program may select a line style, texture, layout, or motion pattern that carries information. Then the central constraint becomes task preservation. The change must preserve the chart's values, the explanation's meaning, or the animation's required events. That possibility requires its own encoding and verification design. Existing code-token detection supplies useful precedents for source control and attack access, but does not settle this downstream visual problem.

\subsection{Assets and representations designed for rendering}

\textbf{Rendering can transmit a mark deliberately placed in an asset.}
CopyRNeRF modifies a radiance field's color representation and trains a decoder on rendered image patches \cite{luo2023copyrnerf}. WateRF likewise studies marks recovered from rendered views of a modified representation \cite{jang2024waterf}. These are concrete precedents for a signal crossing the representation-to-image boundary. These methods permit visual changes and optimize rendering quality, so they do not conflict with the elementary observation (Section~\ref{sec:identical-observation}) that exact preservation of the verifier's input leaves no new signal.

GaussianMarker makes the observation distinction especially explicit by using separate decoders for rendered images and Gaussian parameters \cite{huang2024gaussianmarker}. The parameter decoder has information unavailable to an ordinary image viewer. 3D-GSW \cite{jang2025gsw} optimizes a modified Gaussian representation through rendering. A fixed image-message decoder reads a frequency subband of the resulting view. These methods concern marked representations and selected transformations and do not authenticate every program that later loads the representation.

An asset can also cross into another generator. LoT-Pass studies source-image marks passing through image-to-video generation \cite{lotpass2025}. The wider implication for hybrid pipelines is precise: each downstream operation defines a further transmission channel. Rendering the same marked asset again can retain its signal. Replacing it with clean material, rebuilding the scene, or bypassing an added representation component is a different attacker capability. A provenance assertion should name the marked contribution and the allowed operations on it.

\subsection{Execution, rendering, and export}

\textbf{The renderer can be an intervention point and an evidence producer.}
The inspected Remotion server-rendering path separates composition selection, evaluation at a frame index, frame capture, and media encoding \cite{remotionSSR,remotionRendererMedia,remotionRendererPerFrame,remotionRendererCapture}. These boundaries identify places where a system designer could modify emitted pixels or record an execution event. They are observations about the workflow, not a claim that Remotion implements such a watermark. Its browser-side rendering path has a different implementation, which reinforces the need to declare the actual backend \cite{remotionClientRender}.

A managed renderer could apply a keyed pixel mark, issue a signed receipt, or attach a content credential. These mechanisms produce different observations: decoded pixels, a separately retained record, and structured file-associated evidence. C2PA provides a framework for signed provenance assertions and content binding (Appendix~\ref{app:glossary}) \cite{c2pa}. Reading a credential and recovering a pixel message remain distinct verification operations. Image/video generation services expose analogous service and export boundaries even when no user-authored program takes part.

The supported assertion must follow the intervention. A renderer's evidence may establish that it processed an artifact under stated conditions. Establishing who authored the input, which assets contributed, or whether the producer invoked a particular model requires additional bindings. If callers can export through another renderer, the availability of that bypass is part of the attacker model.

\subsection{What the stages imply together}

\begin{takeawaybox}
\textbf{Takeaway: Follow the signal through the observed path.} Declare the marked object, the verifier's observation (with its side information), the allowed downstream transformations, the allowed visual changes, and the assertion under test. These items belong to the verification specification of Section~\ref{sec:formal-contract}. Source inspection, retained-asset identification, and signed export records play complementary roles, and a comparison among them must expose their different observations.
\end{takeawaybox}

\textbf{None of these three sources of evidence establishes complete authorship.} A service may add no mark, or callers may bypass the service, and someone may copy a marked component into another composition. VideoMarkBench separates removal, forgery, and attacker access \cite{videomarkbench2025}. Application constraints matter too: a signal acceptable in a photograph may be unacceptable in a scientific plot or timed explanation.

\section{Provenance Across Visual Media Applications}
\label{sec:applications}

Photographs, plots, SVG diagrams, and narrated animations allow different changes and expose different verifier observations. The following examples turn production-stage comparisons into application-specific research settings; they are not experimental findings.

Figure~\ref{fig:real-examples} makes these settings concrete with selected video frames and static graphics. Media from code-based rendering can incorporate elaborate scenes and imported assets, and an image model can generate a diagram. Thus visual style does not define the workflow.

\begin{figure}[p]
\centering
\begin{minipage}[t]{0.485\linewidth}
\raggedright\sffamily\small
\textbf{(a) Video generation}\par
{\footnotesize Veo 3.1 / official preview}\par\vspace{3pt}
{\color{black}\hrule height 0.35pt}\vspace{2pt}
\conceptasset{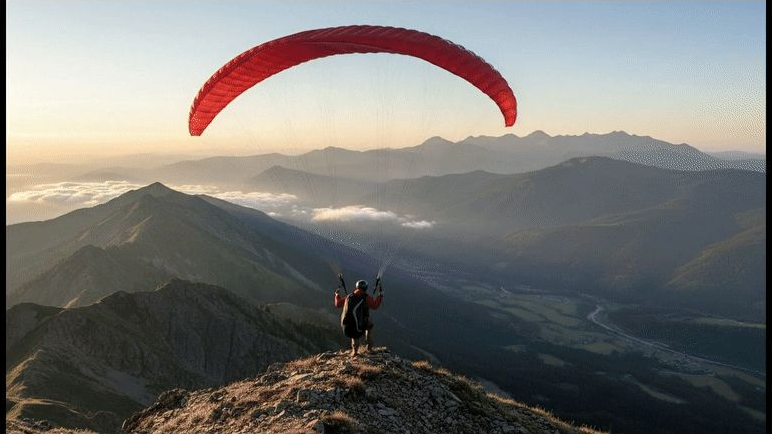}{\textit{Example image unavailable.}}
\end{minipage}\hfill
\begin{minipage}[t]{0.485\linewidth}
\raggedright\sffamily\small
\textbf{(b) Code-rendered music video}\par
{\footnotesize Claude / creator-reported workflow}\par\vspace{3pt}
{\color{black}\hrule height 0.35pt}\vspace{2pt}
\begin{tikzpicture}
\node[inner sep=0,anchor=south west] (mvframe) at (0,0)
  {\conceptasset{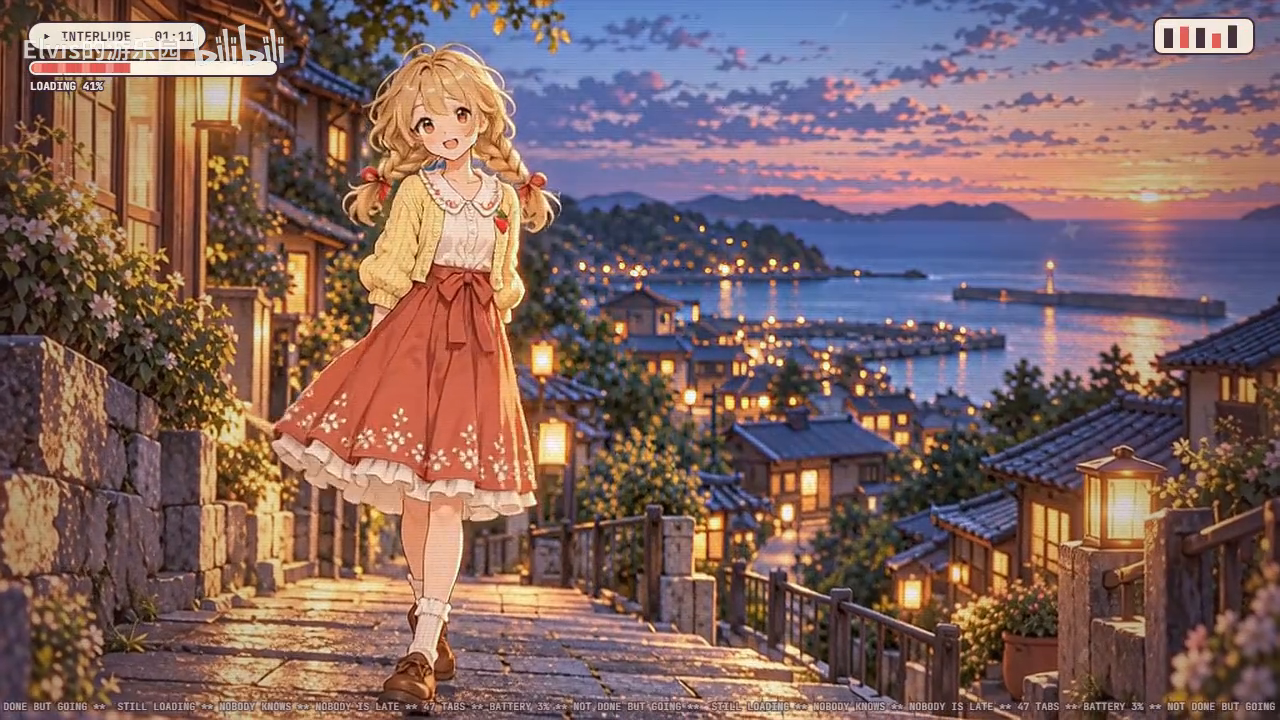}{\textit{Example image unavailable.}}};
\begin{scope}[x={(mvframe.south east)},y={(mvframe.north west)}]
\clip (0.014,0.870) rectangle (0.229,0.975);
\node[inner sep=0,anchor=south west] at (0,0)
  {\conceptasset{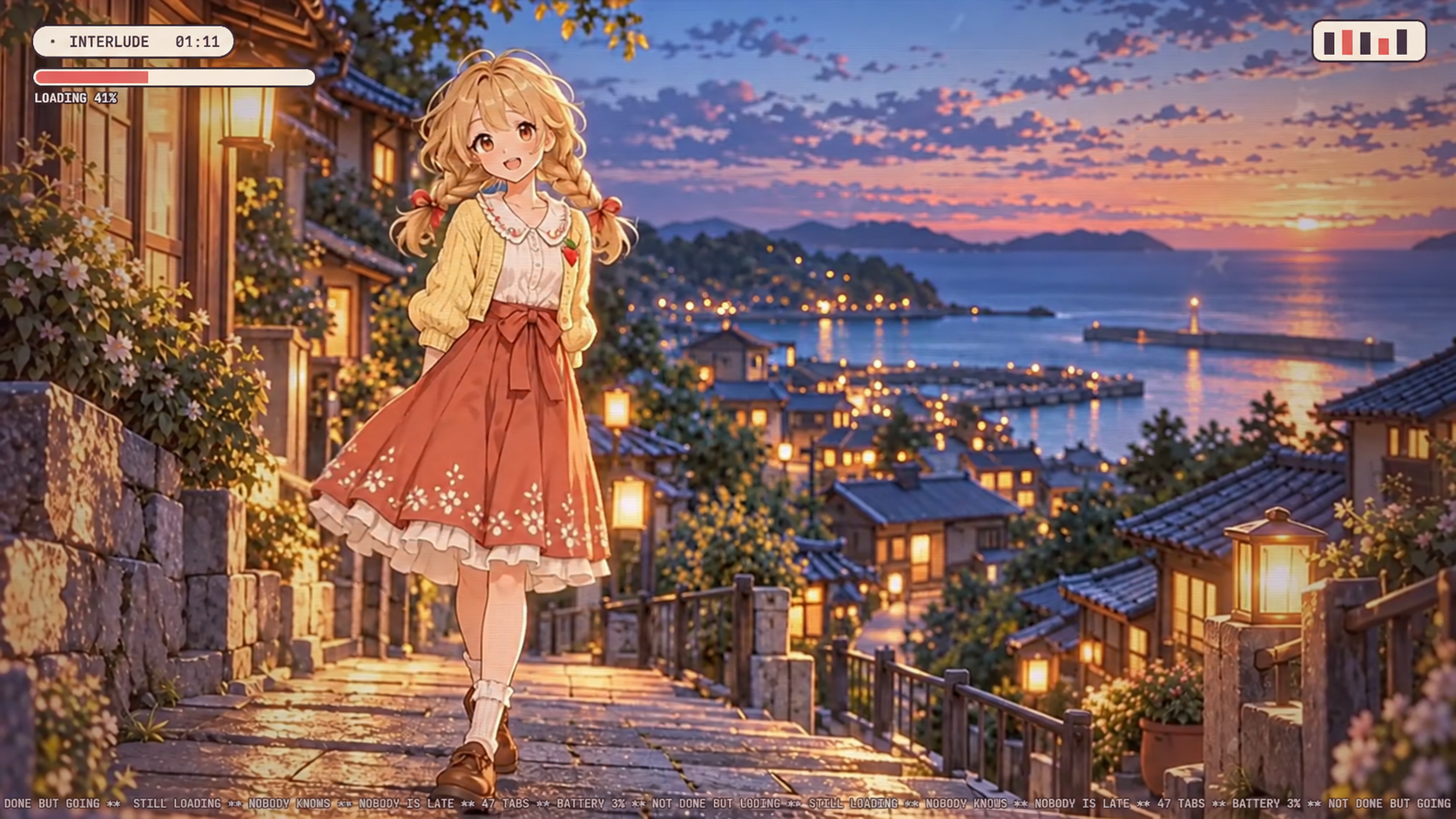}{}};
\end{scope}
\end{tikzpicture}
\end{minipage}\par\vspace{10pt}
\begin{minipage}[t]{0.485\linewidth}
\raggedright\sffamily\small
\textbf{(c) Code-rendered map animation}\par
{\footnotesize Claude Code + Remotion}\par\vspace{3pt}
{\color{black}\hrule height 0.35pt}\vspace{2pt}
\conceptasset{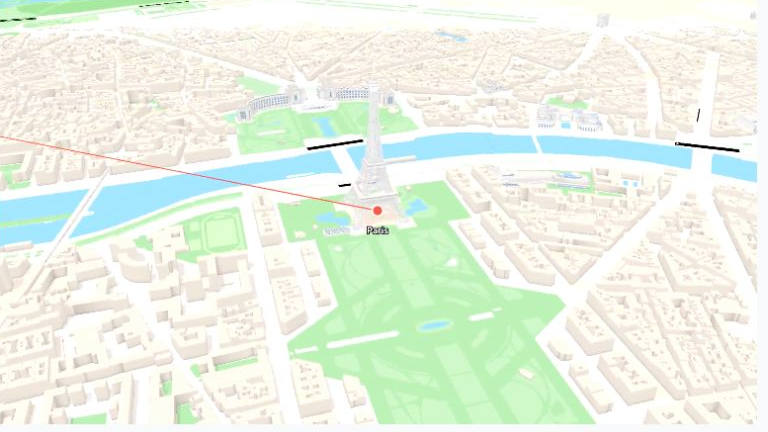}{\textit{Example image unavailable.}}
\end{minipage}\hfill
\begin{minipage}[t]{0.485\linewidth}
\raggedright\sffamily\small
\textbf{(d) Code-rendered 3D visualization}\par
{\footnotesize Claude Code + Remotion / X example}\par\vspace{3pt}
{\color{black}\hrule height 0.35pt}\vspace{2pt}
\conceptasset{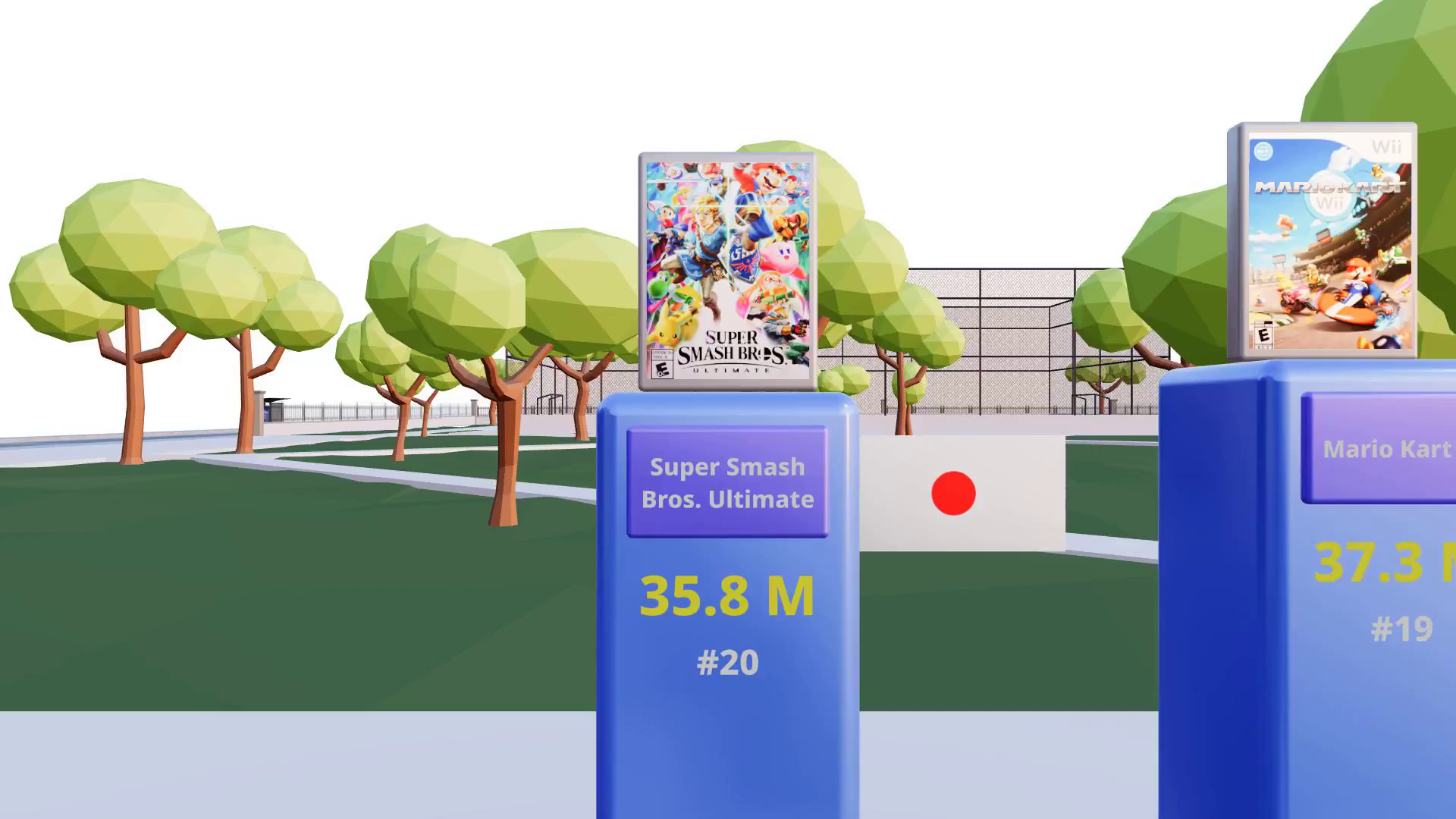}{\textit{Example image unavailable.}}
\end{minipage}\par\vspace{10pt}
\begin{minipage}[t]{0.485\linewidth}
\raggedright\sffamily\small
\textbf{(e) Image-model output}\par
{\footnotesize Diagram generated for this article}\par\vspace{3pt}
{\color{black}\hrule height 0.35pt}\vspace{2pt}
\conceptasset{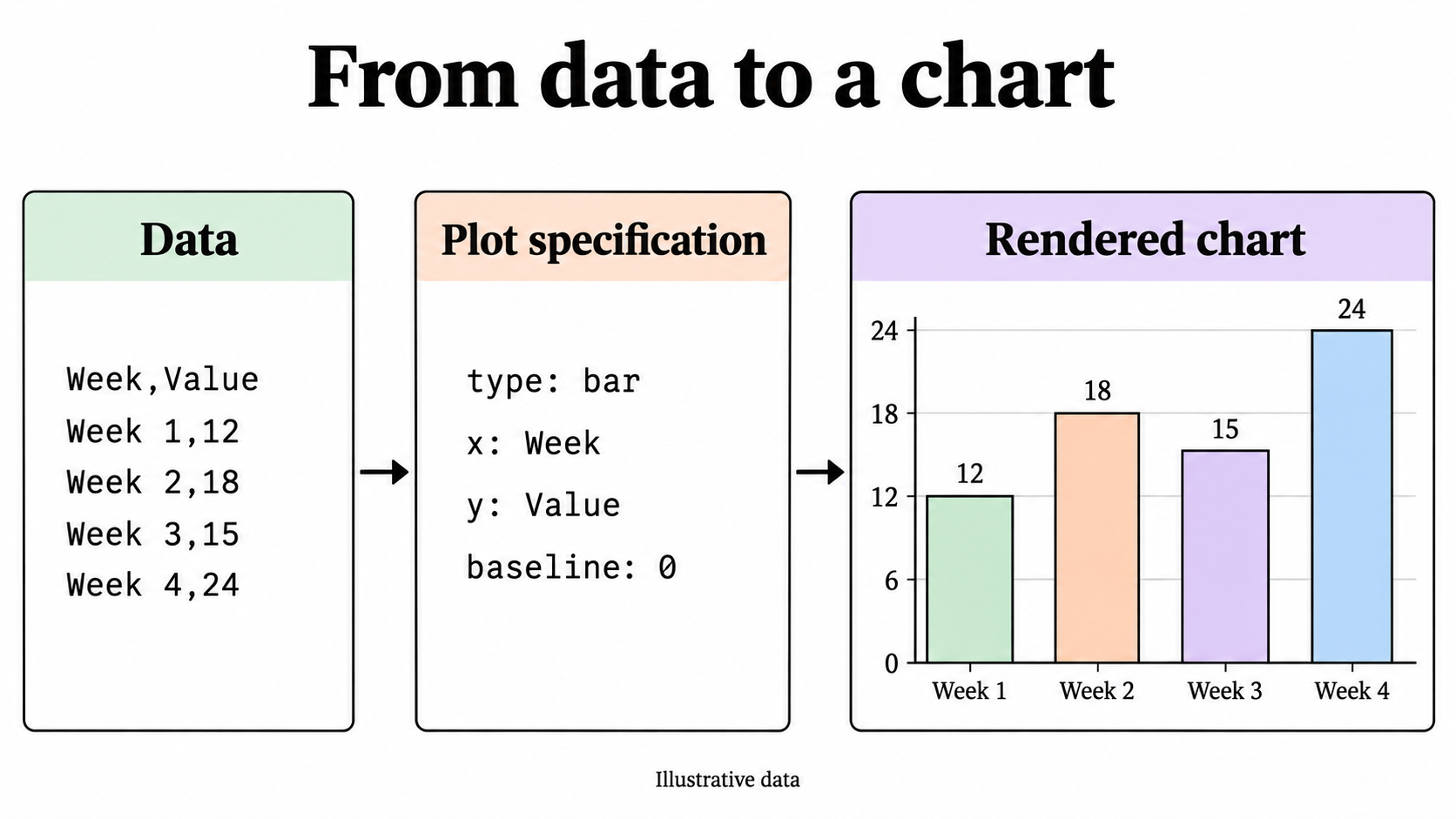}{\textit{Example image unavailable.}}
\end{minipage}\hfill
\begin{minipage}[t]{0.485\linewidth}
\raggedright\sffamily\small
\textbf{(f) LLM-authored SVG}\par
{\footnotesize Editable source, then local rasterization}\par\vspace{3pt}
{\color{black}\hrule height 0.35pt}\vspace{2pt}
\conceptasset{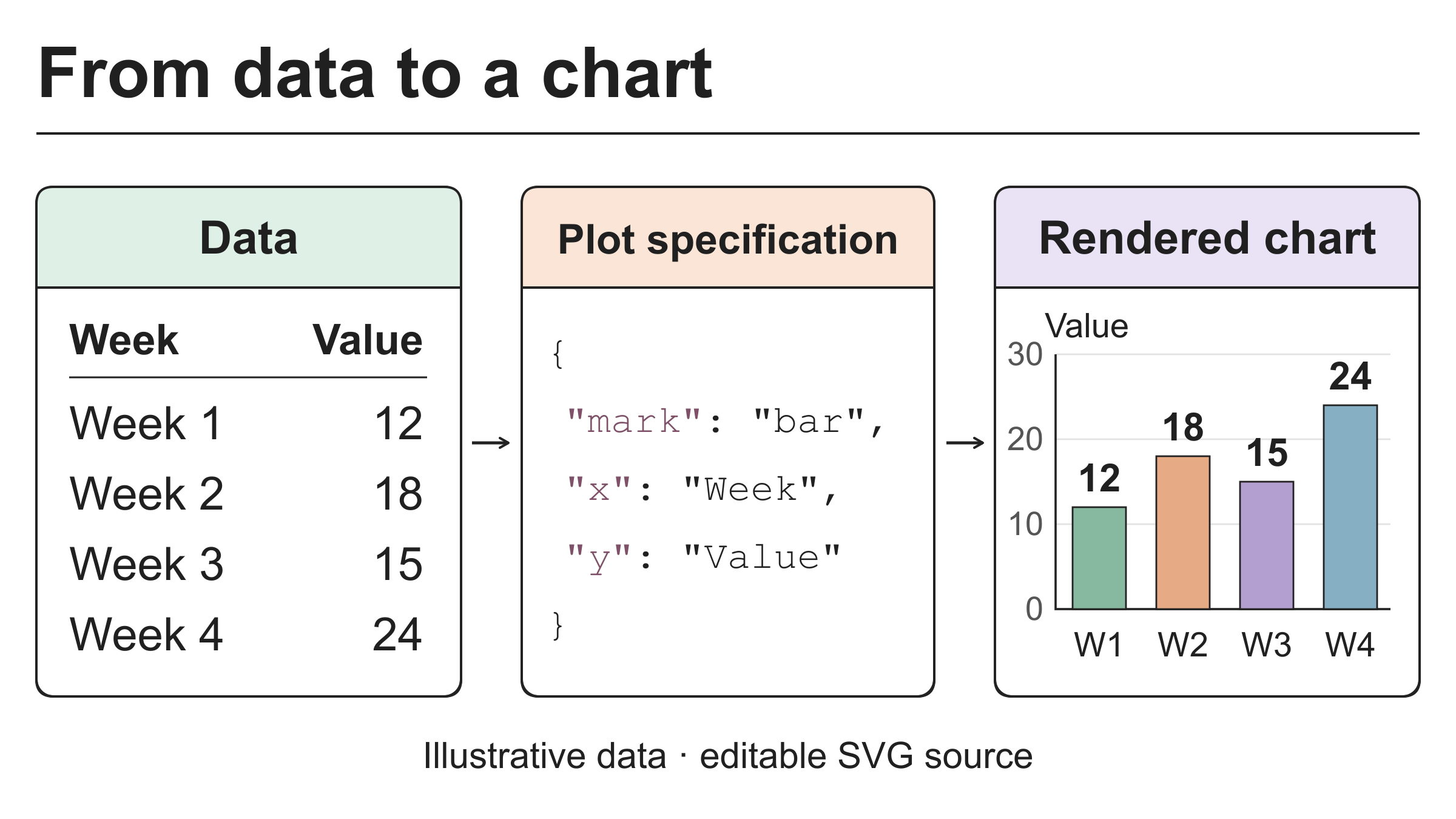}{\textit{Example image unavailable.}}
\end{minipage}\par\vspace{10pt}
\caption{\textbf{Selected visual outputs and their production workflows.} The panels show workflow variety, not a controlled quality or watermark comparison. Labels follow source records, not visual appearance. (a) A frame from Google\textquotesingle s Veo 3.1 preview \cite{googleVeo31Example}. (b) A frame of \emph{still loading} at 01:11.44, reported by its creator as Claude code-rendered \cite{gao2026StillLoading}. The upper-left platform mark is retouched for display. Incorporated asset origins are not established here. (c) A travel-map animation credited to @JNYBGR \cite{remotionTravelGallery}. (d) Dilum Sanjaya\textquotesingle s X video, with the frame obtained from Remotion\textquotesingle s gallery \cite{sanjaya2026GamesVideo,remotionGamesGallery}. (e) An illustrative diagram produced for this article with an image-generation tool. (f) An illustrative diagram produced for this article as LLM-authored SVG; the SVG source is retained.}
\label{fig:real-examples}
\end{figure}

\subsection{Image and video generation}

\textbf{Four conditions shape a study of generated images and video: intended use, audio (for narrated video), verifier access, and the assertion under test.}
Evaluate visual fidelity and provenance at the intended use. For a generated illustration, a task might allow small texture changes while requiring the depicted objects and their relations to remain intact. Video adds continuity, event order, and timing.

These requirements explain why image-watermark extraction alone cannot establish video suitability. A signal that is unobtrusive in individual frames may vary distractingly over time, and a detector may rely on frames removed during editing. The spatial and temporal designs of Video Seal and VideoMark illustrate different responses to these concerns \cite{videoseal2024,videomark2025}.

Narrated video adds an audio observation channel. LAVA studies alignment, channel-reliability gating, and fusion of semi-fragile audio--visual watermark scores \cite{zeng2026lava}. Its evaluated offset selection uses an oracle; label-free alignment and calibrated fusion under joint channel failure remain deployment questions.

The verifier's access also changes the problem. A recipient with a downloaded clip can aggregate frame evidence, but someone with one screenshot cannot use temporal redundancy. A model-owning verifier may attempt inversion, but a third party may possess only a learned decoder and the received pixels.

Passive inference of synthetic footage is distinct from message recovery, the estimation of an embedded message. GenVidBench and UNITE study declared synthetic-video detection populations \cite{genvidbench2025,unite2025}. Their labels do not automatically resolve whether an LLM authored a program behind another clip. Declare the assertion under test before choosing the detector: a visible generated contribution, a generation event, or an issuer association.

\subsection{Plots, educational visuals, and SVG}

\textbf{A visually small change can be a substantive error in an explanatory graphic.}
Consider a plot teaching the relation between position and velocity. Moving a data point, changing an axis label, or shifting a highlighted event may alter the lesson even when the image remains attractive. A task specification could instead allow changes to a background pattern or line decoration while fixing values, units, labels, and the intended correspondence. Whether those freedoms support reliable marking is an application-specific question. Perceptual similarity alone does not define success.

Graphics tools expose several observation interfaces. Matplotlib documents both raster and vector export formats \cite{matplotlibSavefig}. Claude's documented custom-visuals feature constructs HTML-based visuals and supports SVG or HTML downloads \cite{claudeCustomVisuals}; Section~\ref{sec:systems} gives the documentation details. A verifier that receives such source material can inspect its structure and text, but one that receives a raster export sees the rendered effect. SVG's document and rendering structures are distinct. Its supported features make the grammar, resources, and rendering environment relevant to any precise equivalence claim \cite{w3cSVGRender}.

For example, a marker in an ignored SVG comment could remain in the downloaded source and disappear from a screenshot. A marker encoded in rendered color choices has a path to that screenshot, subject to the capture and processing channel. Thus re-exporting the same marked description differs from recreating the graphic from its numerical data. A task check of the plot's truth and a watermark detector's check of a signal need explicit, potentially different acceptance rules.

A rendered plot can thus receive a raster watermark even when its source is ordinary SVG or plotting code. That choice protects the marked export under a specified image channel. A fresh render from an unmarked source bypasses that postprocessing stage (Section~\ref{sec:stages}). Conversely, a source-level mark that changes no observed pixels gives a pixel-only verifier no new signal. Designing a mark that crosses this boundary requires selecting allowed visible changes and the task they must preserve.

\subsection{Programmable video}

\textbf{Exported video may conceal its construction history.} Figure~\ref{fig:rerendering} separates pixel processing, asset reuse, and clean-source export. In Claude Code--Remotion workflows, the agent edits a project and the renderer produces media \cite{claudeCodeOverview,remotionCodingAgents}. Section~\ref{sec:systems} gives the details. Project access exposes code, assets, and configuration that a clip-only viewer lacks. Different histories can produce identical depicted events, so source-mark recovery and animation attribution are separate questions.

In a teaching animation, the ball must reach its target on time and the caption must describe its motion correctly. Both frame-indexed rendering and video generation can satisfy this task. A study must declare the allowed changes to frame rate, duration, trajectories, captions, and synchronization. It must distinguish retiming of the exported clip from source editing followed by rerendering.

Exporting clean source and assets can bypass output-only marking. Rerendering a marked texture may retain its signal. Neither establishes a general robustness advantage for programs. A guarantee must identify the constrained components and the production event that a retained signal supports.

\begin{figure}[!htbp]
\centering
\conceptasset{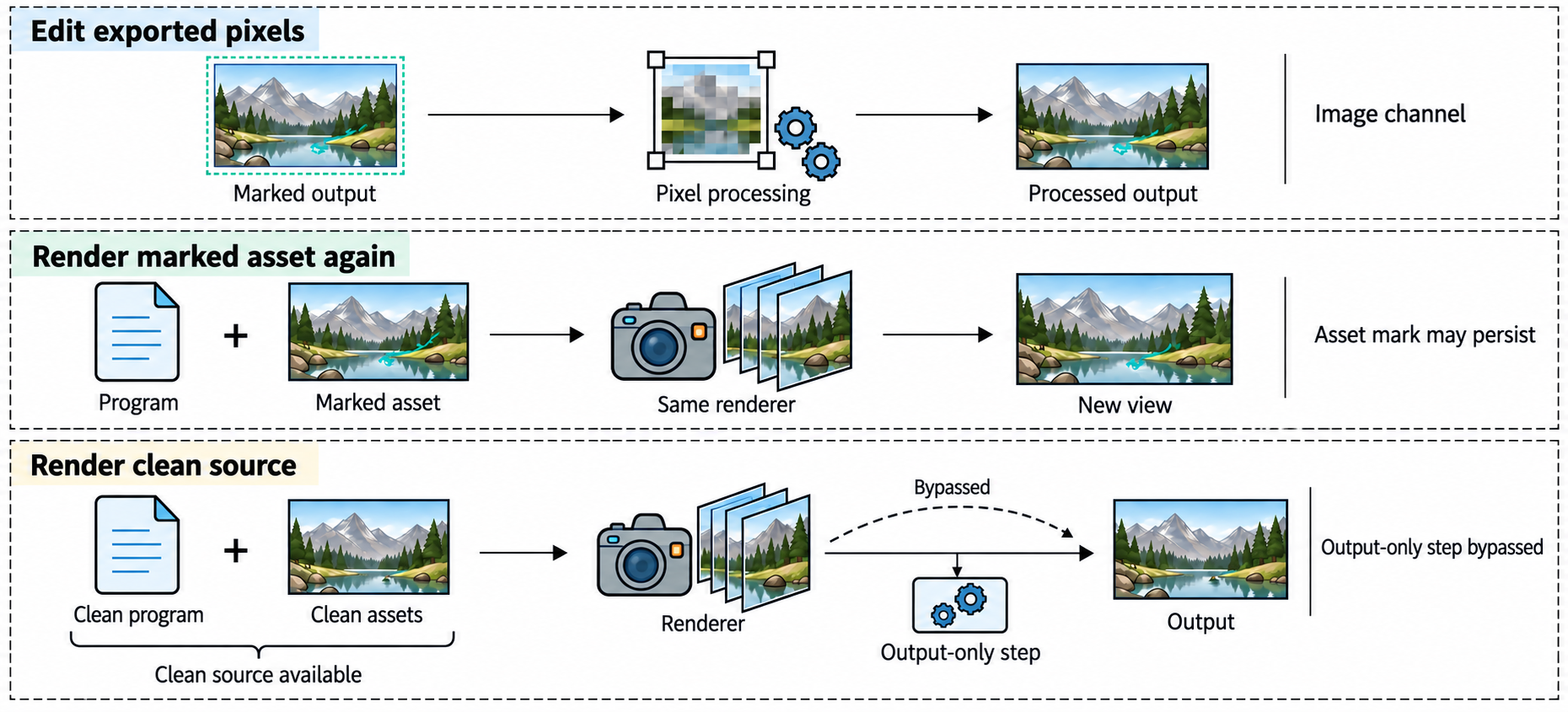}{\begin{tabularx}{\linewidth}{@{}lX@{}}\toprule
Edit pixels & Received marked output passes through an image-processing channel.\\
Reuse asset & A fresh view of a marked representation may retain its carrier.\\
Use clean source & With clean source and assets available, a fresh export can bypass an optional output-only mark.\\
\bottomrule\end{tabularx}}
\caption{\textbf{Three different transformation interfaces.} Editing received pixels, rendering a marked asset again, and exporting available clean source grant different access. Asset-mark persistence depends on the channel. The bottom path bypasses an optional output-only marking step under the stated clean-source access; it does not establish general erasure of upstream marks.}
\label{fig:rerendering}
\end{figure}

\subsection{Agent-composed hybrid media}

\textbf{An agent can coordinate both routes within one artifact.}
Official tool documentation separately describes code execution that produces graph images and specialized image generation \cite{openaiCodeInterpreter,openaiImageGeneration}. Remotion's Prompt to Video template supplies a concrete composition workflow involving generated scripts, images, and voiceover \cite{remotionPromptToVideo}. These examples motivate treating an LLM agent as a controller choosing operations, rather than as a third, mutually exclusive visual-production route.

Instead of pixels, agent attribution can target an execution record. TRACE combines content-keyed action selection with a trajectory-group record-count channel under a specified log-editing threat model \cite{gao2026trace}. Its protected executed-action and no-re-execution assumptions distinguish this trajectory evidence from a viewer's pixel-only observation.

Figure~\ref{fig:hybrid-evidence} illustrates an agent assembling a model-generated background, a plotted curve, and human annotations into a lesson. The plot's data must remain correct, the annotations must align with the explanation, and the background must retain its intended role. A recovered background mark concerns that contribution and does not establish the authorship of the curve, the annotations, or the entire lesson. Cropping, occlusion, and transitions can also change which contribution is observable. Localized image-message extraction in Watermark Anything supplies a relevant precedent, while leaving video synchronization and complete composition history as separate obligations \cite{wam2024}.

Recipients may receive pixels, assets, or production records. The interface must identify the covered contributions and the unobserved production events. Signal survival follows the media path. Task preservation follows each component's purpose. Authentication identifies who vouches for the history.

\begin{figure}[!htbp]
\centering
\conceptasset{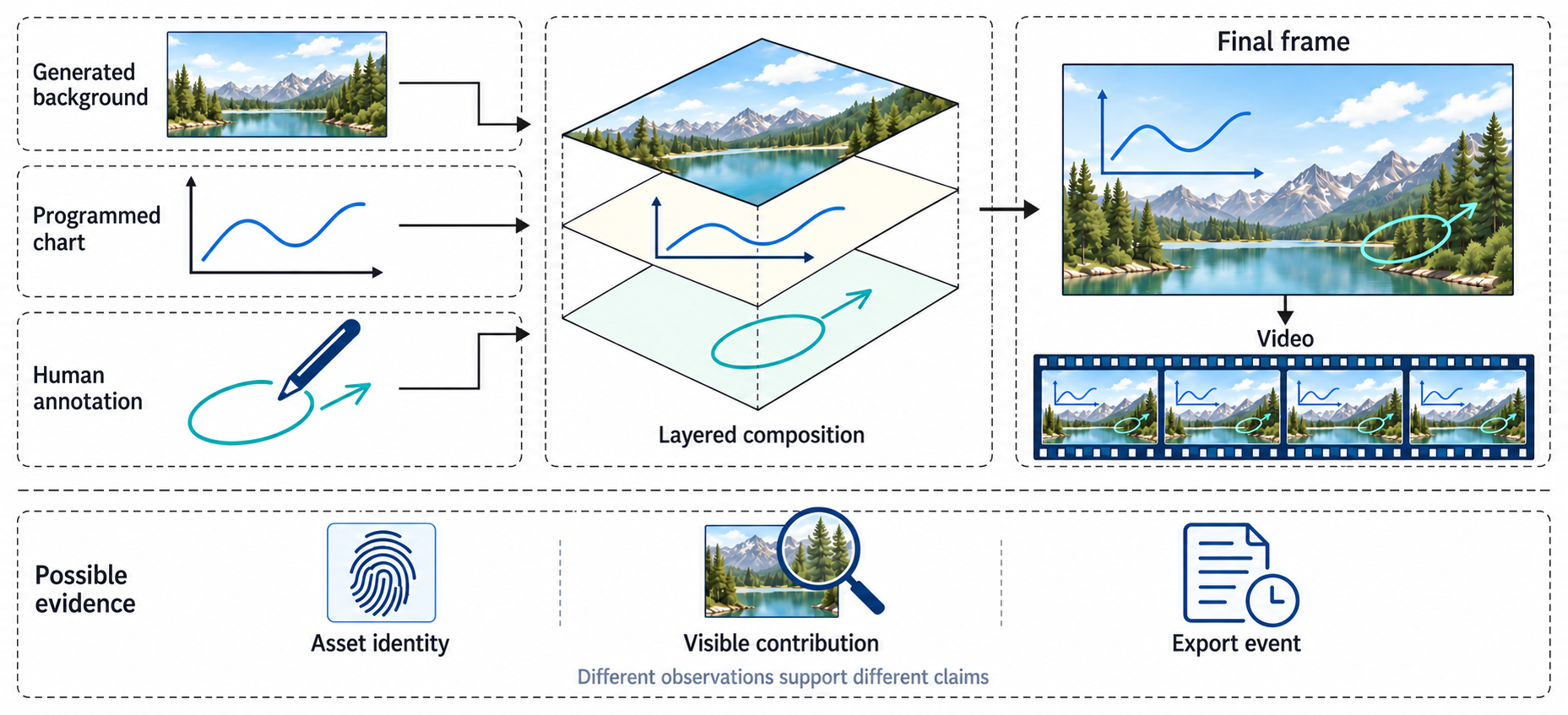}{\begin{tabularx}{\linewidth}{@{}lX@{}}\toprule
Generated background & Evidence may identify a particular contributing asset.\\
Programmed chart & Data values and visual correspondence impose task constraints.\\
Human annotation & A separate contribution can have a different history.\\
Export event & A record can concern one invocation without authenticating every ancestor.\\
\bottomrule\end{tabularx}}
\caption{\textbf{One composition supports several evidence questions.} A generated background, programmed chart, and human annotation enter a shared visual artifact. Asset identity, retained visible contribution, and an export event are different possible targets. Their icons indicate evidence categories, not verified facts about the illustrated example or complete authorship.}
\label{fig:hybrid-evidence}
\end{figure}

\section{Documented Systems and Deployment Boundaries}
\label{sec:systems}

Official documentation and selected repository sources, accessed on 4 October 2026, identify model outputs, rendering components, and inspectable artifacts. They establish interfaces and implementation boundaries, not adoption frequency or watermark performance. Each case links an operation and its accessible state to the assertion it could support and the evidence still needed.

\subsection{Claude Code and server-side Remotion rendering}

\textbf{The assistant supplies code; a separate pipeline renders it.} Remotion's tutorial covers Claude Code project creation, skills, and preview setup, and Claude Code documents file editing and command execution \cite{remotionCodingAgents,claudeCodeOverview}. Remotion separately documents component-code generation, compilation, and rendering \cite{remotionCodeGeneration}. That provider-specific example is not a Claude API example.

In a public creator walkthrough, Sanjaya describes opening a Remotion project in Claude Code and iterating on the requested animation \cite{sanjaya2026GamesWorkflow}. Sanjaya then describes adding images and a Sketchfab background model \cite{sanjaya2026GamesAssets}. This creator report illustrates why code authorship, asset origin, and final rendering are separate provenance questions.

The server API separates bundling, composition selection, and rendering \cite{remotionSSR}. Source files, dependencies, resources, inputs, and configuration affect execution, and media controls include input properties, codec, metadata, and frame range \cite{remotionRenderMediaAPI}. Operators may inspect these objects, but viewers receive the export. A stored source version alone does not establish the configuration used for that file.

In the inspected repository snapshot, media export renders and stitches frames, and related sources manage the browser, seek, and capture each frame \cite{remotionRendererMedia,remotionRendererFrames,remotionRendererPerFrame,remotionRendererCapture}. These snapshot-specific observations locate potential evidence at source preparation, frame production, and export. They do not establish which invocation produced a delivered file or describe every released package.

\subsection{Browser and cloud rendering change the trust boundary}

\textbf{Rendering backends expose different boundaries.} Remotion's browser backend takes a component, configuration, and input properties through \texttt{@remotion/web-renderer}. It documents WebCodecs and Mediabunny, skips server bundling, and restricts supported HTML \cite{remotionClientRender}. A successful preview thus does not establish exportability through this backend. The component, runtime, and supported operations remain part of the production specification.

Remotion Lambda documents S3 deployment, parallel chunk rendering, stitching, and S3 output \cite{remotionLambda}. Evidence from one cloud component concerns that component's work. Binding it to the final file requires relationships to the other chunks and export. Changing the rendering backend changes execution, observation, and deployment control without itself changing program authorship.

\subsection{Static graphics expose multiple kinds of artifact}

\textbf{The same distinction appears before time is introduced: static images already involve distinct operations and distinct observable artifacts.} OpenAI documents Code Interpreter as executing Python and producing files such as graph images \cite{openaiCodeInterpreter}. OpenAI separately documents an image-generation tool based on GPT Image models \cite{openaiImageGeneration}. These are distinct available operations. Identifying the assistant that returned a picture does not specify which operation produced it, and a proposed tool invocation does not establish successful execution.

Claude's custom-visuals documentation describes HTML-based diagrams and charts with SVG or HTML downloads \cite{claudeCustomVisuals}. A downloaded description, an interactive display, and a screenshot expose different state. The SVG specification distinguishes the document tree from the rendering tree \cite{w3cSVGRender}. Matplotlib's export documentation likewise distinguishes formats whose availability depends on the backend \cite{matplotlibSavefig}. Source-visible evidence can therefore concern an editable description even when a pixel-only verifier receives only its appearance.

The consulted Artifacts help page lists particular export formats but does not supply a general MP4-export promise \cite{claudeArtifactsHelp}. Broader capability claims would require additional evidence.

\subsection{Generated assets inside authored compositions}

\textbf{Mixed workflows need contribution and handoff evidence.} Remotion's Prompt to Video combines generated script, images, and voiceover using OpenAI and ElevenLabs \cite{remotionPromptToVideo}. Its markup guidance permits media resources \cite{remotionSkillsMarkup}. Generated assets thus enter an authored composition before rendering and encoding.

Asset producers, composition authors, controllers, and exporters observe different stages. Producers may not see later cropping, overlays, or reuse. Exporters may lack authenticated input histories. Evidence must connect asset identity, retained contribution, and export scope, because recorded ancestry alone does not establish visibility.

\subsection{Upstream marking and downstream credentials}

\textbf{Existing upstream evidence does not settle its downstream interpretation.} Anthropic documents text watermarking for supported models and C2PA credentials for supported generated files, with coverage and capabilities depending on the model or platform \cite{claudeMarking}. The documentation also qualifies what detection establishes and notes that transformations such as conversion or screenshots can remove metadata. These statements make upstream marking part of the production analysis. Source-to-render inheritance still needs examination for each carrier and workflow.

C2PA also supports watermark or fingerprint \emph{soft bindings} (Appendix~\ref{app:glossary}): a retained manifest may be located after metadata loss, with region or time scopes available \cite[Sections~9.3 and~18.10]{c2pa}. Soft matching complements signed assertions and does not replace hard cryptographic binding. Thus metadata loss alone is not a sufficient motivation for a new credential mechanism. Questions~\ref{q:hybrid} and~\ref{q:credentials} must compare with signed records plus soft-binding lookup, while specifying matching accuracy, authorized production-event relations, repository access, and privacy. These capabilities do not imply that a source-code watermark survives external video rendering.

\subsection{Permissions are part of the verification specification}

\textbf{Deployment determines who can inspect source, invoke rendering, retain logs, and issue evidence.} Public verification and secret issuance are distinct arrangements. If a proposed design places an issuance secret inside client code available to its attacker, its security argument must account for that access. A managed service can impose different access conditions. The specification must state these conditions alongside the trust placed in its execution records.

\begin{takeawaybox}
\textbf{Takeaway:} These are modeling requirements, not claims that the documented products implement a particular watermark protocol. To compare image/video generation services, editable programs, and compositions, we need a verification specification (Section~\ref{sec:formal-contract}) in which source visibility, clean assets, issuance queries, and verifier access each have a separate statement.
\end{takeawaybox}

\section{Challenges and Research Agenda}
\label{sec:agenda}

Intervention, decoding, and provenance assertions can concern different production stages, so the ten Questions below match these objects before optimizing detection:
\begin{itemize}
\item[\ref{q:identifiable}.] Identifiable provenance labels.
\item[\ref{q:downstream}.] Downstream observability.
\item[\ref{q:vector-raster}.] Vector descriptions and raster observations.
\item[\ref{q:fair-comparison}.] Fair comparison across intervention stages.
\item[\ref{q:capacity}.] Recoverable payload under task constraints.
\item[\ref{q:reconstruction}.] Reconstruction with access to production state.
\item[\ref{q:synchronization}.] Synchronization with calibrated search.
\item[\ref{q:composed-edits}.] Composed edits with a finite budget.
\item[\ref{q:hybrid}.] Hybrid local contribution.
\item[\ref{q:credentials}.] Content-bound, privacy-aware credentials.
\end{itemize}

A Question states its Setting, the Success criterion, a Baseline, and the Deliverable sought, and, where useful, a First case and what is Not asked, meaning the variants it excludes. In Question~\ref{q:reconstruction}, Success describes the attack.

Finite instances are starting checks. The intended deliverable is:
\begin{itemize}
\item a structural criterion;
\item a bound that varies with declared parameters;
\item an efficient construction; or
\item a separation from an explicit baseline.
\end{itemize}
Enumerating one instance alone supplies none of these. The formulation of a Question does not certify that every variant is globally open. A natural first project is Question~\ref{q:downstream}'s carrier recoverable across a population without clean reference frames. Question~\ref{q:capacity} then asks how its payload scales.

\subsection{Identifiability and downstream observability}

\textbf{A provenance distinction must survive the declared observation.}
Passive inference starts with production distributions and a label, not merely a contrast between product names. Appendix~\ref{app:observation} shows both extremes: identical label-conditional observation laws give no classification advantage over the prior alone (Corollary~\ref{cor:label-no-advantage}), but different mixtures of the same renderer can remain distinguishable (Example~\ref{ex:mixture}). Side information may change the problem. These distinctions motivate a precise label and an explicit abstention policy.

\begin{question}[Identifiable provenance labels]
\label{q:identifiable}
\qlabrun{Ask} How accurately can a verifier infer the production route, image/video generation or code-based rendering, from its observation if it may abstain?

\qlab{Setting} Both routes draw intended content from one parameterized scene family by a common rule. Declare the nuisance-policy families, including whether they may differ by route and admit randomized mixtures. A nuisance policy is a rule by which a route chooses implementation details, such as which program or seed realizes a scene, that do not change the intended content. The verifier observes the received artifact plus any declared side observations. Compare a fixed menu of such side observations with stated trust and disclosure costs; grant no free authenticated route-label bit.

\qlab{Success} Minimize the worst-case, equal-prior probability of a wrong, non-abstaining report. Take the worst case over the declared nuisance-policy families, with scene content drawn by the common rule in both routes. Require non-abstention coverage at least $\kappa\in(0,1]$ under each route and allowed nuisance policy.

\qlab{Baseline} For fully known finite laws without abstention, Proposition~\ref{prop:bayes} gives the Bayes benchmark.

\qlab{Deliverable} A uniform criterion or bound as scene and nuisance budgets vary. Exhibit both separated and indistinguishable families.

\qlab{Not asked} Hybrid histories, a separate declared class.
\end{question}

Active marking, unlike passive inference, asks which allowed changes to a representation reach the decoder after the remaining pipeline. The answer can depend on the rendering environment, quantization, and what the verifier receives. Analyze separately a source-only distinction that disappears in rasterization and an asset intervention designed to affect pixels.

\paragraph{Running task.}
The running task is a bar-chart animation with prescribed values, labels, and monotone growth, $T$ sampled frames, and $q$ quantized height levels under a fixed rasterizer. Task-preserving timing variation supplies a candidate carrier that quantization and edits can remove. Example~\ref{ex:monotone-bars} in Appendix~\ref{app:rendering} gives an explicit task-allowed trajectory. A program family can expose its timing coefficients directly. We do not assume that an image/video model realizes the same trajectories. A comparison using an image/video model must declare its decoding map and reachable interventions. The task connects observability, payload, synchronization, and reconstruction without identifying them with one another.

\begin{question}[Downstream observability]
\label{q:downstream}
\qlabrun{Ask} For a declared population of finite scenes, which task-preserving interventions remain distinguishable after a specified downstream map and edit class?

\qlab{Setting} Use one encoder/decoder family across that population. The decoder receives pixels, a key, and public population parameters, but no instance-specific clean render or coordinates. A comparison using an image/video model must declare its decoding map, reachable interventions, and any inversion access.

\qlab{Success} Different messages must remain distinguishable even when the unknown scene states differ.

\qlab{Baseline} Source changes that leave the observation unchanged give the zero-information baseline (Corollary~\ref{cor:exact-equality}).

\qlab{First case} In the bar-animation case, vary allowed timing coefficients while fixing values and labels.

\qlab{Deliverable} A carrier with a uniform recovery condition as sampling precision and edit budget vary, and a matching ambiguity example.
\end{question}

\subsection{Vector descriptions and raster exports}

\textbf{A source-format guarantee is not automatically an exported-image guarantee.}
SVG makes the observation change concrete. Source editing, canonicalization, object transformation, and raster export act on different representations. Identifiers may affect references or styling, but a comment is invisible only to an interpreter that ignores it. A bounded grammar, resource policy, and renderer give a useful first case.

\begin{question}[Vector descriptions and raster observations]
\label{q:vector-raster}
\qlabrun{Ask} Which allowed line, color, or timing parameters retain distinguishable messages in the raster observation?

\qlab{Setting} Fix a bounded script-free vector grammar. Suppose raster observation is a known, computationally simulable channel of the received marked source and other available inputs, with noise jointly independent of the message and of all inputs of the source verifier. Canonicalization (rewriting the source into a standard form) must preserve rendering in every allowed environment; values and labels stay fixed.

\qlab{Success} Retain distinguishable messages uniformly across the declared family of renderer environments and quantizers.

\qlab{Baseline} A source verifier can compose that channel with the raster decoder, matching its conditional performance (Lemma~\ref{lem:simulation}).

\qlab{First case} The first nontrivial case asks for one encoder and raster decoder across a declared family of renderer environments and quantizers, without revealing the chosen environment or the clean instance to the decoder.

\qlab{Deliverable} A structural condition or payload bound as environment variation grows, with a counterexample at failure.

\qlab{Not asked} Source recovery: report it separately when its attack channel differs.
\end{question}

\subsection{Fair comparison, payload limits, and reconstruction}

\textbf{Compare interventions with matched information and permissions.}
Classical information embedding distinguishes decoding without the unmarked reference (host-blind) from known-host decoding and studies rate--distortion--robustness tradeoffs \cite{chen2001qim}. The comparisons here must instantiate that baseline with task predicates, reachable interventions, and renderer environments. An upstream encoder may know state that an output-only encoder lacks, and both routes share a pixel-stage baseline. In Example~\ref{ex:hidden-anchor}, two scene descriptions render the same solid square but store different hidden anchor values that the baseline renderer ignores. This shows non-emulation, not a performance separation.

\paragraph{A comparison with fixed assumptions.}
Fix one visible bar, five frames, height levels $0,\ldots,8$, and fixed labels and position. The task allows monotone heights from $0$ to $8$, and the rasterizer maps distinct heights to distinct pixels. Compare
\[
 h^{(0)}=(0,2,4,6,8),\qquad h^{(1)}=(0,3,4,5,8).
\]

\noindent Under the identity attack, a pixel-only verifier distinguishes these two messages. Conditional on either route supplying the same baseline $h^{(0)}$, an output encoder allowed these pixel changes can realize this code. An ignored source comment cannot realize it, but a program exposing sampled heights can. A neural-state intervention needs a separate reachability argument.

This is message recovery, not calibrated detection: $h^{(0)}$ is also the unmarked baseline, so perfect recovery accepts that null with probability one. If a different attack specification grants clean-baseline replacement, both messages collapse to $h^{(0)}$. The example separates carrier visibility, stage access, and false-alarm control before comparing algorithms.

\begin{question}[Fair comparison across intervention stages]
\label{q:fair-comparison}
\qlabrun{Ask} When can one intervention stage simulate another stage's performance, and when does extra state or control yield a strict separation?

\qlab{Setting} Declare a common production-state family and, for each intervention stage of Section~\ref{sec:stages}, the encoder's view and allowed controls. All stages face the same task, attack, verifier, and key assumptions. Hold distortion and auxiliary-channel budgets fixed, and state computational limits. An auxiliary channel is a channel other than the video pixels, such as audio, metadata, duration, or additional precision (Definition~\ref{def:payload} and the paragraph after it).

\qlab{Success} Compare achievable triples of payload, worst-case message error, and false-alarm probability. Distinguish key-averaged criteria from fixed-key criteria.

\qlab{Baseline} The hidden-anchor example (Example~\ref{ex:hidden-anchor}, Proposition~\ref{prop:non-emulation}) supplies only non-emulation of a prescribed output. A video-only deterministic encoder cannot reproduce that selected upstream rule.

\qlab{Deliverable} A simulation result or a strict separation, with performance compared through the triples of Success. A performance separation must also rule out alternative downstream encodings.
\end{question}

\begin{question}[Recoverable payload under task constraints]
\label{q:capacity}
\qlabrun{Ask} For the bar-animation family, how does maximal recoverable payload vary with frame count, quantization, task-allowed timing variation, and edit budget?

\qlab{Setting} Let $\mathcal V_U(s)$ denote the task-allowed outputs, and restrict encoding further to outputs reachable through the chosen program or neural interventions. Use one blind decoder over the declared state family, with finite auxiliary-channel budgets.

\qlab{Success} In the deterministic zero-error starting case, take the union of observation sets over unknown states and attacks. These unions must be disjoint for different messages (Definition~\ref{def:payload}; Proposition~\ref{prop:zero-error-count}(i)).

\qlab{Baseline} Counting all quantized videos is only a loose upper bound (Proposition~\ref{prop:zero-error-count}).

\qlab{Deliverable} A parameter-dependent bound and a construction within a named reachable family.

\qlab{Not asked} Nonzero message error and false alarm under a declared null law, treated separately after the zero-error case.
\end{question}

\begin{question}[Reconstruction with access to production state]
\label{q:reconstruction}
\qlabrun{Ask} What access and budget let a reconstruction attack reduce average message recovery to a stated level while preserving the task?

\qlab{Setting} Fix a watermark scheme $(\mathrm{KeyGen},E,D)$, a declared source law, one task, and a reconstruction class with a stated cost. The message is independent and uniform in the key-averaged experiment of Section~\ref{sec:recovery}. Rerendering a marked asset, replacing it with clean material, and reconstructing a task-equivalent program grant different access. Specify which access the attack may use. Fix the attack policy before the draws, using only its declared view. The substantive case exposes only upstream state that carries the mark, such as a marked asset or latent.

\qlab{Success} Average recovery is at most $\beta\in[0,1)$, and the attack preserves the task.

\qlab{Baseline} Clean-asset replacement is a separate baseline. SHIFT instead starts from the received image, applying partial latent noising and stochastic reverse resampling with a pretrained diffusion model \cite{bao2026shift}. This image-access attack differs from rerendering known source or reusing a marked latent; comparisons must preserve those access and task-quality distinctions.

\qlab{First case} At most $b$ coefficient substitutions in a finite timing grammar, or a declared finite neural-intervention set.

\qlab{Deliverable} An access-dependent bound or separation, stating the mandatory trusted stages.

\qlab{Not asked} An ordered hierarchy of access types, which need not exist.
\end{question}
\subsection{Temporal synchronization and composed edits}
\label{sec:temporal}

\textbf{A video verifier must account for the clock it searches.}
Deletion, duplication, rate conversion, and interpolation change different relations between original time and observed frames. Periodic carriers can have phase ambiguities, although length, boundaries, or timestamps may supply origin information. For example, in an ideal scalar model, a cosine carrier of period eight frames cannot separate offsets that differ by eight frames (Example~\ref{ex:periodic} in Appendix~\ref{app:temporal}). Searching over candidate alignments creates a multiple-testing problem: validity for a fixed alignment need not imply validity for the selected report (Example~\ref{ex:adaptive}).

\begin{question}[Synchronization with calibrated search]
\label{q:synchronization}
\qlabrun{Ask}
What span $r$ of the retained original-time window and what redundancy suffice to meet the three error bounds under Success? Measure the span in original-time frame intervals.

\qlab{Setting}
Work under a stated carrier-measurement model. Redundancy is the factor by which the carrier repeats or encodes the message.

\qlab{Success}
Count abstention on the message or on the alignment as failure. Specify whether alignment means absolute origin or relative order. Require:
\begin{enumerate}[label=(\alph*),leftmargin=1.8em]
\item message error at most $\varepsilon$;
\item alignment error at most $\eta$;
\item whole-search false-alarm probability at most $\alpha$.
\end{enumerate}

\qlab{Baseline}
Repeated-message recovery and finite union-bound calibration (Proposition~\ref{prop:union}) are baselines.

\qlab{First case}
Begin with finite offsets and sampling ratios, and with at most one deletion (Definition~\ref{def:deletion-channel}). Include periodic schedules for which absolute alignment is impossible (Example~\ref{ex:periodic}).

\qlab{Deliverable}
Seek a quantitative tradeoff beyond the baselines. Bound the retained-sample count and the maximum gap separately; the maximum gap is the largest number of original-time frame intervals between consecutive retained samples. Compare program-time and latent-frame carriers with matched observations.

\qlab{Not asked}
Obtaining reliable carrier measurements from pixels is an additional obligation, not an assumed consequence of either route.
\end{question}

Robustness to individual edits is not a guarantee for their composition. Two crops expressed as fractions of the current clip consume a different budget from fractions of the original clip. For example, two maximal prefix crops that each delete a quarter of the current frames of a 64-frame clip leave 48 and then 36 frames, so together they delete seven sixteenths of the original clip (Example~\ref{ex:crop-budget}). Spatial and temporal transformations need not commute.

\begin{question}[Composed edits with a finite budget]
\label{q:composed-edits}
\qlabrun{Ask}
Can a compact statement of a stage's recovery and calibration conditions stay valid after another allowed editing stage, without enumerating every edit sequence?

\qlab{Setting}
Fix the decoder, a total cost relative to the original clip, the allowed intermediate states, and the transformed null laws.

\qlab{Success}
A probabilistic composition argument must state bounds conditional on relevant prior histories or intermediate states.

\qlab{Baseline}
The edit-distance triangle inequality tracks cost (Lemma~\ref{lem:edit-triangle}). It does not establish recovery.

\qlab{First case}
Begin with bounded frame insertions, deletions, and substitutions, then add one spatial transform.

\qlab{Deliverable}
Seek checkable sufficient conditions, or a counterexample exposing why original-input-only guarantees fail.

\qlab{Not asked}
Alignment-carrier design, the target of Question~\ref{q:synchronization}.
\end{question}

\subsection{Local contributions in hybrid production graphs}

\textbf{Evidence for one asset should retain its local meaning.}
A generated texture, a programmed trajectory, and a human annotation can all contribute to one frame. A production can also load an ancestor that stays hidden or almost entirely occluded. Watermark Anything and LoT-Pass motivate local or cross-stage questions, while leaving the meaning of a complete video contribution to be specified~\cite{wam2024,lotpass2025}.

\begin{question}[Hybrid local contribution]
\label{q:hybrid}
\qlabrun{Ask}
Which visibility and asset-separation conditions permit the acceptance tradeoff under Success, together with localized recovery?

\qlab{Setting}
Fix a retained-asset relation $R_{\mathrm{asset}}(g,v;\sigma)$ between a video $v$ and a reference asset $g$, with $\sigma$ the object that authenticates the composition masks. The verifier receives $g$ and $\sigma$.

\qlab{Success}
Require acceptance at least $\kappa\in(0,1]$ uniformly over a declared positive family, and at most $\alpha<\kappa$ over a declared relation-violating negative family.

\qlab{Baseline}
In a lossless opaque two-layer baseline, the relation holds when $v$ agrees with $g$ at source coordinates and masks authenticated in $\sigma$, for a required area and duration.

\qlab{First case}
Start with that baseline, then extend the relation to a specified bounded edit family and observation channel.

\qlab{Deliverable}
A statement of those conditions. Identical patches can correspond to several assets, so unique attribution may require abstention.

\qlab{Not asked}
Historical use: allowed reuse includes copying, and correspondence does not establish it. Production-event association is separate (Question~\ref{q:credentials}).
\end{question}

\subsection{Authentication, privacy, and deployment trust}
\label{sec:security-agenda}
\label{sec:cryptography}

\textbf{Recovering a message is only one component of an authenticated assertion.}
Issuer binding, content binding, and event binding answer different questions (Appendix~\ref{app:glossary}). Reuse of an authorized asset can be legitimate. A genuine asset credential does not automatically authenticate a new export event. Hiding a payload does not establish privacy of prompts, source code, or production history. A deployment must identify its issuance and verification keys, query access, accepted artifact relation, and allowed disclosure.

CSI (Coherence-Preserving Semantic Injection) illustrates why content binding differs from signal survival. LLM-guided semantic edits and regeneration seek altered images that a watermark detector still accepts~\cite{gao2026csi}. This tamper-acceptance objective differs from removal. Surviving evidence need not authenticate the issuer's intended content.

Language-model watermark results help separate secret-key detection, computational undetectability, and unforgeability~\cite{christ2023,lin2026signatures}. Their text-specific assumptions do not directly supply video-edit or program-reconstruction guarantees. Likewise, the quality-oracle and efficient-mixing assumptions of Watermarks in the Sand need attention before applying its limits to a particular production task~\cite{zhang2023sand}.

\begin{question}[Content-bound, privacy-aware credentials]
\label{q:credentials}
\qlabrun{Ask}
Can evidence combine honest acceptance, bounded false acceptance outside an authorized relation, and privacy relative to the received artifact?

\qlab{Setting}
For one trusted issuer, fix an efficiently checkable authorized issuance-and-derivative relation, a security parameter, and issuance-query access. The relation states which artifacts count as the issuer's outputs or their allowed derivatives. Assume the service and issuance keys stay uncompromised.

\qlab{Success}
Require three properties:
\begin{enumerate}[label=(\alph*),leftmargin=1.8em]
\item Honest acceptance at least $1-\varepsilon$, $\varepsilon<1$, for every allowed derivative, including the unedited output, using the key-averaged convention of Section~\ref{sec:recovery}. Rejection and abstention count as failures.
\item False acceptance at most $\delta\in[0,1-\varepsilon)$ outside that relation.
\item Privacy relative to the received artifact. Specify a privacy game and public leakage, including issuer, production-event type, and verification result. In a privacy game, an attacker tries to learn a private attribute from the evidence.
\end{enumerate}

\qlab{Baseline}
Compare exact-output signatures, detached credentials, and signed manifests recovered through soft bindings (Appendix~\ref{app:glossary}), with explicit repository access and trust.

\qlab{Deliverable}
Identify when an in-band watermark, embedded in the content rather than in separate metadata, improves this three-way tradeoff.

\qlab{Not asked}
An improvement claimed from message recovery alone, which does not establish it.
\end{question}

In Question~\ref{q:credentials}, verification targets association between an output and an authorized production event, under the declared relation and error guarantee. Copied evidence can preserve association to the old production event. It establishes neither a new export nor the copy's causal history.

\section{Conclusion}
\label{sec:conclusion}

AI visual media can combine image/video generation, code-based rendering, and human contributions. Their provenance needs a verification specification (Section~\ref{sec:spec-list}) linking intervention, observation, allowed changes, and the assertion under test. Source recovery need not survive rasterization, rerendering need not remove an asset mark, and contribution evidence need not establish a complete history. The resulting agenda prioritizes observable carriers, recoverable payload under task constraints, and bounded authentication.

\clearpage
\appendix
\addtocontents{toc}{\protect\setcounter{tocdepth}{1}}

\enlargethispage{1.4\baselineskip}
\paragraph{How to read the proofs.}
The appendices (Table~\ref{tab:roadmap}) state each elementary result with its hypotheses. They are standard facts or direct calculations, included so that readers can check the boundary examples; none is presented as a new theorem. The numerical examples are not empirical results.

Some proofs use numbered steps $\langle1\rangle1$, $\langle1\rangle2$, \dots{} in the style of Lamport~\cite{lamport1995write,lamport2012write}. Each step gives its justification after the marker Proof, and Let introduces a symbol. The last step, Q.E.D., proves the goal from earlier steps and names them.

\begin{table}[ht]
\centering\small
\begingroup
\renewcommand{\arraystretch}{1.12}
\setlength{\extrarowheight}{2pt}
\arrayrulecolor{black}
\begin{tabularx}{\linewidth}{>{\raggedright\arraybackslash}p{0.255\linewidth}Y>{\raggedright\arraybackslash}p{0.2\linewidth}}
\toprule
\rowcolor{TableHead}
\textbf{\sffamily Part} & \textbf{\sffamily Results} & \textbf{\sffamily Serves} \\
\midrule
Appendix~\ref{app:observation}: observation boundaries
 & Definition~\ref{def:finite-experiment}; Propositions~\ref{prop:bayes}, \ref{prop:contraction} and~\ref{prop:identical}; Lemma~\ref{lem:decoder-mass}; Remark~\ref{rem:invisible}; Corollary~\ref{cor:label-no-advantage}; Example~\ref{ex:mixture}.
 & Sections~\ref{sec:background} and~\ref{sec:agenda}; Question~\ref{q:identifiable} \\
\rowcolor{TableBand}
Appendix~\ref{app:rendering}: rendering, task preservation, payload
 & Definitions~\ref{def:render-fiber} and~\ref{def:payload}; Remarks~\ref{rem:preservation}, \ref{rem:local-linear} and~\ref{rem:worst-average}; Corollary~\ref{cor:exact-equality}; Examples~\ref{ex:monotone-bars} and~\ref{ex:hidden-anchor}; Propositions~\ref{prop:non-emulation} and~\ref{prop:zero-error-count}; Lemma~\ref{lem:simulation}.
 & Sections~\ref{sec:background} and~\ref{sec:agenda}; Questions~\ref{q:downstream} to~\ref{q:capacity} \\
Appendix~\ref{app:temporal}: temporal search and editing
 & Definitions~\ref{def:deletion-channel}, \ref{def:valid-pvalue} and~\ref{def:edit-distance}; Examples~\ref{ex:periodic}, \ref{ex:aliasing}, \ref{ex:adaptive}, \ref{ex:search-count} and~\ref{ex:crop-budget}; Proposition~\ref{prop:union}; Lemma~\ref{lem:edit-triangle}; Remarks~\ref{rem:key-average} and~\ref{rem:permutation}.
 & Sections~\ref{sec:background} and~\ref{sec:temporal}; Questions~\ref{q:synchronization} and~\ref{q:composed-edits} \\
\bottomrule
\end{tabularx}
\endgroup
\caption{Roadmap of the appendices.}
\label{tab:roadmap}
\end{table}

\section{Elementary Observation Boundaries}\label{app:observation}

\subsection*{The finite classification experiment}

\begin{definition}[Finite classification experiment]\label{def:finite-experiment}
Fix a finite observation alphabet $\Omega$, for example videos with bounded length, a fixed resolution, and quantized pixels. Let $f_0$ and $f_1$ be probability mass functions on $\Omega$ for two fully specified production experiments. Give the labels $0$ and $1$ equal prior probability, and draw the observation $o$ from $f_0$ or $f_1$ according to the label. A randomized classifier sees $o$ and chooses label $0$ with probability $d(o)\in[0,1]$, and label $1$ otherwise. Its private randomness has no further information about the label. Its error is
\begin{equation}
 \operatorname{err}(d)=\frac12\sum_{o\in\Omega}
 \bigl[f_0(o)(1-d(o))+f_1(o)d(o)\bigr].
 \label{eq:information-risk}
\end{equation}
The optimal equal-prior error is the least $\operatorname{err}(d)$ over all $d:\Omega\to[0,1]$. The total variation distance is $\TV(f_0,f_1)=\frac12\sum_{o\in\Omega}|f_0(o)-f_1(o)|$.
\end{definition}

\begin{proposition}[Equal-prior Bayes benchmark]\label{prop:bayes}
In the setting of Definition~\ref{def:finite-experiment}, the minimum of $\operatorname{err}(d)$ over all functions $d:\Omega\to[0,1]$ is
\begin{equation}
 \min_d \operatorname{err}(d)=\frac{1-\TV(f_0,f_1)}2.
 \label{eq:information-bayes}
\end{equation}
A classifier attains the minimum if $d(o)=1$ when $f_0(o)>f_1(o)$ and $d(o)=0$ when $f_0(o)<f_1(o)$, with any value at a tie. Moreover $0\leq\TV(f_0,f_1)\leq1$. The minimum is $1/2$ if $f_0=f_1$, and $0$ if $f_0$ and $f_1$ have disjoint supports.
\end{proposition}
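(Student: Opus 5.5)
The plan is to treat $\operatorname{err}(d)$ in \eqref{eq:information-risk} as a separable sum over $\Omega$ and minimize it one observation at a time. First I will rewrite the error as
\[
 \operatorname{err}(d)=\frac12\sum_{o\in\Omega}f_1(o)+\frac12\sum_{o\in\Omega}d(o)\bigl(f_1(o)-f_0(o)\bigr)
 =\frac12-\frac12\sum_{o\in\Omega}d(o)\bigl(f_0(o)-f_1(o)\bigr),
\]
using $\sum_o f_1(o)=1$. Since each $d(o)$ ranges independently over $[0,1]$ and the objective is linear in $d(o)$, the sum $\sum_o d(o)(f_0(o)-f_1(o))$ is maximized by taking $d(o)=1$ when $f_0(o)>f_1(o)$ and $d(o)=0$ when $f_0(o)<f_1(o)$. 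At a tie the coefficient is zero, so any value of $d(o)$ gives the same contribution. This yields the stated optimal classifier, including the freedom at ties. Because $\Omega$ is finite, the minimum is attained.

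Second, I will evaluate the maximal value: it equals $\sum_o (f_0(o)-f_1(o))_+$. The step needing a small argument is identifying this with $\TV(f_0,f_1)$. Both laws have total mass one, so $\sum_o (f_0-f_1)=0$, and hence $\sum_o(f_0-f_1)_+=\sum_o(f_0-f_1)_-$. Each of these therefore equals half of $\sum_o|f_0-f_1|$, which is exactly $\TV(f_0,f_1)$. Substituting gives \eqref{eq:information-bayes}.

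Finally, I will derive the bounds and the two extreme cases. Nonnegativity of $\TV$ is immediate. For the upper bound, the triangle inequality gives $\sum_o|f_0-f_1|\le\sum_o f_0+\sum_o f_1=2$, so $\TV(f_0,f_1)\le1$. If $f_0=f_1$, then $\TV=0$ and the minimum is $1/2$. If the supports are disjoint, then at each $o$ at most one of $f_0(o),f_1(o)$ is nonzero. In that case $|f_0-f_1|=f_0+f_1$ pointwise, so $\TV=1$ and the minimum is $0$.

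There is no real obstacle here. The only point needing care is the positive-part identity in the second step, which relies on both laws having equal total mass.
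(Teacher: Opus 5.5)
Your proof is correct and takes essentially the paper's route: both minimize the affine dependence on each $d(o)$ pointwise, and where the paper writes the optimum as $\frac12\sum_o\min\{f_0(o),f_1(o)\}$ and applies $\min\{a,b\}=(a+b-|a-b|)/2$, you use the equivalent equal-mass identity $\sum_o(f_0(o)-f_1(o))_+=\frac12\sum_o|f_0(o)-f_1(o)|$. One harmless slip: the constant term in your first display should be $\frac12\sum_o f_0(o)$, not $\frac12\sum_o f_1(o)$, but both equal $\frac12$, so the result is unaffected.
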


\begin{proof}
For each $o$, twice the summand of \eqref{eq:information-risk} is $f_0(o)+(f_1(o)-f_0(o))d(o)$. This is affine in $d(o)\in[0,1]$, so it is at least $\min\{f_0(o),f_1(o)\}$. Equality holds at $d(o)=0$ if $f_1(o)>f_0(o)$, at $d(o)=1$ if $f_1(o)<f_0(o)$, and anywhere at a tie. The values $d(o)$ are free, so the classifier of the statement attains $\min_d \operatorname{err}(d)=\frac12\sum_o\min\{f_0(o),f_1(o)\}$.

For $a,b\geq0$, $\min\{a,b\}=(a+b-|a-b|)/2$. Sum this identity and use $\sum_o f_0(o)=\sum_o f_1(o)=1$ to get $\sum_o\min\{f_0(o),f_1(o)\}=1-\TV(f_0,f_1)$. Because $|a-b|\leq a+b$, $\sum_o|f_0(o)-f_1(o)|\leq2$, so $0\leq\TV\leq1$. If $f_0=f_1$, then $\TV=0$. If the supports are disjoint, then $|f_0(o)-f_1(o)|=f_0(o)+f_1(o)$ for every $o$, so $\TV=1$.
\end{proof}

The formula does not estimate these distributions or guarantee that a learned detector finds the optimal rule. Different mechanisms alone supply no positive lower bound on their separation. Content, compression, and renderer choices can separate a particular dataset while saying little about authorship.

Common downstream processing gives another boundary. A resize or quantizer can be a common channel once the specification declares its randomness and inputs.

\begin{proposition}[Contraction under a common channel]\label{prop:contraction}
Let $\Omega$ and $\Gamma$ be finite alphabets. Let $K(y\mid o)\geq0$ for $o\in\Omega$ and $y\in\Gamma$, with $\sum_{y}K(y\mid o)=1$ for every $o$. Let $f_0$ and $f_1$ be probability mass functions on $\Omega$. Apply the same $K$ to both, and define $f'_i(y)=\sum_{o}K(y\mid o)f_i(o)$ for $i\in\{0,1\}$. Then:
\begin{enumerate}[label=(\alph*)]
\item $\TV(f'_0,f'_1)\leq\TV(f_0,f_1)$;
\item the optimal equal-prior error of $(f'_0,f'_1)$ on $\Gamma$ is at least the optimal equal-prior error of $(f_0,f_1)$ on $\Omega$.
\end{enumerate}
\end{proposition}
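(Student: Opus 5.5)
The plan is to prove (a) directly from the definitions and then read off (b) from Proposition~\ref{prop:bayes}.

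For (a), I would fix $y\in\Gamma$ and write
\[
f'_0(y)-f'_1(y)=\sum_{o\in\Omega}K(y\mid o)\bigl(f_0(o)-f_1(o)\bigr).
\]
The triangle inequality and $K(y\mid o)\geq0$ then give
\[
|f'_0(y)-f'_1(y)|\leq\sum_{o}K(y\mid o)\,|f_0(o)-f_1(o)|.
\]
Next I sum over $y\in\Gamma$. Both alphabets are finite, so the order of summation can be exchanged. The row-sum condition $\sum_yK(y\mid o)=1$ makes the right side collapse to $\sum_o|f_0(o)-f_1(o)|$. Halving both sides gives $\TV(f'_0,f'_1)\leq\TV(f_0,f_1)$.

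Before applying Proposition~\ref{prop:bayes} on $\Gamma$, I will note that $f'_0$ and $f'_1$ are genuine probability mass functions. Nonnegativity is immediate from $K\geq0$ and $f_i\geq0$. Total mass is $\sum_y f'_i(y)=\sum_o f_i(o)\sum_yK(y\mid o)=1$, again by exchanging finite sums.

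For (b), Proposition~\ref{prop:bayes} applies on each alphabet separately. It says the optimal equal-prior error is $(1-\TV(f'_0,f'_1))/2$ on $\Gamma$ and $(1-\TV(f_0,f_1))/2$ on $\Omega$. The map $t\mapsto(1-t)/2$ is decreasing, so part (a) yields the inequality.

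As a cross-check, and as an alternative route to (b), I would add the operational argument. Any classifier $d'$ on $\Gamma$ induces a classifier on $\Omega$ by $d(o)=\sum_yK(y\mid o)d'(y)\in[0,1]$. Substituting into \eqref{eq:information-risk} shows $\operatorname{err}(d)=\operatorname{err}(d')$. Taking the minimum over $d'$ gives (b) without (a).

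There is no real obstacle here; the only care needed is the exchange of summations and the use of the row-sum normalization.
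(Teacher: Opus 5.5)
Your proof is correct and follows the paper's argument step for step: the triangle-inequality and row-sum computation for (a), the check that $f'_0,f'_1$ are probability mass functions, and then Proposition~\ref{prop:bayes} with the decreasing map $t\mapsto(1-t)/2$ for (b). Your extra operational cross-check is also valid, though the paper does not use it: pulling $d'$ back to $d(o)=\sum_yK(y\mid o)d'(y)$ gives $\operatorname{err}(d)=\operatorname{err}(d')$, and this proves (b) directly without (a).
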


\begin{proof}[Proof of Proposition~\ref{prop:contraction}]
\begin{steps}
\step{1}{1} Part (a) holds.
\pf \Let $\Delta(o)=f_0(o)-f_1(o)$, so $f'_0(y)-f'_1(y)=\sum_oK(y\mid o)\Delta(o)$. Then $2\TV(f'_0,f'_1)=\sum_y\bigl|\sum_oK(y\mid o)\Delta(o)\bigr|\leq\sum_y\sum_oK(y\mid o)\,|\Delta(o)|$, by the triangle inequality and $K\geq0$. Interchange the finite sums. Because $\sum_yK(y\mid o)=1$, the right side is $\sum_o|\Delta(o)|=2\TV(f_0,f_1)$.
\step{1}{2} Part (b) holds.
\pf Because $K\geq0$, $\sum_yK(y\mid o)=1$ and $\sum_o f_i(o)=1$, $f'_0$ and $f'_1$ are probability mass functions on $\Gamma$. By \eqref{eq:information-bayes} on $\Omega$ and on $\Gamma$, the optimal errors are $(1-\TV(f_0,f_1))/2$ and $(1-\TV(f'_0,f'_1))/2$. The map $t\mapsto(1-t)/2$ is decreasing, so \sref{1}{1} gives the claim.
\qedstep{1}{3}
\pf By \sref{1}{1} and \sref{1}{2}.
\end{steps}
\end{proof}

The shared-channel assumption matters. Label-dependent processing could introduce a label signal of its own. Part (b) concerns optimal classification, not the accuracy of a particular implementation. A practical classifier can benefit from preprocessing even when the observation contains no additional information. This is a reason to separate information limits from engineering comparisons.

\subsection*{A source mark must pass through execution}

\begin{lemma}[Output probabilities of a decoder]\label{lem:decoder-mass}
Let $\mathcal M$ be a finite message set. Let $D$ be a randomized decoder with values in $\mathcal M\cup\{\bot\}$, where $\bot$ means abstention. For every fixed input $x$ of $D$, $\sum_{m\in\mathcal M}\Prb\{D(x)=m\}\leq1$. Equality holds if and only if $\Prb\{D(x)=\bot\}=0$.
\end{lemma}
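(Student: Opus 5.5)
The plan is to read Lemma~\ref{lem:decoder-mass} as a statement about one probability distribution. For fixed input $x$, the decoder's output $D(x)$ is a random variable taking values in the finite set $\mathcal M\cup\{\bot\}$. The only randomness is the decoder's private randomness, since $x$ is held fixed. So the events $\{D(x)=m\}$ for $m\in\mathcal M$, together with $\{D(x)=\bot\}$, form a partition of the sample space into finitely many pairwise disjoint events.

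The first step is to apply finite additivity of $\Prb$ to this partition:
\[
 \sum_{m\in\mathcal M}\Prb\{D(x)=m\}+\Prb\{D(x)=\bot\}=\Prb\{D(x)\in\mathcal M\cup\{\bot\}\}=1.
\]
The second step rearranges this to
\[
 \sum_{m\in\mathcal M}\Prb\{D(x)=m\}=1-\Prb\{D(x)=\bot\}.
\]
Since $\Prb\{D(x)=\bot\}\geq0$, the inequality follows. Moreover, equality holds exactly when $\Prb\{D(x)=\bot\}=0$, which gives both directions of the ``if and only if'' at once.

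There is no real obstacle. The one point to state explicitly is that $\bot\notin\mathcal M$, so that the events are disjoint and $\bot$ is not double-counted. I would also note that $x$ is fixed, so that no averaging over inputs enters. Finiteness of $\mathcal M$ ensures that the sum is a finite sum of probabilities of disjoint events. No earlier result of the paper is needed.
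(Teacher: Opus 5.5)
Your proof is correct and takes the same approach as the paper's. Both observe that, for fixed $x$, the events $\{D(x)=m\}$ for $m\in\mathcal M$ and $\{D(x)=\bot\}$ are disjoint events of the decoder's private randomness whose probabilities sum to one; your rearrangement $\sum_{m\in\mathcal M}\Prb\{D(x)=m\}=1-\Prb\{D(x)=\bot\}$ spells out the inequality and the equality condition that the paper leaves implicit.
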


\begin{proof}
For fixed $x$, the outputs in $\mathcal M$ and $\bot$ are exclusive events of the private randomness of $D$. Their probabilities sum to one.
\end{proof}

\begin{proposition}[Identical observation]\label{prop:identical}
Let $\mathcal M$ be a finite message set with $N\geq1$ elements. Let the random message $M$ have any prior $\rho$ on $\mathcal M$. Let $X$, with values in a finite set, collect the pixels, the key, and every other input of the verifier. Let $D$ be a randomized decoder as in Lemma~\ref{lem:decoder-mass} that maps each value of $X$ to an element of $\mathcal M\cup\{\bot\}$. Assume:
\begin{enumerate}[label=(\roman*)]
\item $X$ is independent of $M$;
\item the private randomness of $D$ is independent of $(M,X)$.
\end{enumerate}
Let $\theta_m=\Prb\{D(X)=m\}$. Then $\sum_{m\in\mathcal M}\theta_m\leq1$, with equality if $\Prb\{D(X)=\bot\}=0$, and
\begin{equation}
 \Prb\{D(X)=M\}=\sum_{m\in\mathcal M}\rho(m)\,\theta_m\leq\max_{m\in\mathcal M}\rho(m).
 \label{eq:information-invisible}
\end{equation}
A constant decoder that outputs a most probable message attains the bound, which is $1/N$ for the uniform prior.
\end{proposition}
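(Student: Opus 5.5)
The proof conditions on $X$ and uses the two independence hypotheses to decouple the decoder's output from the message. First, the claim $\sum_m\theta_m\leq1$ (with equality when abstention has probability zero) follows directly from Lemma~\ref{lem:decoder-mass}. Apply that lemma for each fixed value $x$ of $X$, so that $\sum_m\Prb\{D(x)=m\}\leq1$. Then average over the law of $X$, which is legitimate because $D$'s private randomness is independent of $X$ by (ii). Summing the finitely many terms gives $\sum_m\theta_m=1-\Prb\{D(X)=\bot\}$.

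The central step is the identity in \eqref{eq:information-invisible}. Write $D(X)=g(X,W)$ with $W$ the private randomness. Then expand
\[
\Prb\{D(X)=M\}=\sum_{m}\Prb\{M=m,\ g(X,W)=m\}.
\]
Conditions (i) and (ii) together say that $M$ is independent of the pair $(X,W)$. To see this, factor the joint law: $\Prb\{M=m,X=x,W\in B\}=\Prb\{M=m,X=x\}\Prb\{W\in B\}$ by (ii), and this equals $\rho(m)\Prb\{X=x\}\Prb\{W\in B\}$ by (i). This is exactly the joint independence needed. Hence each summand factors as $\rho(m)\Prb\{g(X,W)=m\}=\rho(m)\theta_m$.

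This factorization is the only real obstacle. The point is to verify that pairwise-looking hypotheses (i) and (ii) yield independence of $M$ from $(X,W)$ rather than merely from each component. The chain-rule argument above handles it, since everything is finite apart from $W$, and $W$ enters only through its marginal.

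The bound then follows because $\sum_m\rho(m)\theta_m\leq\max_m\rho(m)\sum_m\theta_m\leq\max_m\rho(m)$, using $\theta_m\geq0$ and the first part. For attainment, let $D\equiv m^\ast$ with $m^\ast\in\arg\max\rho$. Then $\theta_{m^\ast}=1$, the other $\theta_m$ are zero, and the success probability is $\rho(m^\ast)$. Under the uniform prior this equals $1/N$.
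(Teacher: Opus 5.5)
Your proposal is correct and follows the paper's proof essentially step for step: Lemma~\ref{lem:decoder-mass} is applied at each fixed $x$ and averaged using (ii), and mutual independence of $M$, $X$ and the private randomness gives $\Prb\{M=m,\,D(X)=m\}=\rho(m)\theta_m$, followed by the bound via $\theta_m\geq0$ and the constant decoder for attainment. Your explicit factorization showing that (i) and (ii) yield this mutual independence spells out a step the paper only asserts.
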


\begin{proof}[Proof of Proposition~\ref{prop:identical}]
\begin{steps}
\step{1}{1} $\sum_m\theta_m\leq1$, with equality if $\Prb\{D(X)=\bot\}=0$.
\pf By (ii), $\Prb\{D(X)=m\mid X=x\}=\Prb\{D(x)=m\}$ for every value $x$ of $X$ with $\Prb\{X=x\}>0$, so $\theta_m=\sum_x\Prb\{X=x\}\Prb\{D(x)=m\}$. Interchange the finite sums, apply Lemma~\ref{lem:decoder-mass} to each $x$, and use $\sum_x\Prb\{X=x\}=1$. If $\Prb\{D(X)=\bot\}=0$, then $\Prb\{D(x)=\bot\}=0$ for every $x$ with $\Prb\{X=x\}>0$, and the lemma gives equality.
\step{1}{2} $\Prb\{D(X)=M\}=\sum_m\rho(m)\theta_m\leq\max_m\rho(m)$.
\pf By (i) and (ii), $M$, $X$ and the private randomness of $D$ are mutually independent. The output $D(X)$ is a function of $X$ and that private randomness, so $\Prb\{M=m,\,D(X)=m\}=\rho(m)\,\theta_m$. These events are disjoint for distinct $m$, with union $\{D(X)=M\}$. Because $\theta_m\geq0$, the sum over $m$ is at most $(\max_m\rho(m))\sum_m\theta_m$, and \sref{1}{1} finishes.
\step{1}{3} A constant decoder attains the bound.
\pf Let $m^\ast$ satisfy $\rho(m^\ast)=\max_m\rho(m)$, and let $D$ always output $m^\ast$. This $D$ satisfies (ii), and $\theta_{m^\ast}=1$ and $\theta_m=0$ otherwise. So $\Prb\{D(X)=M\}=\rho(m^\ast)$.
\qedstep{1}{4}
\pf \sref{1}{1}, \sref{1}{2} and \sref{1}{3} give the claims.
\end{steps}
\end{proof}

For a source mark, fix a program $P$, an environment $z$, and a key $k$. Let the renderer $R$ be deterministic, let $W_k(P,m)$ encode the message $m\in\mathcal M$ of Proposition~\ref{prop:identical}, and suppose that $R(W_k(P,m),z)=o_0$ for every $m$. Then $X=(o_0,k)$ is constant, so independent of $M$. Let a decoder receive only these pixels and the key, with private randomness independent of $M$. By Proposition~\ref{prop:identical}, it succeeds with probability at most $\max_m\rho(m)$. For a uniform message this is $1/N$.

A stochastic downstream channel gives the same conclusion if its finite-valued output $Y$ has the same law given every message of positive prior probability, for the fixed key and other fixed verifier inputs. Then $X$, which collects $Y$ and the fixed inputs, is independent of $M$, and Proposition~\ref{prop:identical} applies to a decoder whose private randomness is independent of $(M,X)$. If the key or side information reveals the selected message, then $X$ depends on $M$, and the proposition does not apply.

\begin{remark}[Which edits can be invisible]\label{rem:invisible}
Whether a comment, an identifier change, or a latent perturbation leaves the observation unchanged depends on the specified renderer or decoder. A comment illustrates the identical-pixel premise (Corollary~\ref{cor:exact-equality}) only if the renderer neither executes nor inspects it. An identifier change can affect a reflective program, and an executed numerical literal can change visible motion. For a deterministic renderer, the relevant object is a render fiber (Definition~\ref{def:render-fiber}): the set of source descriptions that give the same observation. It is not a syntactic category called ``code watermark.'' A neural latent intervention can likewise disappear under its decoder, depending on the specified map.
\end{remark}

The nonconstant case needs a second specification: which changes the animation may tolerate (Remark~\ref{rem:preservation}). Exact equality of output pixels is stronger and can eliminate a proposed carrier. Neither the encoder's access to the source code nor its knowledge of a desired trajectory grants the verifier that same knowledge. A blind detection problem should explicitly withhold clean frames and scene coordinates. Otherwise the decoder may recover differences that privileged side information supplies, not information that the observation carries.

\subsection*{Conditional equality is not marginal equality}

\begin{corollary}[Identical label-conditional laws]\label{cor:label-no-advantage}
Let the random label $I$ have a finite range and law $\rho$. Let $X$ be a random variable with values in a finite set, and assume that $X$ has the same conditional law given every label of positive prior probability. Let $D$ be a randomized decoder of the label, as in Lemma~\ref{lem:decoder-mass}, with values in the labels and $\bot$, whose private randomness is independent of $(I,X)$. Then $\Prb\{D(X)=I\}\leq\max_i\rho(i)$, the success probability of the best constant guess.
\end{corollary}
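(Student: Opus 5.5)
The plan is to reduce Corollary~\ref{cor:label-no-advantage} directly to Proposition~\ref{prop:identical}, with the label $I$ playing the role of the message $M$. Take $\mathcal M$ to be the finite range of $I$, so $N=|\mathcal M|\geq1$ and $\rho$ is the prior of the proposition. The decoder $D$ already has the required form: it maps values of $X$ into $\mathcal M\cup\{\bot\}$, and its private randomness is independent of $(I,X)$. That is hypothesis~(ii) of Proposition~\ref{prop:identical}. Once hypothesis~(i) is verified, inequality~\eqref{eq:information-invisible} gives $\Prb\{D(X)=I\}\leq\max_i\rho(i)$. Step~$\langle1\rangle3$ of that proof shows that this value is the success probability of a constant decoder outputting a most probable label, which is the comparison the statement makes.

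The only real work is hypothesis~(i), that $X$ is independent of $I$. Let $Q$ be the common conditional law of $X$ given $I=i$ for every $i$ with $\rho(i)>0$. Then for every value $x$ and every label $i$ with $\rho(i)>0$, we have $\Prb\{I=i,X=x\}=\rho(i)Q(x)$. Summing over $i$ gives $\Prb\{X=x\}=Q(x)$, so the joint law factorizes. Labels with $\rho(i)=0$ contribute zero on both sides, so the factorization holds for all pairs and $X$ is independent of $I$.

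I expect no serious obstacle. The point needing care is exactly this treatment of zero-prior labels. Conditional laws given such labels are undefined, which is why the hypothesis quantifies only over labels of positive prior probability, and the factorization argument must remain valid without them. A short remark should also note that the finite-valued assumption on $X$ matches the finite-set requirement of Proposition~\ref{prop:identical}, so nothing beyond sums over finite sets is needed.
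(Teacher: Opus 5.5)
Your proposal is correct and matches the paper's proof. Like the paper, you factor the joint law through the common conditional law, sum to identify it with the marginal of $X$, note that zero-prior labels contribute zero, and then apply Proposition~\ref{prop:identical} with $M=I$.
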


\begin{proof}
Let $c(x)$ be the common value of $\Prb\{X=x\mid I=i\}$ over the labels $i$ with $\rho(i)>0$. Summing $\Prb\{I=i,X=x\}=\rho(i)\,c(x)$ over these $i$ gives $\Prb\{X=x\}=c(x)$. Thus $\Prb\{I=i,X=x\}=\rho(i)\,\Prb\{X=x\}$ for every $i$, also if $\rho(i)=0$. So $X$ is independent of $I$, and Proposition~\ref{prop:identical} with $M=I$ applies.
\end{proof}

\begin{example}[Two author classes, one renderer]\label{ex:mixture}
Let $\Pi$ be a finite set of pairs $(P,z)$ of a program and an environment. Let $\Omega$ be a finite set of videos, the observation alphabet of Definition~\ref{def:finite-experiment}. Let $K(v\mid P,z)$ be the conditional output law of the renderer on $\Omega$. It does not depend on who supplied $P$, so a human and an LLM who supply the same $(P,z)$ give equal conditional video laws. Let $\mu_{\mathrm{hum}}$ and $\mu_{\mathrm{LLM}}$ be the author laws of a human and of an LLM on $\Pi$; they can differ. For an author law $\mu$, the marginal video law is $\bar\mu(v)=\sum_{(P,z)\in\Pi}K(v\mid P,z)\,\mu(P,z)$.

Fix $z$ and two programs $P_b$ and $P_w$, and let each author law select programs paired with this $z$. The renderer outputs an all-black clip with probability one for $P_b$, and an all-white clip with probability one for $P_w$. Let $\mu_{\mathrm{hum}}$ select $P_b$ with probability $3/4$, and $\mu_{\mathrm{LLM}}$ with probability $1/4$; each class selects $P_w$ otherwise. The black clip then has probabilities $3/4$ and $1/4$, and the white clip $1/4$ and $3/4$. Each pair differs by $1/2$, so $\TV(\bar\mu_{\mathrm{hum}},\bar\mu_{\mathrm{LLM}})=(1/2+1/2)/2=1/2$. By Proposition~\ref{prop:bayes} with $f_0=\bar\mu_{\mathrm{hum}}$ and $f_1=\bar\mu_{\mathrm{LLM}}$ on $\Omega$, the equal-prior optimal authorship error is $(1-1/2)/2=1/4$.
\end{example}

Equal conditional kernels, without information about the author laws, determine neither equality nor inequality of the marginal laws. Different author laws can remain distinguishable, as in Example~\ref{ex:mixture}. They can also collapse under the same kernel, for instance one that renders $P_b$ and $P_w$ to the same clip. A first analysis should keep author sampling separate from rendering. It should also distinguish recovery of a deliberately inserted mark from passive inference about the author's identity. This artificial sampling example establishes no real authorship detector; it shows exactly why a conditional equality cannot justify a marginal impossibility claim.
\Needspace{9\baselineskip}
\section{Rendering, Task Preservation, and Payload Bounds}\label{app:rendering}

\subsection*{Render fibers and three preservation requirements}

\begin{definition}[Render fiber]\label{def:render-fiber}
Let $R$ be a deterministic renderer that maps a program $P$ and an environment $z$ to an output $R(P,z)$ in a finite set $\Omega$ of declared frame representations. Equality in $\Omega$ includes quantization.
\begin{enumerate}[label=(\alph*),leftmargin=2em]
\item The \emph{render fiber} of $P$ at $z$ is
\[
 [P]_{R,z}=\{P':R(P',z)=R(P,z)\}.
\]
\item For a nonempty set $\mathcal Z$ of allowed environments, the \emph{uniform render fiber} is $[P]_{R,\mathcal Z}=\bigcap_{z\in\mathcal Z}[P]_{R,z}$.
\item For a neural state $\Lambda$, a deterministic decoder $F$, and an environment $z$, the analogous fiber is $[\Lambda]_{F,z}=\{\Lambda':F(\Lambda',z)=F(\Lambda,z)\}$.
\end{enumerate}
\end{definition}

Equality at every allowed environment is stronger than membership in the fiber at one environment.\footnote{A fiber of a map is the set of inputs that produce one given output.} Two programs can agree at one environment and differ at another. A finite suite of functional tests does not establish either form of render equivalence unless the tested environments exhaust the allowed set or a proof shows the suite complete for the declared semantics.

\begin{remark}[Three preservation requirements]\label{rem:preservation}
A verification specification should state three requirements separately.
\begin{enumerate}[label=(\alph*),leftmargin=2em]
\item \emph{Program semantic equivalence} under a declared execution model, which concerns declared outputs and interactions.
\item \emph{Exact render equivalence}: membership in a render fiber or a uniform render fiber (Definition~\ref{def:render-fiber}).
\item \emph{Task preservation}: the declared predicate $U_x(s_0,s')=1$ for a baseline state $s_0$ and a candidate state $s'$.
\end{enumerate}
If rendered frames are the outputs, a movement that changes them changes the semantics, yet a video task might allow that motion to vary. Conversely, a program can produce identical frames and differ in logs or network requests, so a frame-only verifier sees an equivalence that a verifier observing execution does not.

Task preservation need not define a transitive equivalence relation. For example, let the states be bar heights, and let $U_x(s_0,s')=1$ exactly when $|s'-s_0|\leq 1/2$. Then $U_x(0,1/2)=U_x(1/2,1)=1$, but $U_x(0,1)=0$.
\end{remark}

The predicate $U_x(s_0,s')$ states the freedom that the task allows. An explanatory chart might require exact data values, readable labels, and event order, and tolerate small changes in motion. A simulation might require conservation of specified quantities and forbid any coordinate perturbation. A cinematic scene could allow a much broader appearance range. A generic perceptual similarity score alone cannot reveal these choices, yet they determine the visible freedom for encoding information.

For example, changing a circle's path $c(t)$ to $c(t)+u_k(t)$, with a small displacement $u_k(t)$ indexed by $k$, can preserve object identity and narrated event order while it changes every affected frame. It is task-preserving only under a task that allows it, and it is not exact preservation of the original video. The same holds for a neural latent perturbation judged visually acceptable. Quality tolerance creates a different constraint from exact decoded equality.

\subsection*{Applying the exact-equality obstruction}

The simplest obstruction needs no hypothesis about the architecture of a watermark algorithm.

\begin{corollary}[Exact-equality obstruction]\label{cor:exact-equality}
Fix a program $P$, an environment $z$, a key $k$, and a deterministic renderer $R$. Let $M$ be a random message with any prior in a finite message set $\mathcal M$ of size $N\geq 1$. Let an encoder satisfy $W_k(P,m)\in[P]_{R,z}$ for every $m\in\mathcal M$, so that
\[
 R(W_k(P,m),z)=R(P,z)=o_0
 \qquad\text{for every }m\in\mathcal M.
\]
Let a pixel-only decoder $D_k$ receive $o_0$ and $k$, use private randomness independent of $M$, and possibly abstain. Then
\[
 \Prb\{D_k(o_0)=M\}\leq\max_{m\in\mathcal M}\Prb\{M=m\},
\]
and this bound equals $1/N$ for the uniform prior. In the neural case, replace $P$, $R$, and $W_k$ by a neural state $\Lambda$, a deterministic decoder $F$, and an encoder $E_k$ with $E_k(\Lambda,m)\in[\Lambda]_{F,z}$ for every $m\in\mathcal M$. Then $o_0=F(\Lambda,z)$, and the same bound holds for a pixel-only decoder with the same properties.
\end{corollary}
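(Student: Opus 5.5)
The plan is to reduce the corollary directly to Proposition~\ref{prop:identical}, exactly as in the paragraph after its proof, by exhibiting the verifier's full input as a constant random variable. Set $X=(o_0,k)$. Since $P$, $z$, $k$ are fixed and $R$ is deterministic, the fiber hypothesis $W_k(P,m)\in[P]_{R,z}$ gives, by Definition~\ref{def:render-fiber}(a), $R(W_k(P,m),z)=R(P,z)=o_0$ for every $m\in\mathcal M$. So the pixels reaching the decoder do not depend on the realized message. Then $X$ takes the single value $(o_0,k)$ with probability one, which lies in a finite set, and a constant random variable is independent of every random variable, in particular of $M$. This verifies hypothesis (i).

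For hypothesis (ii), the statement only says the private randomness of $D_k$ is independent of $M$. Because $X$ is almost surely constant, independence from $M$ upgrades to independence from the pair $(M,X)$: for every event $A$ of the private randomness, $\Prb\{A,M=m,X=x\}$ equals $\Prb\{A,M=m\}$ when $x=(o_0,k)$ and equals $0$ otherwise, and this factorizes. This small upgrade is the only point needing care, and I expect it to be the main (though minor) obstacle. The map $D_k(\cdot)=D(\cdot,k)$ is a randomized decoder into $\mathcal M\cup\{\bot\}$ as in Lemma~\ref{lem:decoder-mass}, so abstention is permitted. Proposition~\ref{prop:identical} then yields $\Prb\{D_k(o_0)=M\}\leq\max_m\Prb\{M=m\}$, which equals $1/N$ under the uniform prior.

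The neural case is the same argument with substitutions. Replace $[P]_{R,z}$ by the fiber $[\Lambda]_{F,z}$ of Definition~\ref{def:render-fiber}(c), with $F$ deterministic and $E_k(\Lambda,m)\in[\Lambda]_{F,z}$. This gives $F(E_k(\Lambda,m),z)=F(\Lambda,z)=o_0$ for all $m$. The verifier input is again the constant $(o_0,k)$, and the two hypothesis checks and the invocation of Proposition~\ref{prop:identical} carry over verbatim. I would also note that no property of the encoder beyond fiber membership is used, which is why the corollary needs no architectural hypothesis.
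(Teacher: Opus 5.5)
Your proposal is correct and follows the paper's proof: both take the constant verifier input $X=(o_0,k)$, note that it is independent of $M$, upgrade the independence of the private randomness from $M$ to independence from $(M,X)$, and apply Proposition~\ref{prop:identical}, handling the neural case by the same substitution. Your explicit factorization check only spells out a step that the paper states in one line.
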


\begin{proof}
The decoder input $X=(o_0,k)$ takes the same value for every message, so $X$ is independent of $M$. Private randomness independent of $M$ is then independent of $(M,X)$. Proposition~\ref{prop:identical} applied to $X$ gives inequality~\eqref{eq:information-invisible}, which is the bound. In the neural case $F(E_k(\Lambda,m),z)=F(\Lambda,z)$ for every $m$, so the same argument applies.
\end{proof}

The corollary does not apply when:
\begin{itemize}[leftmargin=1.6em]
\item the environment $z$ carries a message;
\item the decoder receives message-correlated side information;
\item the renderer emits additional marked pixels, so that $R(W_k(P,m),z)$ depends on $m$;
\item the verifier observes container metadata or execution traces, which are different observations;
\item the renderer is random; the argument then needs equal conditional observation laws for all messages, not equality at one seed.
\end{itemize}
The corollary does not say that a watermark cannot identify source code: marked sources can be different elements of one render fiber. In the neural case the obstruction likewise lies in the observation map, without a claim that learned decoding and program execution have identical geometry, computational costs, or available interventions.

\begin{example}[Monotone bars]\label{ex:monotone-bars}
Let $d>0$ be the duration of the clip, and let bar $i$ in a finite family have specified final height $b_i>0$. A task allows any monotone (nondecreasing) growth of bar $i$ from height $0$ to $b_i$ during $[0,d]$, with unchanged labels. The baseline trajectory is $h_i(t)=b_it/d$. For $0<\nu\leq(2\pi)^{-1}$, let
\[
 \widetilde h_i(t)=b_i\bigl[t/d+\nu\sin(2\pi t/d)\bigr],
 \qquad 0\leq t\leq d.
\]
Then (a)--(c) hold.
\begin{enumerate}[label=(\alph*),leftmargin=2em]
\item $\widetilde h_i(0)=0$ and $\widetilde h_i(d)=b_i$.
\item $\widetilde h_i'(t)=(b_i/d)\bigl(1+2\pi\nu\cos(2\pi t/d)\bigr)\geq0$, so $\widetilde h_i$ is nondecreasing with values in $[0,b_i]$ and satisfies the task. Also $\widetilde h_i(t)-h_i(t)=b_i\nu\sin(2\pi t/d)$ is nonzero except at $t\in\{0,d/2,d\}$, so the alternative changes intermediate heights.
\item Let $R$ be a renderer that draws the bars deterministically from the heights at $T$ sample times, using a fixed deterministic quantizer with $q$ levels whose cells are intervals, with all other state and the environment $z$ fixed. Suppose that the quantized sampled heights of $h_i$ and $\widetilde h_i$ agree at every sample time for every bar. Then the two videos are identical, so the programs that implement the baseline and alternative trajectories lie in one render fiber. Corollary~\ref{cor:exact-equality} then applies to an encoder that selects between them by message. The hypothesis holds if every sampled baseline height of bar $i$ lies farther than $b_i\nu$ from each quantization boundary, because $|\widetilde h_i(t)-h_i(t)|\leq b_i\nu$. Equivalently, the video changes only if some sampled height crosses a quantization boundary.
\end{enumerate}
\end{example}

\begin{proof}[Verification of Example~\ref{ex:monotone-bars}]
\begin{steps}
\step{1}{1} Parts (a) and (b) hold.
\pf Because $\sin0=\sin2\pi=0$, (a) holds. In (b), $\cos\geq-1$ and $\nu\leq(2\pi)^{-1}$ make the derivative nonnegative, so $\widetilde h_i$ is nondecreasing and by (a) has values in $[0,b_i]$. The sine vanishes on $[0,d]$ exactly at $0$, $d/2$, $d$.
\step{1}{2} Part (c) holds.
\pf The video is a deterministic function of the quantized sampled heights and the other fixed state, so equal quantized sampled heights give equal videos. Thus the two programs have the same output at $z$ and lie in one render fiber, as Corollary~\ref{cor:exact-equality} requires. The last sentence of (c) is the contrapositive. If, for every bar $i$, every sampled baseline height lies farther than $b_i\nu$ from each quantization boundary, then since $|\widetilde h_i(t)-h_i(t)|=b_i\nu|\sin(2\pi t/d)|\leq b_i\nu$, both heights lie in the same quantization cell, so the hypothesis of (c) holds.
\qedstep{1}{3}
\pf By \sref{1}{1} and \sref{1}{2}.
\end{steps}
\end{proof}

A program can implement the alternative trajectory directly, but whether a specified neural encoder can reach it is a separate controllability question, not a consequence of the task predicate. Conflating acceptable continuous trajectories, reachable production states, and distinct quantized observations would overstate the payload.

\subsection*{Comparing intervention stages without hiding access}

\begin{proposition}[Non-emulation]\label{prop:non-emulation}
Let $\mathcal S$ be a set of production states, let $V(s)\in\Omega$ be the ordinary output video of $s\in\mathcal S$, and let $J_k(s,m)\in\Omega$ be the desired marked output for a key $k$ and a message $m$. A video-only encoder is a deterministic function $f(v,k,m)$ of a video, the key, and the message. Suppose that $V(s)=V(s')$ and $J_k(s,m)\neq J_k(s',m)$ for some $k$, $m$, and $s,s'\in\mathcal S$. Then, for these $k$ and $m$, no video-only encoder satisfies $f(V(s''),k,m)=J_k(s'',m)$ for all $s''\in\mathcal S$.
\end{proposition}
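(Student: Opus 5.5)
The argument is a direct contradiction that uses only the determinism of $f$. No earlier probabilistic result is needed, because the statement concerns exact emulation of a prescribed output map rather than recovery probabilities.

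Fix the key $k$, the message $m$, and the states $s,s'\in\mathcal S$ supplied by the hypothesis, so that $V(s)=V(s')$ and $J_k(s,m)\neq J_k(s',m)$. Suppose, for contradiction, that some video-only encoder $f$ satisfies $f(V(s''),k,m)=J_k(s'',m)$ for every $s''\in\mathcal S$. Instantiate this identity at $s''=s$ and at $s''=s'$. This gives $J_k(s,m)=f(V(s),k,m)$ and $J_k(s',m)=f(V(s'),k,m)$.

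Since $f$ is a function of its three arguments and $V(s)=V(s')$, the two right-hand sides have identical inputs $(V(s),k,m)=(V(s'),k,m)$, so they are equal. Hence $J_k(s,m)=J_k(s',m)$, which contradicts the hypothesis, so no such $f$ exists. In fibre language, as in Definition~\ref{def:render-fiber}, $J_k(\cdot,m)$ would have to be constant on each fibre of $V$, and the hypothesis exhibits a fibre on which it is not.

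There is no real obstacle. The only care needed is to use that $f$ is deterministic, so that equal inputs give equal outputs. For a randomized encoder the analogous statement would require equal output laws, and the exact-output requirement would need reformulation; that case is outside the statement. I would also remark that the conclusion concerns only these particular $k$ and $m$ and asserts non-emulation of the prescribed map $J_k$, not any separation in payload or error, consistent with the Baseline of Question~\ref{q:fair-comparison}.
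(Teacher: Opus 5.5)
Your proof is correct and is essentially the paper's argument. Since $V(s)=V(s')$, the deterministic encoder must return a single value for both states, and that value cannot equal both of the distinct targets $J_k(s,m)$ and $J_k(s',m)$.
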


\begin{proof}
Because $V(s)=V(s')$, the encoder returns one value for $s$ and $s'$. That value cannot equal both $J_k(s,m)$ and $J_k(s',m)$.
\end{proof}

An upstream intervention can depend on information that the final video lacks. The proposition states this as a direct equality conflict, not as a claim about any particular neural method.

\begin{example}[Hidden anchor]\label{ex:hidden-anchor}
Let $s_0$ and $s_1$ be scene descriptions of one solid square with disjoint nonempty left and right halves. Each $s_i$ also stores a hidden anchor $i\in\{0,1\}$ that the baseline renderer ignores, so $V(s_0)=V(s_1)$. The upstream rule draws a nonempty mark in a color different from that of the square. The mark may depend on the key and the message, and it lies inside the left half for anchor $0$ and inside the right half for anchor $1$. So $J_k(s_0,m)$ shows the mark color in the left half, where $J_k(s_1,m)$ shows the color of the square. Thus $J_k(s_0,m)\neq J_k(s_1,m)$ for every $k$ and $m$. By Proposition~\ref{prop:non-emulation}, no video-only deterministic encoder reproduces this rule.
\end{example}

Likewise, neural states can differ while they produce the same baseline observation, so a lost distinction can prevent a downstream encoder from recreating a state-dependent intervention. This does not establish a payload or robustness advantage: another downstream rule might do equally well, and an advantage requires comparing optimized feasible schemes.

A narrower comparison is valid. Suppose a downstream encoder
\begin{itemize}[leftmargin=1.6em]
\item has the same production state and key,
\item can compute the intended marked output, and
\item may apply a transformation that carries the baseline output to the intended marked output.
\end{itemize}
Then it can match that output. These assumptions are substantive: a rule may prohibit output replacement, computational access may differ, or the allowed distortion budget may depend on an inaccessible reference. A common pixel-stage baseline is valuable without assuming that it can emulate every earlier intervention.

\subsection*{Source and raster verifiers}

\begin{lemma}[Simulating a known channel]\label{lem:simulation}
Let $M$ be a random message with values in a finite set $\mathcal M$. Let $X$ be the received marked source together with all other inputs of a source verifier, where $X$ takes finitely many values. Let $O=\Phi(X,\omega)$ be the raster observation (all that a raster decoder receives), where $\Phi$ is a deterministic map and $\omega$ is channel noise. Assume the following.
\begin{enumerate}[label=(H\arabic*),leftmargin=2.6em]
\item The verifier knows $\Phi$ and can evaluate it, and $X$ contains every input of $\Phi$ except $\omega$.
\item The noise $\omega$ is independent of $(M,X)$.
\item The verifier knows the law of $\omega$ and can sample from it.
\end{enumerate}
Let $D'$ be a randomized raster decoder with values in $\mathcal M\cup\{\bot\}$ and with private randomness independent of $(M,X,\omega)$. Let $D$ be the source decoder. On a value $x$ of $X$, it draws a fresh sample $\omega'$ from the law of $\omega$, applies $D'$ to $\Phi(x,\omega')$ with fresh private randomness, and outputs the result. Then, for every $(m,x)$ with $\Prb\{M=m,X=x\}>0$, the laws of $D(X)$ and $D'(O)$ given $(M,X)=(m,x)$ are equal. So $D$ and $D'$ have the same conditional success and abstention probabilities for every such message $m$ and received source $x$.
\end{lemma}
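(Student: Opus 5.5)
The plan is to compute both conditional laws explicitly, for a fixed pair $(m,x)$ with $\Prb\{M=m,X=x\}>0$, and check that they coincide output by output. Write $\lambda$ for the law of $\omega$. Because $\Omega$-valued raster outputs may be infinite in general, I would first note that only finitely many values of $X$ matter and that conditioning on the positive-probability event $\{M=m,X=x\}$ is elementary. Then I would express each output probability as an integral over the channel noise.

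\textbf{Step 1: the raster decoder.} On $\{M=m,X=x\}$ we have $O=\Phi(x,\omega)$. By (H2), $\omega$ is independent of $(M,X)$, so conditionally on this event $\omega$ still has law $\lambda$. The private randomness of $D'$ is independent of $(M,X,\omega)$, so its law is unchanged by conditioning on $(M,X,\omega)$. Hence, for each $y\in\mathcal M\cup\{\bot\}$,
\[
\Prb\{D'(O)=y\mid M=m,X=x\}=\int \Prb\{D'(\Phi(x,w))=y\}\,\lambda(dw),
\]
where the integrand is the output probability of $D'$ on the fixed input $\Phi(x,w)$. For a finite or countable noise space this is a sum and needs no measure theory. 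Otherwise I would assume $(w\mapsto \Prb\{D'(\Phi(x,w))=y\})$ is measurable, which is the standing measurability assumption of Section~\ref{sec:formal-contract}.

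\textbf{Step 2: the source decoder.} By (H1) and (H3), $D$ can form $\Phi(x,\omega')$ with $\omega'\sim\lambda$. Both $\omega'$ and $D$'s fresh copy of the private randomness are drawn independently of everything else, in particular of $(M,X)$. So conditioning on $\{M=m,X=x\}$ leaves the pair $(\omega',\text{private randomness})$ with the product law. The same computation then gives the identical integral $\int \Prb\{D'(\Phi(x,w))=y\}\,\lambda(dw)$. Comparing with Step 1 shows the two conditional laws agree for every $y$. The statements about conditional success ($y=m$) and abstention ($y=\bot$) are the special cases of this equality.

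\textbf{Main obstacle.} The only delicate point is the Step 1 claim that conditioning on $(M,X)$ leaves the joint law of $\omega$ and $D'$'s randomness equal to the product of their marginals. Hypothesis (H2) gives $\omega\perp(M,X)$, and the decoder assumption gives randomness $\perp(M,X,\omega)$. Combining these yields mutual independence of $(M,X)$, $\omega$, and the randomness, which is exactly the factorization needed. I would state this combination explicitly, in the style of step $\langle1\rangle2$ of Proposition~\ref{prop:identical}.
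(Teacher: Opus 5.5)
Your proposal is correct and follows the paper's own route. Fix $(m,x)$ with $\Prb\{M=m,X=x\}>0$; by (H2), conditionally $O=\Phi(x,\omega)$ with $\omega$ still at its unconditional law, and $D$ applies $D'$ to $\Phi(x,\omega')$ with $\omega'$ of that same law and independent fresh randomness. Your output-by-output integral and mutual-independence factorization only spell out what the paper's one-paragraph proof leaves implicit, and the measurability caveat is your own addition: Section~\ref{sec:formal-contract} only remarks that continuous representations need such assumptions, so it is not a standing assumption there.
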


\begin{proof}
Fix $(m,x)$ with $\Prb\{M=m,X=x\}>0$. By (H2), given $(M,X)=(m,x)$ the observation is $\Phi(x,\omega)$, where $\omega$ has its unconditional law. By (H1) and (H3), $D$ can evaluate $\Phi(x,\omega')$ for a fresh sample $\omega'$ with that law. Both outputs apply $D'$ to such a value, with private randomness independent of the noise sample. So the output laws agree.
\end{proof}

Thus a source verifier that satisfies (H1)--(H3) matches the conditional performance of any raster decoder, the baseline of Question~\ref{q:vector-raster}.

\subsection*{Visible freedom and nuisance variation}

\begin{remark}[Local linear model]\label{rem:local-linear}
Let $A$ and $B$ be real matrices of compatible sizes. Let $\xi$, an intermediate adjustment, and $\vartheta$, a nuisance state, be real vectors. At a fixed operating point, approximate a chosen video feature by
\[
 y=A\xi+B\vartheta.
\]
A watermark adjustment $w$ then has first-order effect $Aw$. In this model:
\begin{enumerate}[label=(\alph*),leftmargin=2em]
\item If $Aw=0$, that is, $w\in\ker A$, then $y(\xi+w,\vartheta)=y(\xi,\vartheta)$.
\item If $Aw=B\zeta$ and the nuisance model allows replacing $\vartheta$ by $\vartheta-\zeta$, then $y(\xi+w,\vartheta-\zeta)=y(\xi,\vartheta)$. Without that permission, the cancellation is not a valid attack.
\item Directions outside the set of adjustments that the task allows are not available, even if they are observable.
\end{enumerate}
Items (a) and (b) are exact identities for the linear model, because $A(\xi+w)+B(\vartheta-\zeta)=A\xi+B\vartheta+Aw-B\zeta$. They say nothing about a nonlinear renderer or decoder.
\end{remark}

Conditions (a) and (c) differ: visibility alone does not make an intervention acceptable. Bounded camera variation, fixed inputs, or trusted synchronization can remove otherwise available compensations, so the model requires a declared range for the nuisance state, not every mathematical direction as an attacker capability.

Preserving narrated event order may permit timing variation but forbid changing the measured trajectory. A different lesson may require timing itself to be exact. A formulation must decide which variations remain legitimate, such as timing in an instructional animation, before it maximizes recoverable bits. The task predicate defines that freedom, the renderer or decoder maps it to observations, and allowed nuisance transformations determine which distinctions persist.

For code-based rendering, $\xi$ might describe geometry, color, or timing parameters. For an image/video model, it might describe a local latent adjustment, with $A$ a linearization of the decoder. The local picture can fail under visibility changes, nonlinear decoding, quantization, or finite perturbations. The model clarifies the competing constraints as a conceptual guide for locating candidate carriers. It is not a security theorem or an empirical approximation established here.

\subsection*{Payload bounds need a finite observation model}

Unbounded precision can give artificial payload infinities, so the next definition fixes a finite model.

\begin{definition}[Finite payload model]\label{def:payload}
Fix positive integers $T$, $H$, $W$, $C$ (frames, height, width, channels) and an integer $q\geq 2$ (quantization levels per channel). Put $n=THWC$ and $\Omega=\{0,\dots,q-1\}^n$, so that $|\Omega|=q^n$. Fix a distance on $\Omega$, and for $v\in\Omega$ let $\operatorname{Ball}(v)\ni v$ be the set of outputs within the declared output-distortion budget of $v$. Let $\mathcal S$ be the nonempty set of allowed production states, $V(s)\in\Omega$ the ordinary output of $s$, and $U_x$ the task predicate. Fix a key $k$ and a finite nonempty message set $\mathcal M$.
\begin{enumerate}[label=(\alph*),leftmargin=2em]
\item The set of \emph{task-allowed outputs} of $s\in\mathcal S$ is
\[
 \mathcal V_U(s)=\{V(s'):s'\in\mathcal S,\ U_x(s,s')=1,\ V(s')\in\operatorname{Ball}(V(s))\}.
\]
A route may restrict this set further to the outputs that it can reach. The results below hold for such a smaller declared set, again written $\mathcal V_U(s)$.
\item A \emph{deterministic encoder} is a map $E$ with $E(s,m)\in\mathcal V_U(s)$ for all $s\in\mathcal S$ and $m\in\mathcal M$.
\item Let $\mathcal T$ be a family of deterministic maps $\tau:\Omega\to\Omega$ (output transformations) containing the identity. The \emph{reachable observation set} of $(s,m)$ is $\operatorname{Reach}(s,m)=\{\tau(E(s,m)):\tau\in\mathcal T\}$.
\item A decoder $D_k$ is a possibly randomized map from $\Omega$ to $\mathcal M\cup\{\bot\}$. It receives the observation and the key, not $s$. It is \emph{zero-error} for $E$ if $\Prb\{D_k(\tau(E(s,m)))=m\}=1$ for every $s\in\mathcal S$, every $m\in\mathcal M$, and every $\tau\in\mathcal T$.
\end{enumerate}
\end{definition}

Audio, metadata, variable duration, or additional precision need separate finite budgets if the formulation allows them. Unbounded source length or real-valued parameters do not create infinitely many distinct observations in the fixed alphabet $\Omega$. The definition projects a state-based task predicate onto pixels.

\begin{proposition}[Zero-error payload bounds]\label{prop:zero-error-count}
In the setting of Definition~\ref{def:payload}, let $D_k$ be a zero-error decoder for a deterministic encoder $E$. Then:
\begin{enumerate}[label=(\roman*),leftmargin=2em]
\item $\operatorname{Reach}(s,m)\cap\operatorname{Reach}(s',m')=\emptyset$ for all $s,s'\in\mathcal S$ and all distinct $m,m'\in\mathcal M$; equivalently, the unions $\bigcup_{s\in\mathcal S}\operatorname{Reach}(s,m)$ over states are disjoint for distinct messages $m$;
\item $|\mathcal M|\leq|\mathcal V_U(s)|\leq q^n$ for every $s\in\mathcal S$, so the payload $\log_2|\mathcal M|$ is at most $n\log_2q$ bits.
\end{enumerate}
\end{proposition}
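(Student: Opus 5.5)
For part (i), I would argue by contradiction using the zero-error property of Definition~\ref{def:payload}(d). Suppose some observation $o$ lies in $\operatorname{Reach}(s,m)\cap\operatorname{Reach}(s',m')$ with $m\neq m'$. Then $o=\tau(E(s,m))$ for some $\tau\in\mathcal T$, and $o=\tau'(E(s',m'))$ for some $\tau'\in\mathcal T$.

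Zero error gives $\Prb\{D_k(o)=m\}=1$ and $\Prb\{D_k(o)=m'\}=1$. These two events are disjoint outcomes of the private randomness of $D_k$ at the fixed input $o$. So Lemma~\ref{lem:decoder-mass} would force $\sum_{\mu}\Prb\{D_k(o)=\mu\}\geq2>1$, which is a contradiction.

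The stated equivalence with disjointness of the unions over states is a set-theoretic rewrite. Two unions $\bigcup_s\operatorname{Reach}(s,m)$ and $\bigcup_{s'}\operatorname{Reach}(s',m')$ meet exactly when some pair of their members meets.

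For part (ii), I would fix $s\in\mathcal S$. Since $\mathcal T$ contains the identity, $E(s,m)\in\operatorname{Reach}(s,m)$ for every $m$. By (i), applied with $s'=s$, the map $m\mapsto E(s,m)$ is injective on $\mathcal M$: if $E(s,m)=E(s,m')$ with $m\neq m'$, the two reach sets would share that point.

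By Definition~\ref{def:payload}(b), this injective map takes values in $\mathcal V_U(s)$, so $|\mathcal M|\leq|\mathcal V_U(s)|$. Next, $\mathcal V_U(s)\subseteq\Omega$, and $|\Omega|=q^n$, which gives the second inequality. Taking $\log_2$, which is monotone, yields the payload bound $n\log_2 q$.

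The same argument covers the restricted reachable set mentioned in part (a), since only membership of $E(s,m)$ in the declared set is used.

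There is no substantive obstacle. The only point needing care is the step from ``probability one for two different messages at the same input'' to a contradiction. Here I must note that the zero-error condition is stated for a fixed observation value, so the decoder's randomness is evaluated at the same deterministic input $o$ in both cases. This holds because $E$ and each $\tau$ are deterministic. That is exactly what lets Lemma~\ref{lem:decoder-mass} apply pointwise.
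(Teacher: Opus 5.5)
Your proposal is correct and follows the paper's proof step for step. Part (i) is a contradiction with Lemma~\ref{lem:decoder-mass} at a common observation $o=\tau(E(s,m))=\tau'(E(s',m'))$, and part (ii) uses the identity in $\mathcal T$ and part (i) with $s'=s$ to obtain an injection $m\mapsto E(s,m)$ into $\mathcal V_U(s)\subseteq\Omega$; your one-line justification of the equivalence between the two forms of (i), which the paper leaves implicit, is a harmless addition.
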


\begin{proof}[Proof of Proposition~\ref{prop:zero-error-count}]
\begin{steps}
\step{1}{1} Part (i) holds.
\pf Suppose $o\in\operatorname{Reach}(s,m)\cap\operatorname{Reach}(s',m')$ with $m\neq m'$. Then $o=\tau(E(s,m))=\tau'(E(s',m'))$ for some $\tau,\tau'\in\mathcal T$, so zero-error gives $\Prb\{D_k(o)=m\}=1=\Prb\{D_k(o)=m'\}$. These probabilities sum to two, which contradicts Lemma~\ref{lem:decoder-mass}.
\step{1}{2} Part (ii) holds.
\pf Fix $s\in\mathcal S$. The identity lies in $\mathcal T$, so $E(s,m)\in\operatorname{Reach}(s,m)$. By \sref{1}{1} with $s'=s$, distinct messages have disjoint reachable sets, so $m\mapsto E(s,m)$ is injective into $\mathcal V_U(s)\subseteq\Omega$ (Definition~\ref{def:payload}(b)). So $|\mathcal M|\leq|\mathcal V_U(s)|\leq|\Omega|=q^n$, and $\log_2|\mathcal M|\leq n\log_2q$.
\qedstep{1}{3}
\pf By \sref{1}{1} and \sref{1}{2}.
\end{steps}
\end{proof}

Part (i) with $s\neq s'$ matters because the decoder does not receive $s$ and must separate the messages across all states. Part (ii) holds even if $\mathcal T$ contains only the identity, so it applies before any attack. These counting constraints do not determine the largest payload achievable under the declared attacks.

\begin{remark}[Worst case and average case]\label{rem:worst-average}
A worst-case formulation can require success for every allowed production state, message, and transformation. An average formulation must name distributions over production states, messages, and attacks. Let $\Psi$ be the allowed set of such cases, let $\operatorname{succ}(\psi)\in[0,1]$ be the success probability at a case $\psi\in\Psi$, and let $\varepsilon\in[0,1]$. If $\Psi$ is infinite, assume that $\operatorname{succ}$ is measurable. Both formulations use the same production, encoder, attacker, and decoder randomness.
\begin{enumerate}[label=(\alph*),leftmargin=2em]
\item If $\operatorname{succ}(\psi)\geq1-\varepsilon$ for every $\psi\in\Psi$, then $\int\operatorname{succ}\,d\mu\geq1-\varepsilon$ for every probability law $\mu$ on $\Psi$, because every value of $\operatorname{succ}$ is at least $1-\varepsilon$.
\item For a fixed law $\mu$, the converse can fail: if $\Psi=\{\psi_1,\psi_2\}$, $\operatorname{succ}=(1,0)$, and $\mu=(9/10,1/10)$, the average success is $9/10$ but the worst-case success is $0$. Here $\psi_2$ can be a rare task with no visible freedom, which contributes little to the average error.
\end{enumerate}
\end{remark}

If the formulation allows randomized encoding or verification, it must specify their probability spaces and error tolerances. The zero-error argument does not carry over unchanged.

False alarms and false acceptances need a separate model. A codebook that separates its own messages can still collide with ordinary unmarked videos (a false alarm under the null law), and a verifier can accept transplanted marked assets (a false acceptance outside the authorized relation). So recoverable payload and reliable association with a production event are different objectives. The finite model is a starting discipline, not a complete watermark specification.
\Needspace{9\baselineskip}
\section{Temporal Search and Editing Boundaries}\label{app:temporal}

\subsection*{The clock belongs to a representation}

Three clocks can coexist. A latent video model may use a latent-frame index, an animation program uses a declared timeline, and an encoded file has presentation timestamps. They need not advance together.

A renderer may sample an analytic trajectory at a new frame rate, while a video editor may interpolate already sampled frames. These operations can have different effects on the same intended motion. Specifying the verifier's observation includes declaring the carrier's clock.

\begin{definition}[Deletion channel]\label{def:deletion-channel}
Let $T\geq1$, and let $V=(V_0,\ldots,V_{T-1})$ be the original clip, a sequence of frames over a finite alphabet. A \emph{deletion-only toy channel} with output length $T'$ turns it into the received clip $Y=(Y_0,\ldots,Y_{T'-1})$ with $Y_j=V_{\varphi(j)}$. Here $\varphi$ is a strictly increasing map from $\{0,\ldots,T'-1\}$ into $\{0,\ldots,T-1\}$, so $T'\leq T$. If $T'\geq1$, the \emph{origin} (or offset) of $Y$ is $\varphi(0)$, the original index of its first frame.
\end{definition}

A merely nondecreasing $\varphi$ also allows duplicated frames, and then $T'$ may exceed $T$. Inserted frames need a separate insertion symbol and values, and interpolation generally needs more than this integer-index model. These distinctions prevent ``time warp'' from concealing several different channels.

\begin{example}[A periodic carrier]\label{ex:periodic}
Consider a 48-frame sequence sampled at 24 frames per second, with timestamps $t/24$ for $t=0,\ldots,47$. Suppose, only for this timing example, that an ideal scalar measurement of frame $t$ is
\begin{equation}
 c_t=\cos(2\pi t/8).
 \label{eq:temporal-periodic}
\end{equation}
This carrier has period 8 frames, or 3 cycles per second.
\begin{enumerate}[label=(\alph*)]
\item For every integer $t$, $c_{t+8}=c_t$ and $c_{t+4}=-c_t$.
\item Deleting the first eight frames ($\varphi(j)=j+8$) and deleting the last eight ($\varphi(j)=j$) give the same 40 scalars: $c_{j+8}=c_j$ for $j=0,\ldots,39$.
\item Deleting the first four frames and deleting the last four frames give two 44-term sequences that are negatives of each other.
\end{enumerate}
Adding $2\pi$ or $\pi$ to the argument of the cosine gives (a), and (b) and (c) follow term by term.
\end{example}

By (a) and (b), a detector that sees only the scalars does not distinguish origins eight frames apart when the kept frames are consecutive. Visible boundaries, or side information such as length constraints or timestamps, can still supply origin information. By (c), an origin error of four frames becomes a sign error. A detector that averages signs before aligning them can then cancel evidence in this ideal model. The scalar measurement is a stipulation that isolates the timing issue, here and in Example~\ref{ex:aliasing}. We do not claim that any method obtains it reliably from arbitrary video.

For code-based rendering, a scene parameter might follow $c(t)=\cos(2\pi t/\lambda)$ with period $\lambda$, sampled at times that the renderer chooses. Changing the frame rate keeps this continuous function fixed but changes its sampled sequence. Resetting the time origin changes its phase. For a latent video model, a carrier tied to latent-frame numbers needs a relation to the decoded and received frame grids. Recovering a latent state does not itself settle that relation. VideoShield uses inversion and a spatiotemporal template, while VideoMark includes temporal matching of recovered frame information \cite{videoshield2025,videomark2025}.

\begin{example}[Aliasing under subsampling]\label{ex:aliasing}
Subsampling can erase a distinction between continuous carriers. Sample at 24 samples per second, at times $t/24$ for integer $t$. Carriers of 3 and of 9 cycles per second give $c^{(3)}_t=\cos(2\pi\cdot3t/24)=\cos(\pi t/4)$ and $c^{(9)}_t=\cos(2\pi\cdot9t/24)=\cos(3\pi t/4)$. These sequences differ at $t=1$: the values are $1/\sqrt2$ and $-1/\sqrt2$.

Now keep only the even-indexed samples, that is, every second sample. This gives 12 samples per second and the deletion map $\varphi(j)=2j$. For every integer $j$,
\[
 c^{(9)}_{2j}=\cos(3\pi j/2)=\cos(2\pi j-\pi j/2)=\cos(\pi j/2)=c^{(3)}_{2j}.
\]
The third equality uses the period and evenness of the cosine. So this exact deletion pattern makes the two scalar observations identical.
\end{example}

To rule out this example, one must restrict the allowed carrier frequencies or sampling changes. It is no remedy to ask a detector to infer information that the toy channel has removed.

Repetition offers another choice: carry the same message across time and aggregate local evidence. Video Seal propagates a pixel intervention and aggregates frame-level soft bits \cite{videoseal2024}. That choice reduces reliance on a unique frame origin. Repeated information alone cannot separate two origins that differ by a multiple of the period of a periodic carrier, that is, a carrier unchanged by a shift of one period (Example~\ref{ex:periodic}). The origin is then determined at most modulo that period, unless side information or a restricted candidate set removes the ambiguity.

A changing schedule can provide alignment information. It also gives editing more ways to desynchronize recovery. Transform-domain video work explicitly addresses frame-rate conversion and temporal redundancy \cite{dtcwt2022}. These are reasons to study carrier schedules, and not to assume that every temporal strategy has the same invariances.

\subsection*{A synchronization search is also a statistical search}

Alignment often means trying candidate offsets, frame-rate ratios, windows, or keys and reporting the strongest match. A threshold calibrated for one candidate need not be valid for that maximum. The issue occurs for neural inversion and for pixel measurement of rendered motion. The search procedure is part of the detector.

\begin{definition}[Valid family of $p$-values]\label{def:valid-pvalue}
Fix the tested key or keys and the public design parameters. Let $\Prb_0$ be a single null law for unmarked received clips $Y$. For example, $\Prb_0$ can be one member of $\mathcal Q_0$ (Section~\ref{sec:recovery}) with the keys fixed. Let $L\geq1$, and fix, before observing $Y$, measurable functions $p_1,\ldots,p_L$ from received clips to $[0,1]$. The family $p_1,\ldots,p_L$ is \emph{valid under $\Prb_0$} if
\begin{equation}
 \Prb_0\{p_\ell(Y)\leq u\}\leq u
 \quad\text{for every }u\in[0,1]\text{ and every }\ell.
 \label{eq:temporal-validity}
\end{equation}
\end{definition}

\begin{proposition}[Whole-search union bound]\label{prop:union}
Let $\Prb_0$ be a null law for received clips $Y$, let $p_1,\ldots,p_L$ with $L\geq1$ be a family valid under $\Prb_0$ in the sense of Definition~\ref{def:valid-pvalue}, and let $\alpha\in[0,1]$. Then
\begin{equation}
 \Prb_0\!\left\{\min_{1\leq\ell\leq L}p_\ell(Y)\leq\alpha/L\right\}
 \leq\alpha.
 \label{eq:temporal-union}
\end{equation}
The $p_\ell$ need not be independent. Consider a detector that declares a detection only if some member is at most $\alpha/L$. Its false-alarm probability under $\Prb_0$ is at most $\alpha$, even if it chooses which members to evaluate after seeing $Y$.
\end{proposition}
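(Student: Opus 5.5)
The plan is a direct union bound over the $L$ members, followed by an event-containment argument for the adaptive detector. First, I will write the event as a finite union,
\[
 \Bigl\{\min_{1\leq\ell\leq L}p_\ell(Y)\leq\alpha/L\Bigr\}=\bigcup_{\ell=1}^{L}\{p_\ell(Y)\leq\alpha/L\}.
\]
This equality holds because a minimum over finitely many reals is at most a threshold exactly when some term is. Each set is measurable because every $p_\ell$ was fixed as a measurable function before $Y$ is observed.

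Second, I will apply subadditivity of $\Prb_0$ to this finite union. Then I will use validity~\eqref{eq:temporal-validity} at $u=\alpha/L$, which lies in $[0,1]$ since $\alpha\in[0,1]$ and $L\geq1$. Each term is then at most $\alpha/L$, so the sum is at most $\alpha$, which gives~\eqref{eq:temporal-union}. Subadditivity requires no independence, so dependence among the $p_\ell$ is automatically allowed. I will note this explicitly.

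Third, I will handle the adaptive detector. Whatever data-dependent subset $S(Y)\subseteq\{1,\ldots,L\}$ it evaluates, its detection event is $\{\exists\,\ell\in S(Y):p_\ell(Y)\leq\alpha/L\}$. This event is contained in the union above because $S(Y)$ is a subset of the fixed family. Monotonicity of probability then gives a false-alarm probability of at most $\alpha$.

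The only delicate point is this third step. Adaptive choice could shrink the detection event but never enlarge it beyond the fixed family. Measurability of the detection event itself is a side issue. I would state it as a requirement, or phrase the claim as a bound on any measurable event contained in the union, since the outer probability of a subset of the union is still at most $\alpha$.
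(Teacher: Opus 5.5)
Your proposal is correct and follows the paper's proof: both write the event as the finite union of the sets $\{p_\ell(Y)\leq\alpha/L\}$, bound its probability by subadditivity (the paper uses the indicator inequality and takes expectations), apply validity at $u=\alpha/L\in[0,1]$, and handle adaptive evaluation by noting that the detection event lies inside that fixed union. Your remark about the measurability of the adaptive detection event is an extra precision that the paper leaves implicit.
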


\begin{proof}[Proof of Proposition~\ref{prop:union}]
\begin{steps}
\step{1}{1} \Let $A_\ell=\{p_\ell(Y)\leq\alpha/L\}$. Then $\Prb_0\{\min_\ell p_\ell(Y)\leq\alpha/L\}\leq\sum_{\ell=1}^L\Prb_0(A_\ell)$.
\pf The event is $\bigcup_\ell A_\ell$, and pointwise $\mathbf1_{\cup_\ell A_\ell}\leq\sum_\ell\mathbf1_{A_\ell}$. Take expectations under $\Prb_0$.
\step{1}{2} $\Prb_0(A_\ell)\leq\alpha/L$ for every $\ell$.
\pf Because $\alpha\in[0,1]$ and $L\geq1$, the number $\alpha/L$ lies in $[0,1]$. Apply \eqref{eq:temporal-validity} with $u=\alpha/L$.
\step{1}{3} A detector as in the statement declares a detection only on $\bigcup_\ell A_\ell$.
\pf Some member must be at most $\alpha/L$. The $A_\ell$ depend only on the fixed functions $p_\ell$, so the choice of members to evaluate does not matter.
\qedstep{1}{4}
\pf By \sref{1}{1} and \sref{1}{2}, $\Prb_0(\bigcup_\ell A_\ell)\leq L\cdot(\alpha/L)=\alpha$, which is \eqref{eq:temporal-union}. By \sref{1}{3}, the false-alarm probability of such a detector under $\Prb_0$ is at most $\Prb_0(\bigcup_\ell A_\ell)$, so at most $\alpha$.
\end{steps}
\end{proof}

Validity is a stronger premise than naming a score a $p$-value. It must hold for every member under the declared null law, including nuisance transformations in it. More precise calibration may improve the conservative finite bound, but needs its own argument.

\begin{remark}[Key averaging]\label{rem:key-average}
A guarantee averaged over random keys (Section~\ref{sec:recovery}) does not automatically become a guarantee for each fixed key.
\end{remark}

Selecting or constructing a test after seeing the data requires accounting for the whole selection procedure. Reporting one selected test does not reduce the family size to one.

\begin{example}[Adaptive selection]\label{ex:adaptive}
Let $L\geq2$, and let $Y$ be uniform on $\{1,\ldots,L\}$, with law $\Prb_0$. For each fixed $\ell\in\{1,\ldots,L\}$ define
\begin{equation}
 p_\ell(y)=\begin{cases}1/L,&y=\ell,\\1,&y\ne\ell.\end{cases}
 \label{eq:temporal-adaptive}
\end{equation}
Each fixed test is valid: $\{p_\ell(Y)\leq u\}$ has probability zero for $u<1/L$, probability $1/L\leq u$ for $1/L\leq u<1$, and probability one for $u=1$.

But choosing $\ell=Y$ makes the reported value $1/L$ with probability one. Calling this a single test at level $1/L$ is wrong. Accounting for the full family of $L$ candidates retains \eqref{eq:temporal-union}, by Proposition~\ref{prop:union}. Discarding the alternatives after looking at the observation does not.
\end{example}

A practical first calculation can precede any implementation. It lists
\begin{itemize}
\item the proposed offsets, windows, sampling ratios, and key trials;
\item the statistical family that is searched;
\item whether its local null laws are known;
\item the threshold that the whole procedure requires.
\end{itemize}
A dynamic matching routine may hide many candidate alignments behind one output. Its reported statistic still needs calibration for that procedure.

\begin{example}[A bounded search]\label{ex:search-count}
Take twelve offsets, three sampling ratios, ten windows, and four predeclared keys. Their product has $12\cdot3\cdot10\cdot4=1440$ tuples, fixed before observing $Y$. Suppose each tuple supplies a valid test under the same declared null law $\Prb_0$. Declare a detection when some tuple's value is at most $0.01/1440=1/144000$. By Proposition~\ref{prop:union} with $L=1440$ and $\alpha=0.01$, this rule has false-alarm probability at most $0.01$ under $\Prb_0$. Some tuples can give identical tests; counting them separately is conservative, not invalid.
\end{example}

The calculation supplies no power guarantee. It identifies the size and calibration obligation of this search, including the key trials.

\subsection*{Editing budgets accumulate and change the question}

Individual edits can be harmless while their allowed composition leaves too little evidence.

\begin{example}[Two maximal crops]\label{ex:crop-budget}
Suppose a 64-frame clip allows deleting at most one quarter of its current frames at each editing step. Two maximal prefix crops leave $64\to48\to36$ frames. The combined deletion is 28 frames, or $7/16$ of the original clip, rather than one quarter. These lengths are divisible by four, so rounding plays no role.
\end{example}

A budget measured against the original clip would describe a different attacker.

\begin{definition}[Edit distance]\label{def:edit-distance}
Let $X_0$ and $X_1$ be finite sequences of frames over a finite alphabet. A \emph{script} is a finite list of single-frame insertions, deletions, and substitutions, applied in order. The \emph{edit distance} $d_{\mathrm{edit}}(X_0,X_1)$ is the least length of a script that turns $X_0$ into $X_1$.
\end{definition}

\begin{lemma}[Triangle inequality and additive budgets]\label{lem:edit-triangle}
Let $X_0$, $X_1$, and $X_2$ be finite sequences over a finite alphabet. Then
$d_{\mathrm{edit}}(X_0,X_2)\leq d_{\mathrm{edit}}(X_0,X_1)+d_{\mathrm{edit}}(X_1,X_2)$.
Thus, if $d_{\mathrm{edit}}(X_0,X_1)\leq e_1$ and $d_{\mathrm{edit}}(X_1,X_2)\leq e_2$, then $d_{\mathrm{edit}}(X_0,X_2)\leq e_1+e_2$.
\end{lemma}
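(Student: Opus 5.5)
The plan is to prove the triangle inequality by concatenating minimal scripts, and then to derive the budget statement from it by monotonicity of addition. Definition~\ref{def:edit-distance} defines $d_{\mathrm{edit}}$ as a least length over a set of scripts. So I will first make sure this least value exists for any pair of finite sequences. Deleting every frame of $X_0$ and then inserting every frame of $X_1$ in order is a script that turns $X_0$ into $X_1$. Hence the set of admissible script lengths is a nonempty set of nonnegative integers, and it has a minimum. This also shows that $d_{\mathrm{edit}}$ is a well-defined natural number.

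Next, let $\sigma_1$ be a script of length $d_{\mathrm{edit}}(X_0,X_1)$ that turns $X_0$ into $X_1$. Let $\sigma_2$ be a script of length $d_{\mathrm{edit}}(X_1,X_2)$ that turns $X_1$ into $X_2$. Form the concatenated list $\sigma_1\sigma_2$. Scripts are applied in order, so running $\sigma_1\sigma_2$ on $X_0$ first produces $X_1$ and then produces $X_2$.

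The point that needs care is that each single-frame operation is specified relative to the current sequence, for example by a position index and a symbol. With that convention, the second block acts on $X_1$ exactly as $\sigma_2$ does on its own. I will state this convention explicitly and justify the step by a short induction on the length of $\sigma_2$. Given this, $\sigma_1\sigma_2$ is an admissible script from $X_0$ to $X_2$ of length $d_{\mathrm{edit}}(X_0,X_1)+d_{\mathrm{edit}}(X_1,X_2)$. Minimality in the definition then yields the inequality.

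For the budget consequence, I will substitute the hypotheses $d_{\mathrm{edit}}(X_0,X_1)\leq e_1$ and $d_{\mathrm{edit}}(X_1,X_2)\leq e_2$ into the right side of the triangle inequality. The main obstacle is only the bookkeeping about relative positions in concatenated scripts. Beyond that, the argument is routine.
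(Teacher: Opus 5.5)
Your proposal is correct and follows essentially the same route as the paper. Both show the minimum exists via the delete-all/insert-all script, concatenate optimal scripts from $X_0$ to $X_1$ and from $X_1$ to $X_2$, and obtain the budget bound by adding the hypotheses. Your induction on the position convention makes explicit what the paper leaves implicit in ``applied in order''; it is harmless but not strictly needed.
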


\begin{proof}
Deleting all frames of one sequence and inserting all frames of another gives a script between them, so the least length exists. Concatenate optimal scripts from $X_0$ to $X_1$ and from $X_1$ to $X_2$. The result is a script from $X_0$ to $X_2$ whose length is the sum of the two distances. This proves the first inequality, and the second follows by adding the given bounds.
\end{proof}

This upper bound does not say that the operations commute, and it does not preserve a watermark recovery guarantee. The second edit acts on a changed sequence, and its standalone guarantee may have assumed an original input.

Cropping changes what a later resize sees, and interpolation changes what a later frame deletion removes. The combined output distribution may differ from the unmarked distribution used for calibration. So a robustness claim must name the composed class, and a detection claim must name the associated null family. Source editing followed by rerendering is a different access model: it changes the production state instead of composing edits on an existing video.

Reordering requires a further choice of channel. The strictly increasing deletion map of Definition~\ref{def:deletion-channel} does not describe swaps.

\begin{remark}[Permutations]\label{rem:permutation}
Let $y_0,\ldots,y_{T-1}$ be local measurements of $T$ frames, and let $\varpi$ be a bijection of $\{0,\ldots,T-1\}$. Then $\sum_jy_{\varpi(j)}=\sum_jy_j$, because $\varpi$ only reindexes the same sum, so a plain average does not change. For an ordered code $(\gamma_0,\ldots,\gamma_{T-1})$, the sum $\sum_j\gamma_jy_{\varpi(j)}$ need not equal $\sum_j\gamma_jy_j$. For example, take $T=4$, code $(1,-1,1,-1)$, and measurements $(3,-1,4,1)$. Swapping the first two measurements leaves the plain sum at $7$ but changes the ordered sum from $7$ to $-1$.
\end{remark}

Neither behavior is automatically preferable. Order-free message recovery can remain possible when an editing-history claim has become ambiguous. Reporting a plausible order does not authenticate that a particular issuer emitted the recovered frame sequence. The target should say which conclusion the surviving signal is meant to support.

\subsection*{A first channel calculation}

A first approach is a small ledger of operations, retained evidence, and searched alignments. It can expose an impossible target before a detector is built. It can also locate a tractable one: a short, fixed edit family with explicit access and a test whose whole-search validity is checkable.

A small starting problem allows one deletion at an unknown location and no interpolation (Definition~\ref{def:deletion-channel} with $T'=T-1$). List the candidate alignments and the carrier values each would retain. Check whether different messages can yield the same observation under different allowed deletions. Then ask how a stated noise bound changes those ambiguities. This is a channel and distinguishability calculation. Construction and robust decoding remain research questions.
\Needspace{9\baselineskip}
\section{Candidate Locations, Glossary, and Notation}
\label{app:glossary}

\subsection*{Candidate locations}

\begin{table}[ht]
\centering\small
\begingroup
\renewcommand{\arraystretch}{1.18}
\setlength{\extrarowheight}{2pt}
\arrayrulecolor{black}
\begin{tabularx}{\linewidth}{>{\raggedright\arraybackslash\sffamily\bfseries}p{0.18\linewidth}YY}
\toprule
\rowcolor{TableHead}
Location & \textbf{\sffamily Image/video generation} & \textbf{\sffamily Code-based rendering} \\
\midrule
Request / text & Conditions or generated textual content & Prompt, code tokens, identifiers, literals \\
\rowcolor{TableBand}
Representation & Noise, latent states, model parameters & Program, component graph, scene parameters, assets \\
Emission stage & Decoder or generated-frame intervention & Renderer or captured-frame intervention \\
\rowcolor{TableBand}
Final pixels & Pixels, transforms, codec-aware representation & The same output-level possibilities as for image/video generation \\
Side information & Signed manifests, service records, disclosed states & Signed manifests, source identity, execution receipts \\
\bottomrule
\end{tabularx}
\endgroup
\caption{Candidate locations. Section~\ref{sec:stages} maps its five stages to these rows. A location is not a security guarantee: each row needs a specified carrier, observation, attacker, and assertion.}
\label{tab:locations}
\end{table}

\subsection*{Glossary}

\begin{description}[style=unboxed,leftmargin=0pt,labelsep=0.5em,itemindent=0.5em,itemsep=2pt]
\item[Route, operation, workflow, pipeline.] A route is the informal contrast between image/video generation and code-based rendering. An operation is a node of the production model, a workflow a named real system or a described composition of operations, and a pipeline a specific implemented chain.
\item[Production graph.] The adopted model of operations and dependencies producing an artifact; it can include image/video generation, rendering, editing, and human contributions. A declared graph is not an authenticated execution history.
\item[Production event.] One invocation of a production step, such as one generation call, one rendering run, or one export. It is not the event that the video depicts.
\item[Image/video generation.] An operation in which a visual generative model (an image/video model) produces pixels, or representations decoded into pixels, conditioned on declared inputs. Code can orchestrate it.
\item[Code-based rendering.] Execution or interpretation of a program or declarative graphics description under a specified environment; author and workflow controller are separate variables.
\item[Rerendering and reconstruction.] Rerendering emits fresh output from an existing description or representation. Replacement substitutes clean material for a marked asset, and reconstruction rebuilds a task-equivalent program. Question~\ref{q:reconstruction} treats rerendering, replacement, and reconstruction as different access types. A marked asset can retain a mark in fresh views.
\item[Render equivalence.] Equality of rendered outputs under specified execution conditions; equality at one fixed environment is weaker than across an environment family.
\item[Task preservation.] Compliance with the application requirements declared for the problem. It can permit image changes that exact render equivalence forbids.
\item[Nuisance policy.] A rule by which a route chooses implementation details, such as which program or seed realizes a scene, that do not change the intended content.
\item[Verification specification.] The tuple $\mathfrak C$ of Section~\ref{sec:formal-contract}, which declares the seven items of Section~\ref{sec:spec-list}. Key generation, issuance queries, and verifier permissions complete it.
\item[Observation and side information.] An observation is the information actually supplied to a verifier: pixels, metadata, source, states, logs, or credentials. Side information means any information beyond the pixel content that the specification declares, such as metadata, timestamps, records, or disclosed states. The specification declares model access and public design parameters as auxiliary inputs.
\item[Auxiliary channel.] A channel other than the video pixels, such as audio, metadata, duration, or additional precision (Definition~\ref{def:payload} and the paragraph after it).
\item[Verifier, decoder, detector.] The verifier is the party or procedure that receives the observation. The decoder $D$ is its message-recovery map, and a detector is a presence test or a passive-inference label.
\item[Assertion.] The statement whose support the verifier tests, together with the verification task (the symbol $\chi$).
\item[Payload.] The information carried by the embedded message, measured in bits as $\log_2|\mathcal M|$ for a finite message set $\mathcal M$ (Definition~\ref{def:payload}; Proposition~\ref{prop:zero-error-count}).
\item[In-band watermark.] A watermark embedded in the content itself, not in separate metadata.
\item[Carrier.] A mark-bearing variation in text, a representation, an emitted signal, or a side channel; one observable at one stage may be absent downstream.
\item[Passive inference.] Estimating production labels from observations without arranging an active mark. The specification must declare the relevant distributions and unknown cases.
\item[Watermark detection and verification.] Detection tests for the presence of a mark; verification covers detection and message recovery.
\item[Message recovery.] Estimating a deliberately embedded message (payload extraction). A recovery statement requires its error model, side information, auxiliary inputs, and message population.
\item[False alarm and false acceptance.] A false alarm is acceptance of an unmarked observation as marked, under a declared family of unmarked observation laws. Searching many alignments or keys changes the statistical decision rule and its calibration. A false acceptance is acceptance of a credential outside the authorized relation (Question~\ref{q:credentials}).
\item[Authentication (authenticated provenance).] Checking an issuer-bound content or event assertion under declared trust assumptions. It does not establish a complete origin history.
\item[C2PA assertion and claim.] An assertion is a data structure that states something about an asset. A claim is a digitally signed data structure that references a set of assertions \cite[Sections~2.3.1 and~2.3.2]{c2pa}.
\item[Manifest.] In C2PA, the provenance information about an asset: one or more assertions (including content bindings), a single claim, and a claim signature. Content Credential is its preferred non-technical term \cite[Sections~2.3.4 and~2.3.6]{c2pa}.
\item[Content binding.] In C2PA, information that associates digital content with a specific manifest, either as a hard binding or as a soft binding \cite[Section~2.3.11]{c2pa}.
\item[Hard binding.] In C2PA, one or more cryptographic hashes that uniquely identify an entire asset or a portion of it \cite[Section~2.3.12]{c2pa}.
\item[Soft binding.] In C2PA, a content identifier that is not statistically unique, such as a fingerprint, or is an invisible watermark in the content \cite[Section~2.3.13]{c2pa}.
\item[Issuer binding and event binding.] Issuer binding ties an assertion to the issuer that makes it. Event binding, this article's term, ties an artifact or an allowed derivative to an authorized production event (Question~\ref{q:credentials}).
\end{description}

\subsection*{Notation}

Table~\ref{tab:notation} lists the symbols.

\begin{table}[!ht]
\centering\footnotesize
\begingroup
\renewcommand{\arraystretch}{1.02}
\setlength{\extrarowheight}{0.5pt}
\setlength{\tabcolsep}{4pt}
\arrayrulecolor{black}
\begin{tabularx}{\linewidth}{>{\raggedright\arraybackslash}p{0.19\linewidth}Y>{\raggedright\arraybackslash}p{0.255\linewidth}}
\toprule
\rowcolor{TableHead}
\textbf{\sffamily Symbol} & \textbf{\sffamily Meaning} & \textbf{\sffamily Where} \\
\midrule
$\mathcal G=(\mathcal N,\mathcal E),\;\operatorname{pa}(v)$ & Production graph (nodes, edges); parents of node $v$ & Section~\ref{sec:formal-contract} \\
\rowcolor{TableBand}
$Y_v,\;F_v,\;Z_v$ & Artifact, node map, and randomness at node $v$ & Section~\ref{sec:formal-contract} \\
$x,\;\mu_x$ & Task instance; joint law of the additional inputs $Z=(Z_v)_v$ & Section~\ref{sec:formal-contract} \\
\rowcolor{TableBand}
$S,\;s,\;\mathcal S$ & Production state (random, realized); set of allowed states & Section~\ref{sec:formal-contract}, Appendix~\ref{app:rendering} \\
$O(S),\;o$ & Verifier's observation; realized value & Section~\ref{sec:formal-contract}, Appendix~\ref{app:observation} \\
\rowcolor{TableBand}
$\mathfrak C$ & Verification specification $(\mathfrak P,\mathcal I,O,\mathcal A,U,\chi,\mathcal Q_0)$ & Section~\ref{sec:formal-contract} \\
$\mathfrak P,\;\mathcal I$ & Production specification $(\mathcal G,(F_v)_{v\in\mathcal N},(\mu_x)_x)$; encoder's allowed observations and interventions & Section~\ref{sec:formal-contract} \\
\rowcolor{TableBand}
$\mathcal A,\;a$ & Attacker's view and allowed policies; one allowed policy & Sections~\ref{sec:formal-contract}, \ref{sec:recovery} \\
$U_x(s_0,s')$ & Task-preservation predicate (baseline $s_0$, candidate $s'$) & Section~\ref{sec:formal-contract}, Appendix~\ref{app:rendering} \\
\rowcolor{TableBand}
$\chi$ & Assertion under test, with its verification task & Section~\ref{sec:formal-contract} \\
$\mathcal Q_0,\;Q$ & Family of unmarked observation laws; one member & Sections~\ref{sec:formal-contract}, \ref{sec:recovery} \\
\rowcolor{TableBand}
$(K_e,K_v),\;k,\;k_v$ & Issuance and verification keys; a fixed key; a verifier key value & Section~\ref{sec:recovery}, Appendix~\ref{app:observation} \\
$m,\;M,\;\mathcal M,\;N$ & Message, random message, message set, its size & Section~\ref{sec:recovery}, Appendices~\ref{app:observation}, \ref{app:rendering} \\
\rowcolor{TableBand}
$D,\;\bot$ & Decoder; abstention & Section~\ref{sec:recovery} \\
$\varepsilon$ & Message-error level; completeness-failure level & Section~\ref{sec:recovery}, Questions~\ref{q:synchronization}, \ref{q:credentials} \\
\rowcolor{TableBand}
$\alpha$ & False-alarm level; in Question~\ref{q:hybrid}, bound on accepting relation-violating negatives & Section~\ref{sec:recovery}, Appendix~\ref{app:temporal}, Questions~\ref{q:synchronization}, \ref{q:hybrid} \\
$\delta$ & False-acceptance level & Question~\ref{q:credentials} \\
\rowcolor{TableBand}
$\kappa$ & Minimum coverage or acceptance level & Questions~\ref{q:identifiable}, \ref{q:hybrid} \\
$\eta,\;r$ & Alignment error; span of the retained original-time window, in frame intervals & Question~\ref{q:synchronization} \\
\rowcolor{TableBand}
$\beta,\;b$ & Bound on average recovery; coefficient-substitution bound & Question~\ref{q:reconstruction} \\
$g,\;\sigma,\;R_{\mathrm{asset}}$ & Reference asset; object authenticating the composition masks; retained-asset relation & Question~\ref{q:hybrid} \\
\rowcolor{TableBand}
$v$ & A node of $\mathcal G$; elsewhere, a video & Section~\ref{sec:formal-contract}, Question~\ref{q:hybrid} \\
$P,\;R,\;z$ & Program, renderer, environment & Appendices~\ref{app:observation}, \ref{app:rendering} \\
\rowcolor{TableBand}
$E,\;W_k,\;E_k$ & Encoder; keyed program and neural-state encoders & Question~\ref{q:reconstruction}, Appendices~\ref{app:observation}, \ref{app:rendering} \\
$\Lambda,\;F$ & Neural state; deterministic decoder of a neural state to pixels & Appendix~\ref{app:rendering} \\
\rowcolor{TableBand}
$\Omega,\;f_0,\;f_1$ & Finite observation alphabet; mass functions of the two experiments & Appendices~\ref{app:observation}, \ref{app:rendering} \\
$\TV$ & Total variation distance & Appendix~\ref{app:observation} \\
\rowcolor{TableBand}
$T,\;q$ & Frame count; quantization levels per channel & Section~\ref{sec:agenda}, Appendices~\ref{app:rendering}, \ref{app:temporal} \\
$V(s),\;\mathcal V_U(s)$ & Ordinary output video of state $s$; task-allowed outputs of $s$ & Appendix~\ref{app:rendering}, Question~\ref{q:capacity} \\
\rowcolor{TableBand}
$\mathcal T$ & Edit class: deterministic output transformations & Appendix~\ref{app:rendering} \\
$\varphi,\;\lambda$ & Deletion map; period of a carrier & Appendix~\ref{app:temporal} \\
\rowcolor{TableBand}
$\Prb_0,\;p_\ell,\;L$ & Null law (for example, one member of $\mathcal Q_0$ with the keys fixed); $\ell$th test value; size of the searched family & Appendix~\ref{app:temporal} \\
\bottomrule
\end{tabularx}
\endgroup
\caption{Principal notation. Symbols local to one proof or example are defined where they occur and omitted here. Some letters, such as $b$, $d$, $x$, $V$, $W$, $X$, $Y$, $\mu$, and $K$, also have local meanings there.}
\label{tab:notation}
\end{table}

\section{Scoped Mechanism Map}\label{app:mechanism}

The map separates an intervention location from the verifier's observation. Entries identify precedents, not interchangeable security guarantees. The map includes abstract-screened methods to locate a mechanism. We did not independently audit their complete threat models and proofs.

\begin{table}[ht]
\centering\small
\begingroup
\renewcommand{\arraystretch}{1.12}
\setlength{\extrarowheight}{1pt}
\arrayrulecolor{black}
\begin{tabularx}{\linewidth}{>{\raggedright\arraybackslash\sffamily\bfseries}p{0.21\linewidth}>{\raggedright\arraybackslash\hsize=0.85\hsize}X>{\raggedright\arraybackslash\hsize=1.15\hsize}X}
\toprule
\rowcolor{TableHead}
Family & \textbf{\sffamily Examples} & \textbf{\sffamily Observation boundary} \\
\midrule
Video-asset marks & RivaGAN, DVMark, ItoV, Video Seal \cite{rivagan2019,dvmark2021,itov2023,videoseal2024} & Marked frames are the received signal; this does not establish source authorship. \\
\rowcolor{TableBand}
Video noise interventions & VideoShield, VideoMark \cite{videoshield2025,videomark2025} & Corresponding inversion paths and keys are auxiliary access. \\
Video decoder / model interventions & LVMark, Video Signature, SPDMark \cite{lvmark2024,vidsig2025,spdmark2025} & Learned emission changes differ from marking source tokens. \\
\rowcolor{TableBand}
Graphical video payload & Safe-Sora \cite{safesora2025} & A graphical message is the payload; abstract screening alone does not fix every embedding-stage detail. \\
Stage inheritance & LoT-Pass \cite{lotpass2025} & A marked input image passes through image-to-video generation; persistence needs its own channel model. \\
\rowcolor{TableBand}
Localized / image neighbors & TrustMark, Watermark Anything \cite{trustmark2023,wam2024} & Raster-image recovery and localization precede additional temporal questions. \\
Source-code marks & SWEET, ACW, SrcMarker, AST-guided marking, STONE \cite{lee2024sweet,li2024acw,yang2024srcmarker,guo2025astwatermark,kim2026stone} & The verifier receives code; raster recovery is a separate claim. \\
\rowcolor{TableBand}
Rendering-aware representations & CopyRNeRF, WateRF, GaussianMarker, 3D-GSW \cite{luo2023copyrnerf,jang2024waterf,huang2024gaussianmarker,jang2025gsw} & Representation changes deliberately reach rendered images. \\
Temporal / evaluation boundaries & DTCWT-SVD, VideoMarkBench \cite{dtcwt2022,videomarkbench2025} & Temporal processing and tested attack families require explicit scope. \\
\bottomrule
\end{tabularx}
\endgroup
\caption{Mechanism map for the compared production workflows, with an observation boundary for each family; the accompanying notes record exact versions and reading depth.}
\label{tab:mechanism-map}
\end{table}

Recent video-watermark variants include SIGMark, FlowMark, and COVER \cite{sigmark2026,flowmark2026,cover2026}. We retain their primary abstracts in the source notes. We import no unreviewed security claims from them.

Adjacent work concerns other tasks. Passive synthetic-video detection is a different task: GenVidBench and UNITE study specified detection populations \cite{genvidbench2025,unite2025}. Those populations do not automatically define inference about LLM-authored drawings or programs. Likewise, dataset-use marking in CodeMark and semantic text marking in SemaMark address different artifacts and observations \cite{sun2023codemark,ren2024semamark}. These distinctions keep adjacent work useful without adopting its success criterion as ours.


\begin{thebibliography}{99}
\small\raggedright
\bibitem{ho2022} Jonathan Ho et al. \emph{Video Diffusion Models}. 2022. arXiv:2204.03458. \url{https://arxiv.org/abs/2204.03458}.

\bibitem{rombach2022ldm} Robin Rombach et al. \emph{High-Resolution Image Synthesis with Latent Diffusion Models}. CVPR 2022. arXiv:2112.10752. \url{https://arxiv.org/abs/2112.10752v2}.

\bibitem{claudeCodeOverview} Anthropic. \emph{Overview -- Claude Code Docs}. 2026. Official documentation/source, accessed 4 October 2026. \url{https://code.claude.com/docs/en/overview}.

\bibitem{remotionCodingAgents} Remotion AG. \emph{Prompting videos with coding agents}. 2026. Official documentation/source, accessed 4 October 2026. \url{https://www.remotion.dev/docs/ai/coding-agents}.

\bibitem{openaiCodeInterpreter} OpenAI. \emph{Code Interpreter}. 2026. Official documentation/source, accessed 4 October 2026. \url{https://developers.openai.com/api/docs/guides/tools-code-interpreter}.

\bibitem{openaiImageGeneration} OpenAI. \emph{Image generation tool}. 2026. Official documentation/source, accessed 4 October 2026. \url{https://developers.openai.com/api/docs/guides/tools-image-generation}.

\bibitem{claudeCustomVisuals} Anthropic. \emph{Custom visuals in chat and Cowork}. 2026. Official documentation/source, accessed 4 October 2026. \url{https://support.claude.com/en/articles/13979539-custom-visuals-in-chat-and-cowork}.

\bibitem{remotionPromptToVideo} Remotion AG. \emph{Prompt to Video -- Remotion Template}. 2026. Official documentation/source, accessed 4 October 2026. \url{https://www.remotion.dev/templates/prompt-to-video}.

\bibitem{videoseal2024} Pierre Fernandez, Hady Elsahar, I. Zeki Yalniz, Alexandre Mourachko. \emph{Video Seal: Open and Efficient Video Watermarking}. 2024. arXiv:2412.09492. \url{https://arxiv.org/abs/2412.09492v1}.

\bibitem{yang2024srcmarker} Borui Yang, Wei Li, Liyao Xiang, Bo Li. \emph{SrcMarker: Dual-Channel Source Code Watermarking via Scalable Code Transformations}. 2024. 2024 IEEE Symposium on Security and Privacy (SP). \url{https://ieeexplore.ieee.org/document/10646683/}.

\bibitem{luo2023copyrnerf} Ziyuan Luo et al. \emph{CopyRNeRF: Protecting the CopyRight of Neural Radiance Fields}. 2023. arXiv:2307.11526. \url{https://openaccess.thecvf.com/content/ICCV2023/papers/Luo_CopyRNeRF_Protecting_the_CopyRight_of_Neural_Radiance_Fields_ICCV_2023_paper.pdf}.

\bibitem{c2pa} Coalition for Content Provenance and Authenticity. \emph{Content Credentials: {C2PA} Technical Specification}. 2025. Official documentation/source, accessed 4 October 2026. \url{https://spec.c2pa.org/specifications/specifications/2.2/specs/C2PA_Specification.html}.

\bibitem{remotionSSR} Remotion AG. \emph{Rendering using SSR APIs}. 2026. Official documentation/source, accessed 4 October 2026. \url{https://www.remotion.dev/docs/ssr-node}.

\bibitem{remotionClientRender} Remotion AG. \emph{Client-side rendering}. 2026. Official documentation/source, accessed 4 October 2026. \url{https://www.remotion.dev/docs/client-side-rendering}.

\bibitem{remotionLambda} Remotion AG. \emph{Remotion Lambda}. 2026. Official documentation/source, accessed 4 October 2026. \url{https://www.remotion.dev/docs/lambda}.

\bibitem{tewari2020neuralrendering} Ayush Tewari et al. \emph{State of the Art on Neural Rendering}. 2020. arXiv:2004.03805. \url{https://arxiv.org/abs/2004.03805v1}.

\bibitem{w3cPROV} W3C. \emph{PROV-DM: The PROV Data Model}. W3C Recommendation, 30 April 2013. Editors: Luc Moreau and Paolo Missier. \url{https://www.w3.org/TR/2013/REC-prov-dm-20130430/}.

\bibitem{li2026watermarkforensics} Xiaoyu Li et al. \emph{Watermark Forensics for Generative Models: An Information-Theoretic Perspective}. 2026. arXiv:2607.13003. \url{https://arxiv.org/abs/2607.13003v1}.

\bibitem{hidden2018} Jiren Zhu, Russell Kaplan, Justin Johnson, Li Fei-Fei. \emph{HiDDeN: Hiding Data With Deep Networks}. 2018. arXiv:1807.09937. \url{https://arxiv.org/abs/1807.09937v1}.

\bibitem{stegastamp2019} Matthew Tancik, Ben Mildenhall, Ren Ng. \emph{StegaStamp: Invisible Hyperlinks in Physical Photographs}. 2019. arXiv:1904.05343. \url{https://arxiv.org/abs/1904.05343v2}.

\bibitem{trustmark2023} Tu Bui, Shruti Agarwal, John Collomosse. \emph{TrustMark: Universal Watermarking for Arbitrary Resolution Images}. 2023. arXiv:2311.18297. \url{https://arxiv.org/abs/2311.18297v1}.

\bibitem{wam2024} Tom Sander et al. \emph{Watermark Anything with Localized Messages}. 2024. arXiv:2411.07231. \url{https://arxiv.org/abs/2411.07231v2}.

\bibitem{rivagan2019} Kevin Alex Zhang, Lei Xu, Alfredo Cuesta-Infante, Kalyan Veeramachaneni. \emph{Robust Invisible Video Watermarking with Attention}. 2019. arXiv:1909.01285. \url{https://arxiv.org/abs/1909.01285v1}.

\bibitem{dvmark2021} Xiyang Luo et al. \emph{DVMark: A Deep Multiscale Framework for Video Watermarking}. 2021. arXiv:2104.12734. \url{https://arxiv.org/abs/2104.12734v1}.

\bibitem{itov2023} Guanhui Ye et al. \emph{ItoV: Efficiently Adapting Deep Learning-based Image Watermarking to Video Watermarking}. 2023. arXiv:2305.02781. \url{https://arxiv.org/abs/2305.02781v1}.

\bibitem{dtcwt2022} Yifei Wang et al. \emph{A DTCWT-SVD Based Video Watermarking resistant to frame rate conversion}. 2022. arXiv:2206.01094. \url{https://arxiv.org/abs/2206.01094v1}.

\bibitem{flowmark2026} Vishal Asnani, Shruti Agarwal, John Collomosse. \emph{FlowMark: Mask-Guided Video Watermarking}. 2026. arXiv:2607.05261. \url{https://arxiv.org/abs/2607.05261v1}.

\bibitem{treering2023} Yuxin Wen, John Kirchenbauer, Jonas Geiping, Tom Goldstein. \emph{Tree-Ring Watermarks: Fingerprints for Diffusion Images that are Invisible and Robust}. 2023. arXiv:2305.20030. \url{https://arxiv.org/abs/2305.20030v3}.

\bibitem{gaussianshading2024} Zijin Yang et al. \emph{Gaussian Shading: Provable Performance-Lossless Image Watermarking for Diffusion Models}. 2024. arXiv:2404.04956. \url{https://arxiv.org/abs/2404.04956v3}.

\bibitem{gao2026slice} Zheng Gao et al. \emph{SLICE: Semantic Latent Injection via Compartmentalized Embedding for Image Watermarking}. 2026. arXiv:2603.12749. \url{https://arxiv.org/abs/2603.12749v1}.

\bibitem{videoshield2025} Runyi Hu et al. \emph{VideoShield: Regulating Diffusion-based Video Generation Models via Watermarking}. 2025. arXiv:2501.14195. \url{https://arxiv.org/abs/2501.14195v2}.

\bibitem{videomark2025} Xuming Hu et al. \emph{VideoMark: A Distortion-Free Robust Watermarking Framework for Video Diffusion Models}. 2025. arXiv:2504.16359. \url{https://arxiv.org/abs/2504.16359v3}.

\bibitem{sigmark2026} Xinjie Zhu et al. \emph{SIGMark: Scalable In-Generation Watermark with Blind Extraction for Video Diffusion}. 2026. arXiv:2603.02882. \url{https://arxiv.org/abs/2603.02882v1}.

\bibitem{stablesignature2023} Pierre Fernandez et al. \emph{The Stable Signature: Rooting Watermarks in Latent Diffusion Models}. 2023. arXiv:2303.15435. \url{https://arxiv.org/abs/2303.15435v2}.

\bibitem{lvmark2024} Youngdong Jang et al. \emph{LVMark: Robust Watermark for Latent Video Diffusion Models}. 2024. arXiv:2412.09122. \url{https://arxiv.org/abs/2412.09122v4}.

\bibitem{vidsig2025} Yu Huang et al. \emph{Video Signature: Implicit Watermarking for Video Diffusion Models}. 2025. arXiv:2506.00652. \url{https://arxiv.org/abs/2506.00652v4}.

\bibitem{spdmark2025} Samar Fares, Nurbek Tastan, Karthik Nandakumar. \emph{SPDMark: Selective Parameter Displacement for Robust Video Watermarking}. 2025. arXiv:2512.12090. \url{https://arxiv.org/abs/2512.12090v2}.

\bibitem{lee2024sweet} Taehyun Lee et al. \emph{Who Wrote this Code? Watermarking for Code Generation}. 2024. Proceedings of the 62nd Annual Meeting of the Association for Computational Linguistics (Volume 1: Long Papers). \url{https://aclanthology.org/2024.acl-long.268/}.

\bibitem{kim2026stone} Jungin Kim, Shinwoo Park, Yo-Sub Han. \emph{Marking Code Without Breaking It: Code Watermarking for Detecting LLM-Generated Code}. 2026. Findings of the Association for Computational Linguistics: EACL 2026. \url{https://aclanthology.org/2026.findings-eacl.207/}.

\bibitem{guo2025astwatermark} Zhimeng Guo, Minhao Cheng. \emph{Practical and Effective Code Watermarking for Large Language Models}. 2025. Advances in Neural Information Processing Systems. \url{https://papers.neurips.cc/paper_files/paper/2025/hash/6b4cb812f234a92ab757d1544912b4a8-Abstract-Conference.html}.

\bibitem{li2024acw} Boquan Li et al. \emph{Resilient Watermarking for AI-Generated Codes}. 2024. arXiv:2402.07518. \url{https://arxiv.org/pdf/2402.07518v2}.

\bibitem{jang2024waterf} Youngdong Jang et al. \emph{WateRF: Robust Watermarks in Radiance Fields for Protection of Copyrights}. 2024. arXiv:2405.02066. \url{https://openaccess.thecvf.com/content/CVPR2024/html/Jang_WateRF_Robust_Watermarks_in_Radiance_Fields_for_Protection_of_Copyrights_CVPR_2024_paper.html}.

\bibitem{huang2024gaussianmarker} Xiufeng Huang et al. \emph{GaussianMarker: Uncertainty-Aware Copyright Protection of 3D Gaussian Splatting}. 2024. arXiv:2410.23718. \url{https://papers.nips.cc/paper_files/paper/2024/hash/39cee562b91611c16ac0b100f0bc1ea1-Abstract-Conference.html}.

\bibitem{jang2025gsw} Youngdong Jang et al. \emph{3D-GSW: 3D Gaussian Splatting for Robust Watermarking}. 2025. arXiv:2409.13222. \url{https://openaccess.thecvf.com/content/CVPR2025/papers/Jang_3D-GSW_3D_Gaussian_Splatting_for_Robust_Watermarking_CVPR_2025_paper.pdf}.

\bibitem{lotpass2025} Guanjie Wang, Zehua Ma, Han Fang, Weiming Zhang. \emph{LoT-Pass: Long-term-robust Image Watermarking for Image to Video Generation}. 2025. arXiv:2509.17773. \url{https://arxiv.org/abs/2509.17773v3}.

\bibitem{remotionRendererMedia} Remotion AG. \emph{Remotion Renderer -- render-media.ts}. 2026. Official documentation/source, accessed 4 October 2026. \url{https://github.com/remotion-dev/remotion/blob/e385a83dbde54179c0457ad90b7d7c3a4b6ab44a/packages/renderer/src/render-media.ts}.

\bibitem{remotionRendererPerFrame} Remotion AG. \emph{Remotion Renderer -- render-frame-with-option-to-reject.ts}. 2026. Official documentation/source, accessed 4 October 2026. \url{https://github.com/remotion-dev/remotion/blob/e385a83dbde54179c0457ad90b7d7c3a4b6ab44a/packages/renderer/src/render-frame-with-option-to-reject.ts}.

\bibitem{remotionRendererCapture} Remotion AG. \emph{Remotion Renderer -- take-frame.ts}. 2026. Official documentation/source, accessed 4 October 2026. \url{https://github.com/remotion-dev/remotion/blob/e385a83dbde54179c0457ad90b7d7c3a4b6ab44a/packages/renderer/src/take-frame.ts}.

\bibitem{videomarkbench2025} Zhengyuan Jiang et al. \emph{VideoMarkBench: Benchmarking Robustness of Video Watermarking}. 2025. arXiv:2505.21620. \url{https://arxiv.org/abs/2505.21620v1}.

\bibitem{googleVeo31Example} Google AI for Developers. \emph{Generate videos with Veo 3.1 in Gemini API}. Official documentation, updated 17 September 2026; accessed 5 October 2026. \url{https://ai.google.dev/gemini-api/docs/veo?hl=en#extend-prompt}.

\bibitem{gao2026StillLoading} Zheng Gao. \emph{I created a music video ``still loading\textquotesingle\textquotesingle} [English translation of the Chinese title]. Bilibili, 1 October 2026; accessed 5 October 2026. \url{https://www.bilibili.com/video/BV15qYA6nE6u/}.

\bibitem{remotionTravelGallery} Remotion AG. \emph{Travel Route on Map with 3D Landmarks}. Remotion Prompts, credited to @JNYBGR; Claude Code / Opus 4.5. Accessed 5 October 2026. \url{https://www.remotion.dev/prompts/travel-route-on-map-with-3d-landmarks}.

\bibitem{sanjaya2026GamesVideo} Dilum Sanjaya (@DilumSanjaya). [Video post: top 20 games by copies sold]. X, 3 February 2026; accessed 5 October 2026. Descriptive title. \url{https://x.com/DilumSanjaya/status/2018367621381620142}.

\bibitem{remotionGamesGallery} Remotion AG. \emph{Three.js ``Top 20 Games Sold\textquotesingle\textquotesingle\ Ranking}. Remotion Prompts, credited to @DilumSanjaya; Claude Code. Accessed 5 October 2026. \url{https://www.remotion.dev/prompts/threejs-top-20-games-sold-ranking-1}.

\bibitem{zeng2026lava} Bokang Zeng et al. \emph{LAVA: Layered Audio-Visual Anti-tampering Watermarking for Robust Deepfake Detection and Localization}. 2026. arXiv:2604.23957. \url{https://arxiv.org/abs/2604.23957v2}.

\bibitem{genvidbench2025} Zhenliang Ni et al. \emph{GenVidBench: A 6-Million Benchmark for AI-Generated Video Detection}. 2025. arXiv:2501.11340. \url{https://arxiv.org/abs/2501.11340v3}.

\bibitem{unite2025} Rohit Kundu et al. \emph{Towards a Universal Synthetic Video Detector: From Face or Background Manipulations to Fully AI-Generated Content}. 2025. Proceedings of the IEEE/CVF Conference on Computer Vision and Pattern Recognition. \url{https://openaccess.thecvf.com/content/CVPR2025/html/Kundu_Towards_a_Universal_Synthetic_Video_Detector_From_Face_or_Background_CVPR_2025_paper.html}.

\bibitem{matplotlibSavefig} Matplotlib development team. \emph{matplotlib.pyplot.savefig}. 2026. Official documentation/source, accessed 4 October 2026. \url{https://matplotlib.org/stable/api/_as_gen/matplotlib.pyplot.savefig.html}.

\bibitem{w3cSVGRender} W3C. \emph{Rendering Model -- SVG 2}. 2026. Official documentation/source, accessed 4 October 2026. \url{https://www.w3.org/TR/SVG/render.html}.

\bibitem{gao2026trace} Zheng Gao et al. \emph{TRACE: A Two-Channel Robust Attribution Watermark via Complementary Embeddings for LLM-Agent Trajectories}. 2026. arXiv:2607.08400. \url{https://arxiv.org/abs/2607.08400v1}.

\bibitem{remotionCodeGeneration} Remotion AG. \emph{Generate Remotion Code using LLMs}. 2026. Official documentation/source, accessed 4 October 2026. \url{https://www.remotion.dev/docs/ai/generate}.

\bibitem{sanjaya2026GamesWorkflow} Dilum Sanjaya (@DilumSanjaya). [Workflow: Claude Code and Remotion Skills]. X thread post, 3 February 2026; accessed 5 October 2026. Descriptive title. \url{https://x.com/DilumSanjaya/status/2018367624019911007}.

\bibitem{sanjaya2026GamesAssets} Dilum Sanjaya (@DilumSanjaya). [Imported images and a Sketchfab background]. X thread post, 3 February 2026; accessed 5 October 2026. Descriptive title. \url{https://x.com/DilumSanjaya/status/2018367629141086242}.

\bibitem{remotionRenderMediaAPI} Remotion AG. \emph{renderMedia() API reference}. 2026. Official documentation/source, accessed 4 October 2026. \url{https://www.remotion.dev/docs/renderer/render-media}.

\bibitem{remotionRendererFrames} Remotion AG. \emph{Remotion Renderer -- render-frames.ts}. 2026. Official documentation/source, accessed 4 October 2026. \url{https://github.com/remotion-dev/remotion/blob/e385a83dbde54179c0457ad90b7d7c3a4b6ab44a/packages/renderer/src/render-frames.ts}.

\bibitem{claudeArtifactsHelp} Anthropic. \emph{What are artifacts and how do I use them?}. 2026. Official documentation/source, accessed 4 October 2026. \url{https://support.claude.com/en/articles/17153992-what-are-artifacts-and-how-do-i-use-them}.

\bibitem{remotionSkillsMarkup} Remotion AG. \emph{Remotion Markup Agent Skill}. 2026. Official documentation/source, accessed 4 October 2026. \url{https://github.com/remotion-dev/skills/blob/0b5db9daae40f42c73544d1cc0a8c733bd530eaa/skills/remotion-markup/SKILL.md}.

\bibitem{claudeMarking} Anthropic. \emph{How Claude marks AI-generated content}. 2026. Official documentation/source, accessed 4 October 2026. \url{https://support.claude.com/en/articles/16266773-how-claude-marks-ai-generated-content}.

\bibitem{chen2001qim} Brian Chen and Gregory W. Wornell. \emph{Quantization Index Modulation: A Class of Provably Good Methods for Digital Watermarking and Information Embedding}. IEEE Transactions on Information Theory, 47(4):1423--1443, 2001. \url{https://dsp-group.mit.edu/wp-content/uploads/2024/11/QuantizationIndexMod.pdf}.

\bibitem{bao2026shift} Rui Bao et al. \emph{SHIFT: Stochastic Hidden-Trajectory Deflection for Removing Diffusion-based Watermark}. 2026. arXiv:2603.29742. \url{https://arxiv.org/abs/2603.29742v2}.

\bibitem{gao2026csi} Zheng Gao et al. \emph{Breaking Semantic-Aware Watermarks via LLM-Guided Coherence-Preserving Semantic Injection}. 2026. arXiv:2602.21593. \url{https://arxiv.org/abs/2602.21593v1}.

\bibitem{christ2023} Miranda Christ, Sam Gunn, Or Zamir. \emph{Undetectable Watermarks for Language Models}. 2023. arXiv:2306.09194. \url{https://arxiv.org/abs/2306.09194}.

\bibitem{lin2026signatures} Huijia Lin, Kameron Shahabi, Min Jae Song. \emph{Unforgeable Watermarks for Language Models via Robust Signatures}. 2026. arXiv:2602.15323. \url{https://arxiv.org/abs/2602.15323}.

\bibitem{zhang2023sand} Hanlin Zhang et al. \emph{Watermarks in the Sand: Impossibility of Strong Watermarking for Generative Models}. 2023. arXiv:2311.04378. \url{https://arxiv.org/abs/2311.04378}.

\bibitem{safesora2025} Zihan Su et al. \emph{Safe-Sora: Safe Text-to-Video Generation via Graphical Watermarking}. 2025. arXiv:2505.12667. \url{https://arxiv.org/abs/2505.12667v2}.

\bibitem{cover2026} Yuxin Cao et al. \emph{COVER: Codec-Robust Video Watermarking with Generative Video Priors}. 2026. arXiv:2609.26236. \url{https://arxiv.org/abs/2609.26236v1}.

\bibitem{sun2023codemark} Zhensu Sun, Xiaoning Du, Fu Song, Li Li. \emph{CodeMark: Imperceptible Watermarking for Code Datasets against Neural Code Completion Models}. 2023. arXiv:2308.14401. \url{https://arxiv.org/abs/2308.14401}.

\bibitem{ren2024semamark} Jie Ren et al. \emph{A Robust Semantics-based Watermark for Large Language Model against Paraphrasing}. 2024. Findings of the Association for Computational Linguistics: NAACL 2024. \url{https://aclanthology.org/2024.findings-naacl.40/}.

\bibitem{lamport1995write} Leslie Lamport. \emph{How to Write a Proof}. The American Mathematical Monthly 102(7):600--608, 1995. doi:10.1080/00029890.1995.12004627. \url{https://doi.org/10.1080/00029890.1995.12004627}.

\bibitem{lamport2012write} Leslie Lamport. \emph{How to Write a 21st Century Proof}. Journal of Fixed Point Theory and Applications 11(1):43--63, 2012. doi:10.1007/s11784-012-0071-6. \url{https://doi.org/10.1007/s11784-012-0071-6}.
\end{thebibliography}
\end{document}